\title{On the Expressive Power of Regular Expressions with Backreferences}
\author{Taisei Nogami}{Waseda University, Tokyo, Japan}{sora410@fuji.waseda.jp}{}{}
\author{Tachio Terauchi}{Waseda University, Tokyo, Japan \and \url{https://www.f.waseda.jp/terauchi/}}{terauchi@waseda.jp}{https://orcid.org/0000-0001-5305-4916}{}
\authorrunning{T. Nogami and T. Terauchi}
\keywords{Regular expressions, Backreferences, Expressive power} %TODO mandatory; please add comma-separated list of keywords
\definecolor{mydeepred}{HTML}{c51b1d}
\theoremstyle{definition}
\newtheorem{definition2}[theorem]{Definition}
\newcommand{\bs}{\backslash}
\newcommand{\suuretu}[1]{\ensuremath{\left\{#1\right\}}}%
\def\syuugou{\@ifnextchar[{\@syuugou}{\suuretu}}
\def\@syuugou[#1]#2{\ensuremath{\left\{\left.#1\,
  \@ifundefined{hakobanpush}{}{\hakobanpush}%
  \vphantom{#2}\@ifundefined{hakobanpush}{}{\hakobanpop}%
  \right |\,%
  #2\right\}}}
\newcommand{\zettaiti}[1]{\left|#1\right|}
\renewcommand{\uuline}[1]{\underline{\underline{#1}}}
\newcommand{\slice}[1]{[#1]}
\DeclareMathOperator{\varf}{var}
\newcommand{\rewb}{\textsl{REWB}}
\newcommand{\refwords}{\Sigma ^ {[\ast]}}
\DeclareMathOperator{\cnt}{cnt}
\newcommand{\powerset}[1]{\mathcal{P}(#1)}
\newcommand{\mydollar}{\textup{\textdollaroldstyle}}
\newcommand{\mycent}{\textup{\textcentoldstyle}}
\newcommand{\myleft}{\mathrm{L}}
\newcommand{\myright}{\mathrm{R}}
\newcommand{\mystop}{\mathrm{S}}
\newcommand{\myid}[1]{\vdash_{#1}}
\newcommand{\myspl}{\!\upharpoonleft}
\newcommand{\myspr}{\upharpoonright}
\newcommand{\mydtoz}[1]{\overline{#1}}
\newcommand{\myonlymove}[1]{\models_{#1}}
\newcommand{\mynat}{\mathbb{N}}
\newcommand{\mydisjointu}{\uplus}
\newcommand{\symcall}{c}
\newcommand{\symexec}{e}
\newcommand{\symret}{r}
\newcommand{\reflang}{\mathcal{R}}
\newcommand{\deref}{\mathcal{D}}
\newcommand{\dir}{\Delta}
\newcommand{\maru}[1]{\raise0.16ex\hbox{\textcircled{\scriptsize{#1}}}}
\newcommand{\mylipicsenumitem}[1]{\text{\textcolor{lipicsGray}{\sffamily\bfseries\upshape{(#1)}}}}
\newcommand{\myto}[2]{\underset{#2}{\overset{#1}{\longrightarrow}}}
\newcommand{\mytto}[2]{\underset{#2}{\overset{#1}{\Longrightarrow}}}
\newcommand{\figto}[2]{{#1}\!\to\!{#2}}
\newcommand{\stkout}[1]{\ifmmode\text{\sout{\ensuremath{#1}}}\else\sout{#1}\fi}
\newcommand{\fullversion}[2]{{#2}}
\newcommand{\phantomcomment}[1]{}
\newcommand\niton{\mathrel{\m@th\mathpalette\canc@l\owns}}
\newcommand\canc@l[2]{{\ooalign{$\hfil#1/\mkern1mu\hfil$\crcr$#1#2$}}}
 \renewcommand{\fullversion}[2]{{#1}}
\begin{document}

\maketitle

%TODO mandatory: add short abstract of the document
\begin{abstract}
    A \emph{rewb} is a regular expression extended with a feature called \emph{backreference}. It is broadly known that backreference is a practical extension of regular expressions, and is supported by most modern regular expression engines, such as those in the standard libraries of Java, Python, and more. Meanwhile, \emph{indexed languages} are the languages generated by indexed grammars, a formal grammar class proposed by A.V.Aho. We show that these two models' expressive powers are related in the following way: every language described by a rewb is an indexed language. As the smallest formal grammar class previously known to contain rewbs is the class of context sensitive languages, our result strictly improves the known upper-bound. Moreover, we prove the following two claims: there exists a rewb whose language does not belong to the class of stack languages, which is a proper subclass of indexed languages, and the language described by a rewb without a captured reference is in the class of nonerasing stack languages, which is a proper subclass of stack languages. Finally, we show that the hierarchy investigated in a prior study, which separates the expressive power of rewbs by the notion of nested levels, is within the class of nonerasing stack languages.
\end{abstract}
 
\section{Introduction}
\label{sec:intro}

A \emph{rewb} is a regular expression empowered with a certain extension, called \emph{backreference}, that allows preceding substrings to be used later. It is closer to practical regular expressions than the pure ones, and supported by the standard libraries of most modern programming languages.
 A typical example of a rewb follows:
	\begin{example}
		Let $\Sigma$ be the alphabet $\syuugou{a,b}$. The language $L(\alpha)$ described by the rewb $\alpha = (_1 (a + b) ^ \ast)_1\,\bs 1$ is $\syuugou[ww]{w \in \Sigma^\ast}$.
        Intuitively, $\alpha$ first \emph{captures} a preceding string $w \in L((a+b)^\ast)$ by $(_1\,)_1$, and second \emph{references} that $w$ by following $\bs 1$. Therefore, $\alpha$ matches $ww$. Because this $L(\alpha)$ is a textbook example of a non-context-free language (and therefore non-regular), the expressive power of rewbs exceeds that of the pure ones.
	\end{example}
	
    In 1968, A.V.Aho discovered indexed languages with characterizations by two equivalent models: indexed grammars and (one-way\footnote{``One-way'' means that the input cursor will not move back to left. The antonym is ``two-way.''} nondeterministic, or 1N)\,nested stack automata\,(NSA)~\cite{aho1968indexed, aho1969nested}. The class of indexed languages is a proper superclass of context free languages\,(CFL), and a proper subclass of context sensitive languages\,(CSL)~\cite{aho1968indexed}.
	
	Berglund and van der Merwe~\cite{berglund2023re}, and C{\^a}mpeanu et al.~\cite{campeanu2003formal} have shown that the class of rewbs is incomparable with the class of CFLs and is a proper subclass of CSLs. As the first main contribution of this paper, we prove that the language described by a rewb is an indexed language. Since the class of CSLs was the previously known best upper-bound of rewbs, our result gives a novel and strictly tighter upper-bound.
		
    Meanwhile, there is a class of the languages called stack languages~\cite{ginsburg1967stack, ginsburg1967one}. This class corresponds to the model (1N)\,stack automata\,(SA), a restriction of NSA. Hence, it trivially follows that the class of stack languages is a subclass of indexed languages. Actually, this containment is known to be proper~\cite{aho1969nested}.
	Furthermore, a model called nonerasing stack automata\,(NESA) has been studied in papers such as \cite{ginsburg1967stack,hopcroft1967nonerasing,ogden1969intercalation}, and its language class is known to be a proper subclass of stack languages~\cite{ogden1969intercalation}.

    In this paper, we show that every rewb without a captured reference (that is, one in which no reference $\bs i$ appears as a subexpression of an expression of the form $(_j \alpha )_j$) describes a nonerasing stack language. Given our result, the following question is natural: does every rewb describe a (nonerasing) stack language? We show that the answer is no. Namely, we show a rewb that describes a non-stack language.
    Finally, Larsen~\cite{larsen1998regular} has proposed a notion called \emph{nested levels} of a rewb and showed that they give rise to a concrete increasing hierarchy of expressive powers of rewbs by exhibiting, for each nested level $i \in \mynat$, a language $L_i$ that is expressible by a rewb at level $i$ but not at any levels below $i$. We show that this hierarchy is within the class of nonerasing stack languages, that is, there exists an NESA $A_i$ recognizing $L_i$ for every nested level $i$.
Below, we summarize the main contributions of the paper.
	\begin{alphaenumerate}
		\item Every rewb describes an indexed language.\,(Section~\ref{sec:rewbisig}, Corollary~\ref{cor:rewbisig})
        \item Every rewb without a captured reference describes a nonerasing stack language.\\
		(Section~\ref{sec:rewbisig}, Corollary~\ref{cor:refcap})
        \item There exists a rewb that describes a non-stack language.\,(Section~\ref{sec:rewbisnotsl}, Theorem~\ref{thm:antisl})
		\item The hierarchy given by Larsen~\cite{larsen1998regular} is within the class of nonerasing stack languages.\\(Section~\ref{sec:larsen98isnonerasingsl}, Theorem~\ref{thm:hier})
	\end{alphaenumerate}
Note that by (b) and (c), it follows that there is a rewb that \emph{needs} capturing of references (Section~\ref{sec:rewbisnotsl}, Corollary~\ref{cor:needrefcap}). See also Figure~\ref{fig:langrels} for a summary of the results.

The rest of the paper is organized as follows. Section~\ref{sec:related} discusses related work. Section~\ref{sec:prelim} defines preliminary notions used in the paper such as the syntax and semantics of rewb, SA, NESA, and NSA. Sections~\ref{sec:rewbisig}, \ref{sec:rewbisnotsl}, and \ref{sec:larsen98isnonerasingsl} formally state and prove the paper's main contributions listed above. Section~\ref{sec:conc} concludes the paper with a discussion on future work. For space, the proofs are in \fullversion{Appendix}{the full paper~\cite{nogami2023expressive}}.
 \section{Related Work}
\label{sec:related}

First, we discuss related work on rewbs. There are several variants of the syntax and semantics of rewbs since they first appeared in the seminal work by Aho~\cite{10.5555/114872.114877}. A recent study by Berglund and van der Merwe~\cite{berglund2023re} summarizes the variants and the relations between them. In sum, there are two variants of the syntax, whether or not a same label may appear as the index of more than one capture (``may repeat labels'', ``no label repetitions''), and two variants of the semantics, whether an unbound reference is interpreted as the empty string or an undefined factor ($\varepsilon\text{-semantics}, \emptyset\text{-semantics}$). As shown in \cite{berglund2023re}, there is no difference in the expressive powers between these two semantics under the ``may repeat labels'' syntax (therefore, there are three classes with different expressive powers, namely ``no label repetitions'' with $\emptyset\text{-semantics}$, ``no label repetitions'' with $\varepsilon\text{-semantics}$, and ``may repeat labels''). In this paper, we focus on the ``may repeat labels'' formalization, which has the highest expressive power of the three and is often studied in formal language theory. We adopt the $\varepsilon\text{-semantics}$ as the semantics of rewbs. Note that the pioneering formalization of rewbs given by Aho~\cite{10.5555/114872.114877} has the equivalent expressive power as this class. The rewbs with ``may repeat labels'' with $\varepsilon\text{-semantics}$ was recently proposed by Schmid with the notion of ref-words and dereferences~\cite{schmid2016characterising}. Simultaneously, he proposed a class of automata called \emph{memory automata}\,(MFA), and showed that its expressive power is equivalent to that of rewbs. Freydenberger and Schmid extended MFA to \emph{MFA with trap-state}~\cite{freydenberger2019deterministic}.
Berglund and van der Merwe~\cite{berglund2023re} showed that the class of Schmid's rewbs is a proper subclass of CSLs, and is incomparable with the class of CFLs. Note that there is a pumping lemma for the formalization given by C\^{a}mpeanu et al.~\cite{campeanu2003formal} but it is known not to work for Schmid's rewbs. As mentioned above, Larsen introduced the notion of nested levels and showed that increase in the levels increases the expressive powers of rewbs~\cite{larsen1998regular}.

Next, we discuss related work on the three automata used throughout the paper, namely SA, NESA, and NSA. Ginsburg et al. introduced SA as a mathematical model that is more powerful than pushdown automaton\,(PDA), and NESA as a restricted version of SA~\cite{ginsburg1967stack}. Hopcroft and Ullman discovered a type of Turing machine corresponding to the class of two-way NESA~\cite{hopcroft1967nonerasing}. Ogden proposed a pumping lemma for stack languages and nonerasing stack languages~\cite{ogden1969intercalation}. Aho proposed NSA with a proof of the fact that (1N)\,NSA and indexed grammars given by himself in \cite{aho1968indexed} are equivalent in their expressive powers, and recognized PDA and SA as special cases of NSA~\cite{aho1969nested}. Aho also showed that the class of indexed languages is a proper superclass of CFLs, and a proper subclass of CSLs~\cite{aho1968indexed}. Hayashi proposed a pumping lemma for indexed languages~\cite{hayashi1973derivation}.
 \section{Preliminaries}
\label{sec:prelim}

In this section, we formalize the syntax and the semantics of rewbs following the formalization given in \cite{freydenberger2019deterministic}.  We begin with the syntax. Let $\Sigma_\varepsilon = \Sigma \mydisjointu \syuugou{\varepsilon}$ and $\slice{k} = \syuugou{1,2,\dots, k}$, where the symbol $\mydisjointu$ denotes a disjoint union.
    \begin{definition2} For each natural number $k \geq 1$, the set of \emph{$k$-rewbs} over $\Sigma$, written $\rewb_k$, and the mapping $\varf:\, \rewb_k \rightarrow \mathcal{P}({\slice{k}})$ are defined as follows, where $a \in \Sigma _ {\varepsilon}$ and $i \in \slice{k}$:
	\begin{align*}
		(\alpha, \varf(\alpha)) ::= &(a, \emptyset) \mid (\bs i, \syuugou{i}) \mid (\alpha_0 \alpha_1, \varf(\alpha_0) \cup \varf(\alpha_1)) \mid (\alpha_0 + \alpha_1, \varf(\alpha_0) \cup \varf(\alpha_1)) \\
		& \mid (\alpha_0^\ast, \varf(\alpha_0)) \mid ((_{j} \alpha_0 )_{j}, \varf(\alpha_0) \mydisjointu \syuugou{j})\ \text{where}\ j \in \slice{k} \backslash \varf(\alpha_0).
	\end{align*}
	We also write $\rewb_0$ for the set REG of regular expressions over $\Sigma$, and $\rewb$ for the set of all rewbs, namely $\bigcup _ {k \geq 0} \rewb_k$.
	\end{definition2}
	
	\begin{example}
	For example, $\varepsilon$, $a$, $\bs 1$, $a^\ast\bs 1$, $(_1 a^\ast)_1$, $((_1 a^\ast)_1) ^ \ast$, $(_2 a^\ast)_2\bs 2$, $(_1 a^\ast )_1 (_2 b^\ast)_2 (\bs 1+\bs 2)$, $(_2 (_1 (a+b)^\ast )_1\bs 1 )_2\,\bs 2\,(_2 \bs 1{)_2} ^\ast$, $((_1 \bs 4\,a )_1\,(_2 \bs 3 )_2\,(_3 \bs 2\,a)_3\,(_4 \bs 1 \bs 3 )_4)^\ast$ are rewbs. On the other hand, $(_1 (_1 a^\ast )_1 )_1$, $(_1 a^\ast\,\bs 1 ) _1$, $(_1 (_2 (_1 a^\ast )_1 )_2 )_1$ are not rewbs.
	\end{example}
	
	Note that this syntax allows multiple occurrences of captures with the same label, that is, we adopt the ``may repeat labels'' convention. Next, we define the semantics.
	\begin{definition2} \label{dfn:reflang} 
        Let $B_k = \syuugou[\lbrack_i, \rbrack_i]{i \in [k]}$. The mapping $\reflang_k:\, \rewb_k \rightarrow \mathcal{P}((\Sigma \mydisjointu B_k \mydisjointu [k]) ^ \ast)$ is defined as follows, where $a \in \Sigma _ {\varepsilon}$ and $i \in \slice{k}$:
	\begin{align*}
		&\reflang_k(a) = \syuugou{a},\ \reflang_k(\bs i) = \syuugou{i},\ \reflang_k(\alpha_0 \alpha_1) = \reflang_k(\alpha_0)\reflang_k(\alpha_1),\\
        &\reflang_k(\alpha_0 + \alpha_1) = \reflang_k(\alpha_0) \cup \reflang_k(\alpha_1),\ \reflang_k(\alpha^\ast) = \reflang_k(\alpha)^\ast,\ \reflang_k((_i \alpha )_i) = \syuugou{\lbrack_i}\reflang_k(\alpha)\syuugou{\rbrack_i}.
	\end{align*}
        We let $\refwords_k$ denote $\bigcup _ {\alpha \in \rewb_k} \reflang_k{(\alpha)}$.
	\end{definition2}
	
	\begin{example} $\reflang_k((_1 (a+b)^\ast)_1\bs 1) = \syuugou{\lbrack_1}\syuugou{a,b}^\ast\syuugou{\rbrack_1}\syuugou{1} = \syuugou[\lbrack_1\,w\,\rbrack_1\,1]{w \in \syuugou{a,b} ^ \ast}.$
	\end{example}

    That is, we first regard a rewb $\alpha$ over $\Sigma$ as a regular expression over $\Sigma \mydisjointu B_k \mydisjointu \slice{k}$, deducing the language $\reflang_k(\alpha)$. The second step, described next, is to apply the \emph{dereferencing (partial) function} $\deref_k:(\Sigma \mydisjointu B_k \mydisjointu \slice{k}) ^ \ast \rightharpoonup \Sigma ^ \ast$ to each of its element.

    We give an intuitive description of  $\deref_k$. First, $\deref_k$ scans its input string from the beginning toward the end, seeking $i \in \slice{k}$. If such $i$ is found, $\deref_k$ replaces this $i$ with the substring obtained by removing the brackets in $v$ that comes from the preceding $\lbrack _ i\,v\,\rbrack_i$ if $\lbrack_i$ exists (if this $\lbrack_i$ has no corresponding $\rbrack_i$, $\deref_k$ becomes undefined). Otherwise, $\deref_k$ replaces this $i$ with $\varepsilon$. The dereferencing function $\deref_k$ repeats this procedure until all elements of $\slice{k}$ appearing in the string are exhausted, then removes all remaining brackets. We let $v_{[r]}$ denote the string which $\deref_k$ scans at the $r$\textsuperscript{th} number $n_r \in \slice{k}$ at the $r$\textsuperscript{th} loop (see \fullversion{Appendix~\ref{app:prelim}}{the full version~\cite{nogami2023expressive}} for the formal definitions of $\deref_k$ and $v_{[r]}$).
		\begin{enumerate}
            \item $\lbrack_1 a\,\lbrack_2 b \rbrack_2\, 2\,\rbrack_1\,1$. In this example, $\deref_k$ encounters $n_1=2$ first, and this $2$ corresponds the preceding $\lbrack_2 b \rbrack_2$, therefore this $2$ is replaced with $v_{[1]}=b$. As a result, the input string becomes $\lbrack_1 a\,\lbrack_2 b \rbrack_2\, b\,\rbrack_1\,1$. $\deref_k$ repeats this process again. Now, $\deref_k$ locates $n_2=1$ corresponding the preceding $\lbrack_1 a\,\lbrack_2 b \rbrack_2\, b\,\rbrack_1$, so this $1$ is replaced with $v_{[2]}=a \lbrack_2 b \rbrack_2 b$ but with the brackets erased. Therefore we gain $\lbrack_1 a\,\lbrack_2 b \rbrack_2\, b\,\rbrack_1\,abb$. Finally, $\deref_k$ removes all remaining brackets and produces $abbabb$. Here is the diagram: $\lbrack_1 a\,\underline{\lbrack_2 b \rbrack_2}\, 2\,\rbrack_1\,1 \to \underline{\lbrack_1 a\,\lbrack_2 b \rbrack_2\, b\,\rbrack_1}\,1 \to \lbrack_1 a\,\lbrack_2 b \rbrack_2\, b\,\rbrack_1\,abb \to abbabb.$
			\item $\lbrack_1 a \rbrack_1\,1\,\lbrack_1 bb \rbrack_1\,1$. In this example, $n_1 = n_2 = 1$, $v_{[1]}=a$, $v_{[2]}=bb$, and
				\[
					\underline{\lbrack_1 a \rbrack_1}\,1\,\lbrack_1 bb \rbrack_1\,1 \to \lbrack_1 a \rbrack_1\,a\,\underline{\lbrack_1 bb \rbrack_1}\,1 \to \lbrack_1 a \rbrack_1\,a\,\lbrack_1 bb \rbrack_1\,bb \to aabbbb.
				\]
			\item $abc\,1\,2$. In this example, $n_1 = n_2 = 1$, $v_{[1]}=v_{[2]}=\varepsilon$, and $abc\,1\,2 \to abc\, 2 \to abc$.
		\end{enumerate}
	
    Note that an unbound reference is replaced with the empty string $\varepsilon$, that is, we adopt the $\varepsilon\text{-semantics}$. However, as mentioned in Section~\ref{sec:related}, this semantics' expressive power is equivalent to that of the $\emptyset\text{-semantics}$ under the ``may repeat labels'' convention (see \cite{berglund2023re} for the proof). We define the language $L(\alpha)$ denoted by a $k$-rewb $\alpha \in \rewb_k$ to be $\deref_k (\reflang_k(\alpha)) = \syuugou[\deref_k(v)]{v \in \reflang_k(\alpha)}$ (Lemmas~\ref{lem:deref} and \ref{lem:refw} ensure that $L(\alpha)$ is well-defined).
    
	Let $g: (\Sigma \mydisjointu B_k) ^ \ast \to \Sigma ^ \ast$ denote the free monoid homomorphism where $g(x)$ is $x$ for each $x \in \Sigma$, and $\varepsilon$ for each $x \in B_k$. Every $v \in (\Sigma \mydisjointu B_k \mydisjointu \slice{k}) ^ \ast$ can be written uniquely in the form $v = v_0 n_1 v_1 \cdots n_m v_m$, where $m \geq 0$ (denoted by $\cnt{v}$), and $v_r \in (\Sigma \mydisjointu B_k) ^ \ast$ and $n_r \in \slice{k}$ for each $r \in \{0,\dots,m\}$.
	Here, let $y_0 \triangleq v_0$ and for each $r \in \syuugou{1, \dots, m}$, $y_r \triangleq v_0 n_1 v_1 \cdots n_r v_r$. A string $v = v_0 n_1 v_1 \cdots n_m v_m$ over $\Sigma \mydisjointu B_k \mydisjointu \slice{k}$ is said to be \emph{matching} if
		\[
			\forall r \in \syuugou{1, \dots, m}.\, \forall x_1, x_2.\, y_{r-1} = x_1 \lbrack_ {n_r} x_2 \Longrightarrow (\exists x_2',x_3.x_2 = x_2 ^ \prime \rbrack_{n_r} x_3\,\land\,x_2^\prime \niton \lbrack_{n_r}, \rbrack_{n_r})
		\]
        holds. Intuitively, a string $v$ being matching means that for all $n_r \in \slice{k}$ in $v$, if there exists a left bracket $\lbrack_{n_r}$ in the string immediately before $n_r$, then there is a right bracket $\rbrack_{n_r}$ in between this $\lbrack_{n_r}$ and $n_r$.
    The following three lemmas follow.
	
	\begin{lemma} \label{lem:deref} Given a matching string $v$, $\deref_k(v) = g(v_0)\,g(v_{[1]})\,g(v_1) \cdots g(v_{[m]})\,g(v_m).$
	\end{lemma}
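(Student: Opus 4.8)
The plan is to induct on $m = \cnt{v}$, peeling off one iteration of $\deref_k$ at a time; throughout, write $v = v_0 n_1 v_1 \cdots n_m v_m$ as in the statement. When $m = 0$ the string contains no element of $[k]$, so running $\deref_k$ consists solely of erasing all brackets, giving $\deref_k(v) = g(v_0)$, which is exactly the claimed right-hand side.

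For the inductive step, assume $m \ge 1$. Since $v_0 \in (\Sigma \mydisjointu B_k)^\ast$ contains no reference, the first loop of $\deref_k$ acts on the occurrence $n_1$ and inspects the prefix $y_0 = v_0$ for an occurrence of $\lbrack_{n_1}$. If there is none, $\deref_k$ replaces $n_1$ by $\varepsilon$, so $v_{[1]} = \varepsilon$. If there is one, apply the matching hypothesis at $r = 1$ to the \emph{last} occurrence of $\lbrack_{n_1}$ in $y_0$: this provides a matching $\rbrack_{n_1}$ occurring after it inside $y_0$ with no $n_1$-bracket in between, so the bracket pair $\deref_k$ uses is well defined, lies entirely within $v_0$, and $\deref_k$ does not get stuck here; by definition $v_{[1]}$ is the factor of $v_0$ delimited by that pair with all brackets deleted, so in particular $v_{[1]} \in \Sigma^\ast$. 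In either case, after the first loop the working string is $w = v_0\, v_{[1]}\, v_1\, n_2\, v_2 \cdots n_m\, v_m$, with $\cnt{w} = m-1$, and by the definition of $\deref_k$ we have $\deref_k(v) = \deref_k(w)$, the remaining loops of $\deref_k$ on $v$ being exactly the loops of $\deref_k$ on $w$; consequently $v_{[s]}$ for $2 \le s \le m$ is the substitution performed at loop $s-1$ of $\deref_k(w)$.

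It remains to check that $w$ is again matching, so that the induction hypothesis applies. The key point is that $v_{[1]}$ is \emph{bracket-free}: hence for every $j$ the bracket occurrences of the prefix $v_0 v_{[1]} v_1 n_2 v_2 \cdots n_j v_j$ of $w$ are, in the same left-to-right order, exactly the bracket occurrences of the prefix $y_j = v_0 n_1 v_1 n_2 v_2 \cdots n_j v_j$ of $v$. Since the matching condition refers only to bracket occurrences and their order, matching of $v$ at index $j+1$ transfers verbatim to matching of $w$ at index $j$. Applying the induction hypothesis to $w$ and then the homomorphism property of $g$ yields
\[
    \deref_k(v) = \deref_k(w) = g(v_0 v_{[1]} v_1)\, g(v_{[2]})\, g(v_2) \cdots g(v_{[m]})\, g(v_m) = g(v_0)\, g(v_{[1]})\, g(v_1) \cdots g(v_{[m]})\, g(v_m),
\]
which is the desired identity (and shows along the way that $\deref_k(v)$ is defined, since $w$ matching makes $\deref_k(w)$ defined by the hypothesis).

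I expect the only real subtlety to be the passage between the iterative, destructive behaviour of $\deref_k$ on the successively rewritten strings and the matching condition, which is phrased purely in terms of the fixed original string $v$ and its prefixes $y_{r-1}$. The fact that makes this passage safe is precisely that every substitution $\deref_k$ performs is bracket-free, so that bracket structure --- the only thing the matching condition sees --- is preserved along the rewriting; this is also what keeps the number of loops equal to $m$ and guarantees that loop $r$ acts on $n_r$. Everything else is routine bookkeeping with the homomorphism $g$.
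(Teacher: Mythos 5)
Your proof is correct, but it is organized differently from the paper's. The paper first proves an unconditional loop invariant by induction on the iteration index $r$ of a single run on the fixed input $v$ (Lemma~\ref{lem:derefloop}: after the $r$\textsuperscript{th} loop the tape holds $v_0\,g(v_{[1]})\,v_1\cdots g(v_{[r]})\,v_r n_{r+1}\cdots n_m v_m$), and then separately shows by contradiction that matching forces the loop to run exactly $m=\cnt{v}$ times (Lemma~\ref{lem:cntloop}); the lemma follows by applying step \maru{6} to $v^{(m)}$. You instead induct on $\cnt{v}$, peeling off one loop and restarting $\deref_k$ on the rewritten string, which obliges you to prove the extra claim that the rewritten string is again matching. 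You correctly reduce that claim to the bracket-freeness of the substituted factor, so that the bracket occurrences of the relevant prefixes of $w$ and of $v$ coincide in order --- this is the same observation the paper deploys inside the proof of Lemma~\ref{lem:cntloop} (where it keeps referring back to the prefixes $y_{r-1}$ of the original $v$ rather than re-establishing matching). Your route buys a single self-contained induction at the cost of the restart argument (one should note that the head is mid-tape when control returns to step \maru{1}, which is harmless since no number remains to its left); the paper's route cleanly separates what each loop does from why exactly $m$ loops occur. One notational caveat: the paper's $v_{[r]}$ is the bracketed factor \emph{including} its internal brackets (and, for $r\geq 2$, is read off the rewritten tape $v^{(r-1)}$, so it may contain previously substituted pure-$\Sigma$ factors), which is why the statement writes $g(v_{[r]})$; your $v_{[1]}\in\Sigma^\ast$ is a different convention that happens to be harmless here because $g$ is applied in the final formula anyway.
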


    \begin{lemma} \label{lem:prefix} A prefix of a matching string is matching. That is, if we decompose a string $v$ into $v = xy$, $x$ is matching. Moreover, $x_{[r]} = v_{[r]}$ holds for each $r = 1, \dots, \cnt{x}\,(\leq \cnt{v})$.
	\end{lemma}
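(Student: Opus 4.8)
The plan is to reduce both assertions to a single structural observation: for each index $r$, the $r$-th conjunct of the matching condition and the replacement string $v_{[r]}$ are both determined by the prefix of $v$ that ends with the $r$-th occurrence of a symbol of $\slice{k}$ — namely by the string $y_{r-1}\,n_r$ — and do not depend on the suffix $v_r\,n_{r+1}\,v_{r+1}\cdots n_m\,v_m$.

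First I would fix the decomposition $v = v_0\,n_1\,v_1\cdots n_m\,v_m$ with $m = \cnt{v}$ and let $x$ be an arbitrary prefix of $v$. Since $n_1,\dots,n_m$ are the only symbols of $\slice{k}$ in $v$, those occurring in $x$ form an initial segment $n_1,\dots,n_s$ of this list; hence $\cnt{x} = s \le \cnt{v}$, and the unique decomposition of $x$ is $x = v_0\,n_1\,v_1\cdots n_{s-1}\,v_{s-1}\,n_s\,x_s$ with $x_s$ a prefix of $v_s$ (when $s = 0$, $x$ is a prefix of $v_0$, $\cnt{x} = 0$, and both claims are vacuous). Consequently, for every $r \in \{1,\dots,s\}$ the prefix of $x$ up to and including its $r$-th symbol of $\slice{k}$ equals $y_{r-1}\,n_r$; in particular the string ``$y_{r-1}$'' and the symbol ``$n_r$'' associated with $x$ agree with those associated with $v$ for all $r \le \cnt{x}$.

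For the matching claim, observe that the displayed matching condition is the conjunction over $r \in \{1,\dots,\cnt{v}\}$ of a property that mentions only $y_{r-1}$ and $n_r$. By the previous paragraph, for each $r \le \cnt{x}$ this property for $x$ is literally the same assertion as for $v$, which holds because $v$ is matching; therefore $x$ is matching. In particular every prefix of $v$, and hence each $y_{r-1}$, is matching. For the equality $x_{[r]} = v_{[r]}$, I would unfold the definition of $v_{[r]}$ from the appendix and verify that $v_{[r]}$ is a function of $y_{r-1}\,n_r$ together with $v_{[1]},\dots,v_{[r-1]}$: intuitively, $\deref_k$ produces $v_{[r]}$ from the content of the last pair $\lbrack_{n_r}\cdots\rbrack_{n_r}$ occurring inside $y_{r-1}$ (this pair is well defined because $y_{r-1}$ is matching; if no $\lbrack_{n_r}$ occurs in $y_{r-1}$ then $v_{[r]} = \varepsilon$) by resolving the earlier references inside it, all of which are among $n_1,\dots,n_{r-1}$. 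Granting this, a short induction on $r$ using the agreement of $y_{r-1}$ and $n_r$ between $x$ and $v$ gives $x_{[r]} = v_{[r]}$ for every $r \le \cnt{x}$.

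The only delicate point is this last verification: confirming, from the formal definition of $v_{[r]}$, that the rewriting of the $r$-th index consults only $y_{r-1}$ and not the part of $v$ after $n_r$. This is exactly where the hypothesis that $v$ is matching is used — it ensures the bracket pair whose content becomes $v_{[r]}$ is already closed within $y_{r-1}$, so $\deref_k$ never needs to read the suffix following $n_r$. Everything else is routine bookkeeping about initial segments.
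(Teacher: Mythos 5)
Your proof is correct and is, in substance, the argument the paper intends: the paper's own proof is just the one-line remark ``Immediate from Lemma~\ref{lem:deref}'', and you supply the details it omits, namely that the $r$-th conjunct of the matching condition and the replacement string $v_{[r]}$ are determined by $y_{r-1}n_r$ (together with $v_{[1]},\dots,v_{[r-1]}$), data which a prefix $x$ with $\cnt{x}\geq r$ shares with $v$. The one step you flag as delicate --- that computing $v_{[r]}$ never consults the tape beyond $n_r$ --- is exactly what the loop invariant of Lemma~\ref{lem:derefloop} makes precise (the brackets present after $r-1$ iterations are those of $y_{r-1}$, and the matching condition closes the relevant pair inside $y_{r-1}$), and you give the right reason for it.
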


	\begin{lemma} \label{lem:refw} Every $v \in \refwords _k$ is matching. 
	\end{lemma}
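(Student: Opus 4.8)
The plan is to establish a strengthening of the lemma by structural induction on the $k$-rewb $\alpha$ (carrying three clauses simultaneously), and then read off the matching condition from one of them. For each $i \in \slice{k}$, let $h_i$ be the monoid homomorphism $(\Sigma \mydisjointu B_k \mydisjointu \slice{k})^\ast \to \syuugou{L,R}^\ast$ sending $\lbrack_i$ to $L$, $\rbrack_i$ to $R$, and every other letter to $\varepsilon$. I claim every $v \in \reflang_k(\alpha)$ satisfies: \textbf{(i)} every bracket $\lbrack_i,\rbrack_i$ and every reference symbol $i$ occurring in $v$ has $i \in \varf(\alpha)$; \textbf{(ii)} $h_i(v) \in (LR)^\ast$ for every $i \in \slice{k}$; and \textbf{(iii)} for every factorization $v = x\,i\,z$ singling out an occurrence of a reference symbol $i \in \slice{k}$, one has $h_i(x) \in (LR)^\ast$. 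Informally, (ii) says the $i$-brackets of a ref-word form a flat, non-nested list of matched pairs (the rewb syntax forbids a capture to occur inside another one with the same label), and (iii) adds that a reference symbol $i$ never sits strictly between a matched pair $\lbrack_i \cdots \rbrack_i$.

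Clause (i) is immediate: the atomic cases $a,\bs j$ and the union/concatenation/star cases follow since $\varf$ unions over subexpressions, and for $(_j \alpha_0 )_j$ one uses $\varf((_j \alpha_0 )_j) = \varf(\alpha_0) \mydisjointu \syuugou{j}$. For clause (ii), every case is routine except $(_j \alpha_0 )_j$ with $i = j$: there $v = \lbrack_j v' \rbrack_j$ with $v' \in \reflang_k(\alpha_0)$, and because the syntax forces $j \notin \varf(\alpha_0)$, clause (i) for $\alpha_0$ gives that $v'$ contains no $j$-bracket, so $h_j(v) = L\,h_j(v')\,R = LR \in (LR)^\ast$; when $i \neq j$ the brackets $\lbrack_j,\rbrack_j$ are erased by $h_i$, and for concatenation, union and star one invokes closure of $(LR)^\ast$ under concatenation, so in all cases (ii) reduces to the induction hypothesis.

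Clause (iii) is where (ii) is used. Its base cases are trivial ($a$ has no reference, and $\bs j$ gives $v = j$ with $x = \varepsilon$). For $\alpha_0 \alpha_1$, write $v = v'v''$ with $v' \in \reflang_k(\alpha_0)$ and $v'' \in \reflang_k(\alpha_1)$; a reference occurrence in $v'$ is handled by the induction hypothesis for $\alpha_0$, while for one in $v''$, say $v'' = x''\,i\,z''$, the prefix is $x = v'x''$ and $h_i(x) = h_i(v')\,h_i(x'')$, which lies in $(LR)^\ast$ because $h_i(v') \in (LR)^\ast$ by clause (ii) for $\alpha_0$ and $h_i(x'') \in (LR)^\ast$ by the induction hypothesis; the star case is the same with a product of $h_i$-images of whole iterations in front. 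For $(_j \alpha_0 )_j$, a reference occurrence lies in $v'$, so $x = \lbrack_j x'$: the case $i = j$ is impossible, since a reference symbol $j$ inside $v' \in \reflang_k(\alpha_0)$ would, by clause (i) for $\alpha_0$, force $j \in \varf(\alpha_0)$, contradicting $j \notin \varf(\alpha_0)$; and if $i \neq j$ then $h_i(x) = h_i(x') \in (LR)^\ast$ by the induction hypothesis. Finally, to finish the proof, take $v \in \refwords_k$, fix $\alpha$ with $v \in \reflang_k(\alpha)$, and write $v = v_0 n_1 v_1 \cdots n_m v_m$. Fix $r$ and suppose $y_{r-1} = x_1 \lbrack_{n_r} x_2$. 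Since $v = y_{r-1}\,n_r\,(v_r n_{r+1} \cdots v_m)$, clause (iii) with $x = y_{r-1}$, $i = n_r$ gives $h_{n_r}(y_{r-1}) \in (LR)^\ast$; written as $h_{n_r}(x_1)\,L\,h_{n_r}(x_2) \in (LR)^\ast$, the displayed $L$ sits at an odd position of a word in $(LR)^\ast$, hence is immediately followed by an $R$, so $h_{n_r}(x_2)$ begins with $R$. Thus $x_2 = x_2'\rbrack_{n_r}x_3$ with $x_2'$ the prefix of $x_2$ preceding its first $n_r$-bracket, so $x_2' \niton \lbrack_{n_r},\rbrack_{n_r}$ --- exactly the matching condition.

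The main obstacle is choosing the induction hypothesis, not any particular computation. Clause (ii) by itself does not yield matching, because a prefix of a flatly balanced bracket word can end in a dangling $\lbrack_i$; clause (iii) is what rules that out, and it encodes at the ref-word level precisely the syntactic restriction $j \notin \varf(\alpha_0)$ built into the capture construct $(_j \alpha_0 )_j$ --- the very restriction that already powers clause (ii). Once one sees that (i), (ii) and (iii) must be proved together, every case is bookkeeping.
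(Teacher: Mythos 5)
Your proof is correct. It follows the same overall strategy as the paper---establish a structural invariant of $\reflang_k(\alpha)$ by induction on $\alpha$, then read off the matching condition---but the invariant you carry is different. The paper's auxiliary Lemma~\ref{lem:refwsup} consists of your clause (i) together with a single local property: any factorization $v_1 \lbrack_i v_2 \in \reflang_k(\alpha)$ forces $v_2 = v_2'\rbrack_i v_3$ with $v_2' \niton \lbrack_i, \rbrack_i, i$. The paper declares this ``immediate from the definitions'' and then, in the proof of Lemma~\ref{lem:refw}, uses the extra conjunct $v_2' \niton i$ to argue that the closing bracket falls inside $x_2$ rather than beyond the reference $n_r$. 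You instead carry the global balancedness facts (ii) and (iii) phrased via the homomorphism $h_i$, and finish with a parity argument in $(LR)^\ast$ applied to the occurrence of $n_r$ immediately after $y_{r-1}$. The two invariants are essentially interderivable, and both hinge on the same syntactic side condition $j \notin \varf(\alpha_0)$ in $(_j\,\alpha_0\,)_j$, which is what keeps the $i$-brackets flat and keeps a reference $i$ out of any open $i$-capture. Your version has the merit of actually proving the inductive step in full (correctly identifying that clause (i) is needed inside the induction for the capture case), where the paper leaves the corresponding claim unproved; the paper's version has the merit of yielding the matching condition with no positional reasoning at the end.
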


	Next, we recall the notions of SA, NESA, and NSA. In this paper, we unify their definitions based on \cite{aho1969nested, ginsburg1967one} to clarify the different capabilities of these models.
	First, we review NFA. Here is the definition in the textbook by Hopcroft et al.~\cite{hopcroft2001introduction}:
	
    \begin{definition2}[\cite{hopcroft2001introduction}, p.57] \label{dfn:nfa} A \emph{nondeterministic finite automaton} $N$ is a 5-tuple $(Q,\Sigma,\delta,q_0,F)$, where $Q$ is a finite set of states, $\Sigma$ a finite set of input symbols\,(also called alphabet), $q_0 \in Q$ a start state, $F \subseteq Q$ a set of final states, and $\delta: Q\times \Sigma \to \powerset{Q}$ a transition function.
	\end{definition2}
	
	As well known, the transition function $\delta$ can be \emph{extended} to $\hat{\delta}: Q \times \Sigma^\ast \to \powerset{Q}$ where $\hat{\delta}(q,w)$ represents the set of all states reachable from $q$ via $w$. Let $q \myto{a}{N} q^\prime$ denote $q^\prime \in \delta(q,a)$, and $q \mytto{w}{N} q^\prime$ denote $q^\prime \in \hat{\delta}(q,w)$. With this notation, the language of an NFA $N$ can be written as follows: $L(N) = \syuugou[w \in \Sigma^\ast]{\exists q_f \in F.\,q_0\mytto{w}{N} q_f}$.
	
    A pushdown automaton\,(PDA) is an NFA equipped with a \emph{stack} such that the PDA may write and read its \emph{stack top} with a transition. A \emph{stack automaton}\,(SA) is ``an extended PDA'', which can reference not only the top but inner content of the stack. That is, while the \emph{stack pointer} of a PDA is fixed to the top, an SA allows its pointer to move left and right and read a stack symbol pointed to by the pointer. However, the only place on the stack that can be rewritten is the top, as in PDA. Formally, a (1N)\,SA $A$ is a 9-tuple $(Q,\Sigma,\Gamma,\delta,q_0,Z_0,\#,\mydollar,F)$ satisfying the following conditions:
        the components $Q$, $\Sigma$, $q_0$ and $F$ are the same as those of NFA. $\Gamma\,(\neq \emptyset)$ is a finite set of stack symbols, and $Z_0 \in \Gamma$ is an initial stack symbol.		
                The stack symbol $\# \notin \Sigma \cup \Gamma$ (resp. $\mydollar \notin \Sigma \cup \Gamma$) is always and only written at the leftmost (bottom) (resp. the right most (top)) of the stack.\footnote{These special symbols $\#, \mydollar$ representing ``bottom'' and ``top'' of the stack respectively do not appear in \cite{ginsburg1967one} and are introduced anew in this paper to define NESA and NSA, which will be defined later, in the style of \cite{aho1969nested}. In fact, SA defined in \cite{ginsburg1967one} is not capable of directly discerning whether the stack pointer is at the top or not. Although it is not difficult to see that directly adding the ability does not increase the expressive power of SA, the ability is directly in NESA as seen in \cite{hopcroft1967nonerasing, ogden1969intercalation}. Therefore, to make it easy to see that NESA is a restriction of SA, we define SA to also directly have the ability.}
                The transition function $\delta$ has the following two modes, where $\myleft,\mystop,\myright \notin (\Sigma\cup\Gamma) \mydisjointu \syuugou{\#, \mydollar}$, $\dir_i \triangleq \syuugou{\mystop,\myright}$, and $\dir_s \triangleq \syuugou{\myleft,\mystop,\myright}$:
				\begin{romanenumerate}
					\item (pushdown mode) $Q\times \Sigma \times \Gamma\mydollar \to \powerset{Q\times \dir_i \times \Gamma^\ast \mydollar}$,
					\item (stack reading mode)
							(a) $Q \times \Sigma \times \Gamma\mydollar \to \powerset{Q\times \dir_i\times \syuugou{\myleft}}$,	
							(b) $Q \times \Sigma \times \Gamma \to \powerset{Q\times \dir_i\times \dir_s}$, 
							(c) $Q \times \Sigma \times \syuugou{\#} \to \powerset{Q\times \dir_i\times \syuugou{\myright}}$.
				\end{romanenumerate}
        
	Intuitively, $\delta$ works as follows (Definition~\ref{dfn:sa-id} provides the formal semantics).
            (i) The statement $(q^\prime, d, w\mydollar) \in \delta(q, a, Z\mydollar)$ says that whenever the current state is $q$, the input symbol is $a$, and the pointer references the top symbol $Z$, the machine can move to the state $q^\prime$, move the input cursor along $d$, and replace $Z$ with the string $w$.
            (ii) The statement (b) $(q^\prime,d,e) \in \delta(q,a,Z)$ says that whenever the current state is $q$, the input symbol is $a$, and the pointer references the symbol $Z$, the machine can move to the state $q^\prime$, move the input cursor along $d$, and move the pointer along $e$. The statements (a) and (c) are similar to (b) except that the direction in which the pointer can move is restricted lest the pointer go out of the stack. In particular, an SA that cannot erase a symbol once written on the stack is called a \emph{nonerasing stack automaton}\,(NESA). That is, 
	a (1N) nonerasing stack automaton is an SA whose transition function $\delta$ satisfies the condition that, in (i)\,(pushdown mode), $(q^\prime, d, w\mydollar) \in \delta(q,a,Z\mydollar)$ implies $w \in Z\Gamma^\ast$.
    To formally describe how SA works, we define a tuple called \emph{instantaneous description}\,(ID), which consists of a state, an input string, and a string representation of the stack, and define the binary relation $\myid{A}$ over the set of these tuples. Let $\mydtoz{\myleft}=-1$, $\mydtoz{\mystop}=0$, and $\mydtoz{\myright}=1$.

	\begin{definition2} \label{dfn:sa-id} Let $A$ be an SA $(Q,\Sigma,\Gamma,\delta,q_0,\#,\mydollar,F)$. An element of the set $I = Q \times \Sigma^\ast \times \syuugou{\#} (\Gamma \mydisjointu \syuugou{\,\myspl}) ^ \ast \syuugou{\mydollar}$ is called \emph{instantaneous description}, where the stack symbol $\myspl\,\notin \Gamma$ stands for the position of stack pointer. Moreover, let $\myid{A}$ (or $\myid{}$ when $A$ is clear) be the smallest binary relation over $I$ satisfying the following conditions:	
		\begin{romanenumerate}
			\item $(q,a_i\cdots a_k, \#yZ\myspl\!\mydollar) \myid{A} (q^\prime, a_{i+\mydtoz{d}} \cdots a_k, \#yw \myspl\!\mydollar)$ if $(q^\prime, d, w\mydollar) \in \delta(q,a_i,Z\mydollar)$.\footnote{We regard $a_{k+1}\cdots a_{k}$ as $\varepsilon$.}
			\item 
				\begin{alphaenumerate}
					\item $(q,a_i \cdots a_k, \#yZ\myspl\!\mydollar) \myid{A} (q^\prime,a_{i+\mydtoz{d}} \cdots a_k, \#y\myspl\!Z\mydollar)$ if $(q^\prime, d, \myleft) \in \delta(q,a_i,Z\mydollar)$.
					\item if $(q^\prime, d, e) \in \delta(q,a_i,Z)$ and $Z = Z_j$, $1 \leq j < n$, then
                                          
							\phantom{if }$(q,a_i \cdots a_k, \# Z_1 \cdots Z_{j} \myspl\! \cdots Z_n\mydollar) \myid{A} (q^\prime,a_{i+\mydtoz{d}} \cdots a_k, \# Z_1 \cdots Z_{j+\mydtoz{e}} \myspl\! \cdots Z_n\mydollar)$.
						
					\item $(q,a_i \cdots a_k, \#\myspl\! Zy\,\mydollar) \myid{A} (q^\prime,a_{i+\mydtoz{d}} \cdots a_k, \# Z \myspl\!y\,\mydollar)$ if $(q^\prime, d, \myright) \in \delta(q,a_i,\#)$.
				\end{alphaenumerate}
		\end{romanenumerate}
		
		Note that $\myleft \notin \dir_i$, which means the input cursor will not move back to left.
            We say that $A$ accepts $w \in \Sigma^\ast$ if there exist $y_1, y_2 \in \Gamma^\ast$, and $q_f \in F$ such that $(q_0, w, \# Z_0 \myspl\!\mydollar) \myid{A}^\ast (q_{f}, \varepsilon, \# y_1 \myspl\! y_2\, \mydollar)$. Let $L(A)$ denote the set of all strings accepted by $A$.
	\end{definition2}
	
	We next define \emph{nested stack automaton}\,(NSA) which is SA extended with the capability to create and remove substacks. For instance, suppose that the stack is $\# a_1 a_2\myspl\! a_3 \mydollar$ and we are to create a new substack containing $b_1 b_2$:
	\begin{equation}
		\underline{\# a_1 \mycent\,b_1 b_2 \myspl\! \mydollar}\,a_2 a_3 \mydollar \label{nsex-nested}.
	\end{equation}\par 
	
    Note that the new substack $\mycent\,b_1 b_2\,\mydollar$ is embedded below the symbol $a_2$ indicated by the stack pointer, and the pointer moves to the top of the created substack. The creation of the inner substack narrows the range within which the stack pointer can move as indicated by the underlined part $\# a_1 \mycent\,b_1 b_2 \myspl\! \mydollar$. While the bottom of the entire stack is always fixed by the leftmost symbol $\#$, the top of the embedded substack is regarded as the top of the entire stack. The inner substacks are allowed to be embedded endlessly and everywhere, whereas the writing in the pushdown mode is still restricted to the top of the stack:
	\begin{align}
		\underline{\# a_1 \mycent\,b_1 b_2 \myspl\! \mydollar}\,a_2 a_3 \mydollar
		&\overset{\myleft}{\rightarrow}
		\underline{\# a_1 \mycent\,b_1 \myspl\! b_2\,\mydollar}\,a_2 a_3 \mydollar
		\overset{\text{create}}{\longrightarrow}
		\underline{\# a_1 \mycent\, \mycent\, c_1 c_2 \myspl\! \mydollar}\,b_1 b_2 \mydollar\,a_2 a_3 \mydollar,
		\label{nsex-nestednested} \\
		\underline{\# a_1 \mycent \myspl\! b_1 b_2\,\mydollar}\,a_2 a_3 \mydollar
		&\overset{\myleft}{\rightarrow}
		\underline{\# a_1 \myspl\! \mycent\, b_1 b_2\,\mydollar}\,a_2 a_3 \mydollar
		\overset{\text{create}}{\longrightarrow}
		\underline{\# \mycent\, c_1 c_2 \myspl\! \mydollar}\,a_1 \mycent\, b_1 b_2\,\mydollar\,a_2 a_3 \mydollar. \label{nsex-undercent}
	\end{align}
    We must empty the inner substack and then remove itself in advance whenever we want to reference the right side of the inner substack such as $a_2, a_3$. For example, let us empty the inner substack by popping twice from \eqref{nsex-nested} and then removing it:
	\begin{equation}
		\underline{\# a_1 \mycent\,b_1 b_2 \myspl\! \mydollar}\,a_2 a_3 \mydollar
		\overset{\text{pop}}{\rightarrow}
		\underline{\# a_1 \mycent\,b_1 \myspl\! \mydollar}\,a_2 a_3 \mydollar
		\overset{\text{pop}}{\rightarrow}
		\underline{\# a_1 \mycent\myspl\! \mydollar}\,a_2 a_3 \mydollar
		\overset{\text{destruct}}{\longrightarrow}
		\underline{\# a_1 a_2\myspl\! a_3 \mydollar}.
		\label{nsex-delete}
	\end{equation}
    Notice that the stack pointer moves to the right after removing the inner substack. We now define NSA formally.
	A (1N) nested stack automaton $A$ is a 10-tuple $(Q,\Sigma,\Gamma,\delta,q_0,Z_0,\#,\mycent,\mydollar,F)$ satisfying the following conditions:
            the components $Q$, $\Sigma$, $\Gamma$, $q_0$, $Z_0$, $\#$, $\mydollar$ and $F$ are the same as those of SA.
			The stack symbol $\mycent \notin \Sigma \cup \Gamma$ represents the bottom of a substack.\footnote{Note that the bottom of the entire stack is always represented by $\#$ and not $\mycent$, as mentioned above.}
			The transition function $\delta$ has the following four modes, where $\Gamma ^ \prime \triangleq \Gamma \mydisjointu \syuugou{\mycent}$:
				\begin{romanenumerate}
					\item (pushdown mode) $Q\times \Sigma \times \Gamma\mydollar \to \powerset{Q\times \dir_i \times \Gamma^\ast \mydollar}$.
					\item (stack reading mode)
							(a) $Q \times \Sigma \times \Gamma^\prime\mydollar \to \powerset{Q\times \dir_i\times \syuugou{\myleft}}$,
							(b) $Q \times \Sigma \times \Gamma^\prime \to \powerset{Q\times \dir_i\times \dir_s}$, 
							(c) $Q \times \Sigma \times \syuugou{\#} \to \powerset{Q\times \dir_i\times \syuugou{\myright}}$.
					\item (stack creation mode) $Q \times \Sigma \times (\Gamma^\prime \mydisjointu \Gamma^\prime\mydollar) \to \powerset{Q\times \dir_i \times \syuugou{\mycent} \Gamma^\ast \mydollar}$.
					\item (stack destruction mode) $Q \times \Sigma \times \syuugou{\mycent}\mydollar \to \powerset{Q \times \dir_i}$.
				\end{romanenumerate}
	
    Moreover, we define how NSA works with ID and $\myid{}$ in the same manner as SA.
    Given an NSA $A = (Q,\Sigma,\Gamma,\delta,q_0,Z_0,\#,\mycent,\mydollar,F)$, we define ID, $\myid{A}$, and $L(A)$ in the same way as Definition~\ref{dfn:sa-id}\,(however, we let $I$ be $Q \times \Sigma^\ast \times \syuugou{\#}(\Gamma \mydisjointu \syuugou{\mycent, \mydollar, \,\myspl}) ^ \ast \syuugou{\mydollar}$). Here, we only give the rules corresponding to (iii) and (iv) in the definition of $\delta$ (the others are essentially the same as those of SA):
		\begin{romanenumerate}
			\setcounter{enumi}{2}
			\item if $(q^\prime,d,\mycent w\mydollar) \in \delta(q,a_i,Z)$ and $Z=Z_j$, $1\leq j < n$, then
				
					\phantom{if }$(q,a_i \cdots a_k, \#Z_1 \cdots Z_j \myspl\! \cdots Z_n \mydollar) \myid{A} (q^\prime,a_{i+\mydtoz{d}} \cdots a_k, \#Z_1 \cdots \mycent w \myspl\! \mydollar Z_j \cdots Z_n \mydollar)$,
				
				and $(q,a_i \cdots a_k, \# y Z \myspl\! \mydollar) \myid{A} (q^\prime,a_{i+\mydtoz{d}} \cdots a_k, \# y\,\mycent w \myspl\! \mydollar Z \mydollar)$ if $(q^\prime,d,\mycent w\mydollar) \in \delta(q,a_i,Z\mydollar)$.
			\item $(q,a_i \cdots a_k, \# y_1 \mycent \myspl\! \mydollar Z y_2 \mydollar) \myid{A} (q^\prime,a_{i+\mydtoz{d}} \cdots a_k, \# y_1 Z \myspl\! y_2 \mydollar)$ if $(q^\prime, d) \in \delta(q,a_i,\mycent \mydollar)$.
		\end{romanenumerate}
 \section{Every rewb describes an indexed language}
\label{sec:rewbisig}

As described above, to obtain the language $L(\alpha)$ described by a $k$-rewb $\alpha$, we derive the regular language $\reflang_k(\alpha)$ over the alphabet $\Sigma \mydisjointu B_k \mydisjointu \slice{k}$  first, then apply the dereferencing function $\deref_k$ to every element of $\reflang_k(\alpha)$. Using this observation, we construct an NSA $A_\alpha$ recognizing the language $L(\alpha)$ as follows.

The NSA $A_\alpha$ is based on an NFA $N$ recognizing the language $\reflang_k(\alpha)$, in the sense that each transition in $A_\alpha$ comes from a corresponding transition of $N$. The NFA $N$ has the alphabet $\Sigma \mydisjointu B_k \mydisjointu \slice{k}$, and so handles three types of characters. For each transition $q \myto{a}{N} q^\prime$ with $a \in \Sigma$, i.e., moving from $q$ to $q^\prime$ by an input symbol $a$, $A_\alpha$ also has the same transition except pushing $a$ to the stack, denoted by $q \myto{a/\mydollar \to a\mydollar}{} q^\prime$. 
        For each transition $q \myto{b}{N} q^\prime$ with $b \in B_k$, i.e., moving by a bracket $b$, $A_\alpha$ has the transition pushing $b$ without consuming input symbols, denoted by $q \myto{\varepsilon /\mydollar \to b\mydollar}{} q^\prime$.\footnote{Strictly speaking, our  NFA (cf.~Definition~\ref{dfn:nfa}) does not allow consuming the empty string $\varepsilon$. However, we can realize the transition $q \myto{\varepsilon /\mydollar \to b\mydollar}{} q^\prime$ alternatively by adding  $q \myto{c /\mydollar \to b\mydollar, \mystop}{} q^\prime$ for each $c \in \Sigma$, i.e., moving by $c$ with the input cursor fixed.}
        For each transition $q \myto{i}{N} q^\prime$ with $i \in \slice{k}$, $A_\alpha$ has a large ``transition'' that consists of several transitions. In this ``transition,'' $A_\alpha$ first seeks the left bracket $\lbrack_i$ of the bracketed string $\lbrack_i\, v \rbrack_i$ within the stack, and checks if the input from the cursor position matches $v$ character by character while consuming the input, and finally moves to $q^\prime$  if all characters of $v$ matched.

        A difficult yet interesting point is that NSA cannot check $v$ against the stack and push $v$ onto the stack at the same time, that is, after checking a character $c$ of $v$, if $A_\alpha$ wants to push $c$ to the stack, $A_\alpha$ must leave from $v$, climb up the stack toward the top, and write $c$. However, after the push, $A_\alpha$ becomes lost by not knowing where to go back to. How about marking the place where $A_\alpha$ should return in advance? Unfortunately, that does not work; NSA can insert such marks anywhere by creating substacks, but due to the restriction of NSA, it cannot go above the position of the mark, much less climb up to the top. Therefore, NSA cannot directly push the result of a dereference onto the stack.

        We cope with this problem as follows. We allow $j \in \slice{k}$ to appear in $v$, and for each appearance of $j$ in the checking of $v$, $A_\alpha$ pauses the checking and puts a substack containing the current state as a marker at the stack pointer position. Then, $A_\alpha$ searches down the stack for the corresponding bracketed string $\lbrack_j v^\prime \rbrack_j$, and begins checking $v^\prime$ if it is found. By repeating this process, $A_\alpha$ eventually reaches a string $v^{\prime\prime} \in (\Sigma \mydisjointu B_k) ^ \ast$ containing no characters of $\slice{k}$. Once done with the check of $v^{\prime\prime}$, $A_\alpha$ climbs up toward the stack top, finds a marker $p$ denoting the state to return to, and resumes from $p$ after deleting the substack containing the marker. By repeating this, if $A_\alpha$ returns to the position where it initially found $j$, it has successfully consumed the substring of the input string corresponding to the dereference of $j$.
        The following lemma is immediate.	
    \begin{lemma}\label{lem:basenfa} Let $k \geq 1$ and $\alpha \in \rewb _ k$. There exists an NFA $(Q,\Sigma \mydisjointu B_k \mydisjointu \slice{k},\delta,q_0,F)$ over $\Sigma \mydisjointu B_k \mydisjointu [k]$ recognizing $\reflang_k(\alpha)$ all of whose states can reach some final state, that is, $\forall q \in Q.\, \exists w \in (\Sigma \mydisjointu B_k \mydisjointu \slice{k})^\ast.\, \exists q_{f} \in F.\, q \mytto{w}{N} q_{f}$.
	\end{lemma}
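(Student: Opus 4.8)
The plan is to read off from Definition~\ref{dfn:reflang} that $\reflang_k(\alpha)$ is an ordinary regular language over the finite alphabet $\Sigma \mydisjointu B_k \mydisjointu \slice{k}$, invoke the standard construction of an NFA from a regular expression, and then trim the result so that every state can reach a final state.

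First I would show, by structural induction on $\alpha \in \rewb_k$, that $\reflang_k(\alpha)$ is regular over $\Sigma \mydisjointu B_k \mydisjointu \slice{k}$. This is immediate because the defining clauses of $\reflang_k$ are exactly the semantics of regular expressions over that alphabet: the base cases $\reflang_k(a) = \syuugou{a}$ and $\reflang_k(\bs i) = \syuugou{i}$ are finite, the clauses for $\alpha_0\alpha_1$, $\alpha_0+\alpha_1$, and $\alpha^\ast$ are concatenation, union, and Kleene star, and $\reflang_k((_i\alpha)_i) = \syuugou{\lbrack_i}\reflang_k(\alpha)\syuugou{\rbrack_i}$ is concatenation with singletons. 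Equivalently, replacing each $\bs i$ by the letter $i$ and each pair $(_i\,\cdot\,)_i$ by $\lbrack_i\,\cdot\,\rbrack_i$ turns $\alpha$ into a genuine regular expression over $\Sigma \mydisjointu B_k \mydisjointu \slice{k}$ denoting $\reflang_k(\alpha)$. By Kleene's theorem this yields an NFA $N_0 = (Q_0, \Sigma \mydisjointu B_k \mydisjointu \slice{k}, \delta_0, q_0, F_0)$ with $L(N_0) = \reflang_k(\alpha)$.

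Then I would trim $N_0$: let $Q$ be the set of states $q \in Q_0$ such that $q \mytto{w}{N_0} q_f$ for some $w$ and some $q_f \in F_0$, and let $N$ be the restriction of $N_0$ to $Q$ (so $F = F_0 \cap Q$ and $\delta$ is $\delta_0$ restricted to $Q$). No accepting run of $N_0$ passes through a state outside $Q$, so $L(N) = L(N_0) = \reflang_k(\alpha)$, and by construction every state of $N$ can reach a final state. It remains only to note that $N$ is well-formed, i.e.\ $q_0 \in Q$: this holds because $\reflang_k(\alpha) \neq \emptyset$ for every $\alpha \in \rewb_k$, which follows by an easy induction (every $\reflang_k$-clause preserves nonemptiness, and $\varepsilon \in \reflang_k(\alpha^\ast)$), hence some accepting run starts at $q_0$. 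There is no real obstacle here; the statement is essentially ``a regular language has a trim NFA,'' and the only points needing a sentence of justification are that the clauses of $\reflang_k$ coincide with the semantics of regular expressions over the extended alphabet, and that no rewb denotes the empty ref-word language, which is what guarantees that the start state survives trimming.
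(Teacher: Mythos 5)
Your proposal is correct and matches what the paper intends: the paper states this lemma as ``immediate'' without further proof, and the implicit argument is exactly yours --- $\reflang_k(\alpha)$ is by definition the language of a regular expression over $\Sigma \mydisjointu B_k \mydisjointu \slice{k}$, so Kleene's theorem gives an NFA, which is then trimmed to its co-accessible part. Your extra observation that $\reflang_k(\alpha) \neq \emptyset$ (so the start state survives trimming) is a worthwhile detail the paper leaves unsaid.
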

	
    \begin{corollary} \label{cor:extend} Let $N$ be the NFA in Lemma~\ref{lem:basenfa}. For all $q \in Q$ and for all $w \in (\Sigma \mydisjointu B_k \mydisjointu \slice{k})^\ast$, if $q_0 \mytto{w}{N} q$ then $w$ is matching (see \fullversion{Appendix~\ref{app:extend}}{the full version~\cite{nogami2023expressive}} for the proof).
	\end{corollary}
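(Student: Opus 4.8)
The plan is to derive this as a quick consequence of the three lemmas about matching strings (Lemmas~\ref{lem:deref}--\ref{lem:refw}) together with the reachability property of $N$ guaranteed by Lemma~\ref{lem:basenfa}. The key observation is that the hypothesis ``$q_0 \mytto{w}{N} q$'' only tells us $w$ is a \emph{prefix} of some word accepted by $N$; to invoke Lemma~\ref{lem:refw} we first need to complete $w$ to a full accepted word, and this is exactly what the extra property of $N$ from Lemma~\ref{lem:basenfa} buys us.

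Concretely, I would argue as follows. Suppose $q_0 \mytto{w}{N} q$. By Lemma~\ref{lem:basenfa}, every state of $N$ reaches some final state, so there exist $w^\prime \in (\Sigma \mydisjointu B_k \mydisjointu \slice{k})^\ast$ and $q_f \in F$ with $q \mytto{w^\prime}{N} q_f$. Composing the two runs gives $q_0 \mytto{w w^\prime}{N} q_f$, hence $w w^\prime \in L(N) = \reflang_k(\alpha) \subseteq \refwords_k$. By Lemma~\ref{lem:refw}, $w w^\prime$ is matching. Since $w$ is a prefix of the matching string $w w^\prime$, Lemma~\ref{lem:prefix} gives that $w$ is matching, which is the claim.

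There is essentially no hard step here; the only thing to be careful about is not to conflate ``$w$ leads from $q_0$ to some state'' with ``$w$ is accepted,'' which is why the run-completion step using Lemma~\ref{lem:basenfa} is stated explicitly. (If $N$ had useless states, a prefix read into a dead end need not be matching, so the reachability hypothesis is genuinely used.) Everything else is bookkeeping: concatenating two accepting-style runs of an NFA, and applying the prefix-closure of matchingness.
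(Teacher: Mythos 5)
Your proof is correct and is essentially identical to the paper's own: both complete the run to a final state via the reachability property of Lemma~\ref{lem:basenfa}, conclude $ww^\prime \in L(N) = \reflang_k(\alpha)$, apply Lemma~\ref{lem:refw} to get that $ww^\prime$ is matching, and then use the prefix-closure of matchingness (Lemma~\ref{lem:prefix}). No differences worth noting.
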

    We show the main theorem (the proof sketch is coming later):
	\begin{theorem} \label{thm:main} 
        For every rewb $\alpha \in \rewb$, there exists an NSA that recognizes $L(\alpha)$.
	\end{theorem}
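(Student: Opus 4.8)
The plan is to make precise the informal construction sketched above: starting from the NFA $N = (Q, \Sigma \mydisjointu B_k \mydisjointu \slice{k}, \delta, q_0, F)$ of Lemma~\ref{lem:basenfa} recognizing $\reflang_k(\alpha)$, I would build an NSA $A_\alpha$ whose main stack simultaneously (i) simulates the run of $N$ by pushing, symbol by symbol, a word $v \in \reflang_k(\alpha)$ onto the stack, and (ii) after encountering a reference symbol $i \in \slice{k}$ in that word, consumes from the input exactly the string $\deref_k$ would substitute for $i$. So a configuration of $A_\alpha$ during normal simulation has a stack of the form $\# v' \myspl\! \mydollar$ where $v'$ is a prefix of some word of $\reflang_k(\alpha)$ with the pointer at the top; a transition $q \myto{a}{N} q'$ with $a \in \Sigma$ becomes ``consume $a$ from input, push $a$, go to $q'$'', a transition $q \myto{b}{N} q'$ with $b \in B_k$ becomes ``push $b$ without consuming input, go to $q'$'' (realized as in the footnote by fixing the cursor), and a transition $q \myto{i}{N} q'$ triggers the dereferencing subroutine described below. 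By Corollary~\ref{cor:extend}, any prefix $v'$ that appears on the stack is matching, so by Lemma~\ref{lem:prefix} the relevant bracketed substring $\lbrack_i\,v_{[\cdot]}\rbrack_i$ is well-formed whenever the subroutine is invoked and the subroutine's behavior is well-defined.

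The heart of the construction is the dereferencing subroutine, invoked in state $q$ with a reference $i$. Using stack-reading mode, $A_\alpha$ walks the pointer left from the top, searching for the nearest unmatched $\lbrack_i$ (keeping a finite count of $\rbrack_i/\lbrack_i$ pairs seen, which fits in the state since we only need ``is the bracket-$i$ depth zero''). Two cases: if no such $\lbrack_i$ exists, $\deref_k$ substitutes $\varepsilon$, so $A_\alpha$ consumes nothing, walks the pointer back to the top, and jumps to $q'$. If $\lbrack_i$ is found at some position, $A_\alpha$ now scans the content $v$ of $\lbrack_i\,v\,\rbrack_i$ from left to right; for each $\Sigma$-symbol $c$ it reads off the stack it must consume a matching $c$ from the input; each bracket symbol in $v$ it skips over; and each nested reference $j \in \slice{k}$ occurring in $v$ is handled recursively — here is where nesting of substacks is used. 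On hitting such a $j$, $A_\alpha$ creates a substack $\mycent\,p\,\mydollar$ (where $p$ is a finite marker encoding ``resume scanning $v$ just after this $j$, then continue toward $\rbrack_i$, then return to $q$''), which pins the current position, and then recursively searches further down for the matching $\lbrack_j$. When a reference resolves to a bracketed string with no further $\slice{k}$-symbols inside, $A_\alpha$ finishes consuming its characters, climbs back up to the top, finds the topmost marker substack $\mycent\,p\,\mydollar$, reads $p$, destroys that substack (stack-destruction mode, which moves the pointer one symbol right, back into the enclosing content), and resumes as $p$ dictates. When the scan of the original $v$ reaches $\rbrack_i$ with no markers left, $A_\alpha$ walks the pointer back to the top and transitions to $q'$.

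I expect the main obstacle to be the bookkeeping that makes the recursive descent provably correct: arguing that the finite marker alphabet and the finite bracket-depth counters genuinely suffice, that the substack discipline exactly mirrors the recursion tree of $\deref_k$, and — most delicately — that after any nested resolution $A_\alpha$ always climbs back to precisely the right marker. The cleanest way is an invariant: whenever $A_\alpha$ is about to resume from a destroyed marker, the word already consumed from the input equals the prefix of $\deref_k(v)$ corresponding to the portion of the bracketed string scanned so far, where $v$ is the word $N$ has produced on the stack; Lemma~\ref{lem:deref} gives the target decomposition $g(v_0)g(v_{[1]})g(v_1)\cdots g(v_{[m]})g(v_m)$ against which this invariant is checked, and Lemma~\ref{lem:prefix} guarantees the $v_{[r]}$ seen on a prefixed stack agree with those of a full word of $\reflang_k(\alpha)$. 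One then shows by induction on the structure of a word $v \in \reflang_k(\alpha)$ that the subroutine started at $\lbrack_i$ terminates having consumed exactly $\deref_k(v')$ for the relevant subword $v'$, with the pointer and state correctly restored; combined with the $N$-simulation, this yields $L(A_\alpha) = \deref_k(\reflang_k(\alpha)) = L(\alpha)$. Termination is not an issue because each recursive call descends to a strictly lower stack position and the word on the stack is finite.
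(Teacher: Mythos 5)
Your overall architecture is the same as the paper's: simulate the underlying NFA for $\reflang_k(\alpha)$ by pushing the ref-word symbol-by-symbol onto the stack, and resolve each reference by a recursive descent that plants embedded substacks as return markers; the invariant you propose (input consumed so far equals the dereferenced prefix, checked against the decomposition of Lemma~\ref{lem:deref}, with Lemma~\ref{lem:prefix} and Corollary~\ref{cor:extend} guaranteeing well-formedness of stack prefixes) is exactly what the paper's Lemmas~\ref{lem:12} and~\ref{lem:3} formalize. The remaining work is the two-directional simulation argument, which is substantial but your plan for it is the right one.

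There is, however, one concrete error in your dereferencing subroutine: you search for ``the nearest \emph{unmatched} $\lbrack_i$'' using a bracket-depth counter. By Corollary~\ref{cor:extend} every stack content reachable during the simulation is a matching string, and the matching property says precisely that \emph{every} $\lbrack_i$ occurring to the left of a reference $i$ is already closed by a $\rbrack_i$ before that reference. So a ``depth returns to zero / unmatched'' test finds no candidate, your subroutine falls into the $\varepsilon$-substitution branch on every reference, and the resulting automaton recognizes the wrong language. The correct target, per the definition of $\deref_k$, is simply the \emph{left-nearest} $\lbrack_i$, with no depth bookkeeping at all; no counter is ever needed because the syntax of rewbs (the side condition $j \in \slice{k} \setminus \varf(\alpha_0)$ in $(_j\,\alpha_0\,)_j$) forbids nested captures with the same label, so there is never more than one candidate to disambiguate. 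The paper's call mode $\symcall_i$ accordingly just walks left until it hits the first $\lbrack_i$ or the stack bottom. With that one fix, your construction coincides with the paper's.
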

	
	The claim obviously holds when $\alpha$ is a pure regular expression (i.e., \,$\alpha \in \rewb_0$). Suppose that $\alpha \in \rewb_k$ with $k \geq 1$. By Lemma~\ref{lem:basenfa}, there is an NFA $N = (Q_N, \Sigma \mydisjointu B_k \mydisjointu [k], \delta_N, q_0, F)$ that recognizes $\reflang_k(\alpha)$ and satisfies Corollary~\ref{cor:extend}. We construct an NSA $A_\alpha = (Q,\Sigma,\Gamma,\delta,q_0,Z_0,\#,\mycent,\mydollar,F)$ as follows. Let $Q \triangleq Q_N \mydisjointu \syuugou[\symcall_i, \symexec_i, \symret_i]{i \in \slice{k}} \mydisjointu \syuugou[W_q]{q \in Q_N} \mydisjointu \syuugou[E_{p,i}, L_{p,i}]{p \in Q_N \mydisjointu \syuugou[\symexec_i]{i \in \slice{k}}, i \in \slice{k}}$, $\Gamma \triangleq \Sigma \mydisjointu B_k \mydisjointu \slice{k} \mydisjointu Q \mydisjointu \syuugou{Z_0}$, and let $\delta$ be the smallest relation that, for all $a \in \Sigma$, $b \in B_k$, $c \in \Sigma$, $i,j \in \slice{k}$, $q, q^\prime \in Q_N$, $Z \in \Gamma$ and $p \in Q_N \mydisjointu \syuugou[\symexec_i]{i \in \slice{k}}$, satisfies the following conditions:
    
    	\begin{figure}[htb]
    		\begin{minipage}[t]{0.535\linewidth}
		\begin{bracketenumerate} 
			\item $\delta_N(q,a) \ni q^\prime$ $\Longrightarrow$ $\delta(q,a,Z\mydollar) \ni (q^\prime, \myright, Za\mydollar)$ %
			\item $\delta_N(q,b) \ni q^\prime$ $\Longrightarrow$ $\delta(q,c,Z\mydollar) \ni (q^\prime, \mystop, Zb \mydollar)$ %
			\item $\delta_N(q,i) \ni q^\prime$ $\Longrightarrow$ $\delta(q,c,Z\mydollar) \ni (W_{q^\prime}, \mystop, Zi\mydollar)$ %
			\item $\delta(W_q,c,i\mydollar) = \syuugou{(\symcall_i,\mystop,\mycent q \mydollar)}$ %
			\item $\delta(\symcall_i,c,p \mydollar) = \syuugou{(\symcall_i,\mystop,\myleft)}$ %
			\item $\delta(\symcall_i,c,Z) = \syuugou{(\symcall_i, \mystop, \myleft)}$ where $Z \neq [_i, Z_0$ %
			\item $\delta(\symcall_i,c,Z_0) = \syuugou{(\symret_i, \mystop, \myright)}$ %
			\item $\delta(\symcall_i,c,[_i) = \syuugou{(\symexec_i,\mystop,\myright)}$ %

			\item $\delta(\symexec_i,a,a) = \syuugou{(\symexec_i,\myright,\myright)}$ %
		\end{bracketenumerate}
		\end{minipage}
		\begin{minipage}[t]{0.4505\linewidth}
        \begin{bracketenumerate}
			\setcounter{enumi}{9}
			\item $\delta(\symexec_i,c,[_j) = \syuugou{(\symexec_i,\mystop,\myright)}$ where $i \neq j$ %
			\item $\delta(\symexec_i,c,]_j) = \begin{cases}
 						\syuugou{(\symret_i,\mystop,\myright)}\; (i = j) \\
 						\syuugou{(\symexec_i,\mystop,\myright)}\; (i \neq j)
 			\end{cases}$ %
 			\item $\delta(\symexec_i,c,j) = \syuugou{(c_j,\mystop, \mycent \symexec_i \mydollar)}$ where $i \neq j$ %
			\item $\delta(\symret_i,c,Z) = \syuugou{(\symret_i,\mystop,\myright)}$
			\item $\delta(\symret_i,c,p\mydollar) = \syuugou{(E_{p,i},\mystop,\mydollar)}$
			\item $\delta(E_{p,i},c,\mycent\mydollar) = \syuugou{(L_{p,i},\mystop)}$
			\item $\delta(L_{\symexec_j,i},c,i) = \syuugou{(\symexec_j,\mystop,\myright)}$
			\item $\delta(L_{q,i},c,i\mydollar) = \syuugou{(q,\mystop,\mystop)}$
		\end{bracketenumerate}
		\end{minipage}
		\end{figure}
		
     Rule (1) translates $q \myto{a}{N} q^\prime$ into $q \myto{a/\mydollar\to a\mydollar}{} q^\prime$, (2) translates $q \myto{b}{N} q^\prime$ into $q \myto{\varepsilon/\mydollar\to b \mydollar}{} q^\prime$, and rules (3)--(17) translate $q \myto{i}{N} q^\prime$ into a large ``transition'' to consume the string that corresponds to the dereference of $i$. The details of the ``transition'' are as follows.
    By looking at the underlying $N$ with rule (3), $A_\alpha$ finds a state $q^\prime$ that it should go back to after going throughout the ``transition,'' and goes to the state $W_{q^\prime}$ by pushing $i$ to the stack. At $W_{q^\prime}$, by rule (4), $A_\alpha$ inserts $\mycent q^\prime \mydollar$ just below $i$, and goes to the state $\symcall_i$. The state $\symcall_i$ represents the \emph{call mode} in which $A_\alpha$ looks for the left-nearest $\lbrack_i$ by rules (5) and (6) and proceeds to the state $\symexec_i$ (\emph{execution mode}) by (8) if it finds $\lbrack_i$. Otherwise (i.e., the case when $A_\alpha$ arrives at the bottom of the stack), it proceeds to the state $\symret_i$ (\emph{return mode}) by rule (7). At $\symexec_i$, $A_\alpha$ consumes input symbols by checking them against the symbols on the stack (rules (9)--(12)). In particular, rule (9) handles the case when the symbols match. Rules (10) and (11) handle the cases when brackets are read from the stack. The first case of (11) handles the case when the right bracket $\rbrack_i$ is read, and the rules handle the other brackets (i.e., $\lbrack_j$ or $\rbrack_j$ with $i \neq j$) by simply skipping them (note that $\lbrack_j = \mathord{\lbrack_i}$ cannot happen since we started from the left-nearest $\lbrack_i$). Reading $j \in \slice{k}$, by rule (12), $A_\alpha$ inserts $\mycent \symexec_i \mydollar$ just below $j$ and goes to $\symcall_j$ to locate the corresponding $\lbrack_j$ (here, $j \neq i$ holds by the definition of the syntax). At $\symret_i$, $A_\alpha$ proceeds to return to the state $p$ that passed the control to $\symcall_i$ (rules (13)--(17)). Since this $p$ was pushed at the stack top, $A_\alpha$ first climbs up to the stack top by rule (13), transits to the state $E_{p,i}$ popping $p$ by (14), then goes to $L_{p,i}$ removing the embedded substack by (15), and finally goes back to $p$ by (16) and (17). A subtle point in the last step is that where the stack pointer should be placed depends on whether $p$ is a state $\symexec_j$ (for some $j \in \slice{k}$) or in $Q_N$. In the former case, after (15) removes the embedded substack $\mycent \symexec_j \mydollar$ that was created just below the call to $i$, the stack pointer points to $i$. However, the stack pointer should shift one more to the right, lest $A_\alpha$ begins to repeat the call reading $i$ again by (12). Therefore, (16) correctly handles the case by doing the shift. In the latter case, as stipulated by (17), the stack pointer should point to the stack top symbol $i$ since $p$ is the state stored at (3).
	
	We state two lemmas used to prove Theorem~\ref{thm:main}. Let $\myid{(n)}$ denote the subrelation of $\myid{}$ derived from the rule $(n)$. The following lemma is immediate from the definition of $\vdash_{(n)}$.

	\begin{lemma} \label{lem:12}
        For all $q, q^\prime \in Q_N$, $w, w^\prime \in \Sigma ^ \ast$, $\gamma, \gamma^\prime \in \Gamma ^ \ast$, 
		\begin{alphaenumerate}
			\item \begin{enumerate}[1.]
 				\item for each $a \in \Sigma$, $(q, aw, \# Z_0 \gamma \myspl\! \mydollar) \vdash_{(1)} (q^\prime, w, \# Z_0 \gamma a \myspl\! \mydollar)$ if $q \myto{a}{N} q^\prime$,
 				\item $\exists a \in \Sigma.\,q\myto{a}{N}q^\prime \land w = aw^\prime \land \beta = Z_0 \gamma a \myspl$ if $(q,w, \# Z_0 \gamma \myspl\! \mydollar) \vdash_{(1)} (q^\prime, w^\prime, \# \beta \mydollar)$,
 			\end{enumerate}
			\item \begin{enumerate}[1.]
 				\item for each $b \in B_k$, $(q, w, \# Z_0 \gamma \myspl\! \mydollar) \vdash_{(2)} (q^\prime, w, \# Z_0 \gamma b \myspl\! \mydollar)$ if $q \myto{b}{N} q^\prime$,
 				\item $\exists b \in B_k.\,q\myto{b}{N}q^\prime \land w = w^\prime \land \beta = Z_0 \gamma b \myspl$ if $(q,w, \# Z_0 \gamma \myspl\! \mydollar) \vdash_{(2)} (q^\prime, w^\prime, \# \beta \mydollar)$.
 			\end{enumerate}
		\end{alphaenumerate}	
	In particular, letting $\mathord{\myid{(1),(2)}} = \mathord{\myid{(1)}} \mydisjointu \mathord{\myid{(2)}}$, we obtain the following statement by repeating (a)$_1$ and (b)$_1$ zero or more times:
For all $v \in (\Sigma \mydisjointu B_k) ^ \ast$, $(q,g(v)\,w, \#Z_0 \gamma \myspl\! \mydollar) \myid{(1),(2)}^\ast (q^\prime, w, \# Z_0 \gamma v \myspl\! \mydollar)$ if $q \mytto{v}{N} q^\prime$.

	\end{lemma}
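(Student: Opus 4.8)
The plan is to prove parts (a) and (b) by directly unfolding clause~(i) (pushdown mode) of Definition~\ref{dfn:sa-id}, and then to derive the concluding iterated statement by induction on $|v|$ (equivalently, on the length of the run of $N$ witnessing $q \mytto{v}{N} q^\prime$).

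For the forward directions (a)$_1$ and (b)$_1$: let $Z$ be the topmost stack symbol, i.e.\ the last letter of $Z_0\gamma$ (so $Z = Z_0$ when $\gamma = \varepsilon$). Given $q \myto{a}{N} q^\prime$, i.e.\ $\delta_N(q,a) \ni q^\prime$, rule~(1) places $(q^\prime, \myright, Za\mydollar)$ in $\delta(q,a,Z\mydollar)$; instantiating clause~(i) with this transition and using $\mydtoz{\myright}=1$ yields exactly $(q, aw, \# Z_0 \gamma \myspl\!\mydollar) \vdash_{(1)} (q^\prime, w, \# Z_0 \gamma a \myspl\!\mydollar)$. Part (b)$_1$ is the same computation via rule~(2): from $q \myto{b}{N} q^\prime$ we get $(q^\prime, \mystop, Zb\mydollar) \in \delta(q, c, Z\mydollar)$ with $c$ the current input symbol, and clause~(i) with $\mydtoz{\mystop}=0$ pushes $b$ while leaving the input unchanged. (One should note in passing that rule~(2) consumes no input but still reads a symbol under the cursor, which is why it is quantified over all $c \in \Sigma$.)

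For the backward directions (a)$_2$ and (b)$_2$: the source state $q$ belongs to $Q_N$, and inspecting rules~(1)--(17) shows that only~(1),~(2), and~(3) have a left-hand side $\delta(q,\cdot,\cdot)$ with $q \in Q_N$; rule~(3) moreover produces a transition into a state $W_{q^\prime} \notin Q_N$, so it cannot give a step ending in $q^\prime \in Q_N$. Since the hypothesis of (a)$_2$ asserts a $\vdash_{(1)}$ step, the transition used is one produced by rule~(1), namely $(q^\prime, \myright, Za\mydollar) \in \delta(q,a,Z\mydollar)$ with $\delta_N(q,a) \ni q^\prime$ and $Z$ the stack top; reading clause~(i) backwards then forces $w = a w^\prime$, $q \myto{a}{N} q^\prime$, and $\beta = Z_0 \gamma a \myspl$. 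Part (b)$_2$ is identical with rule~(1) replaced by rule~(2).

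For the iterated statement, put $\mathord{\myid{(1),(2)}} = \mathord{\myid{(1)}} \mydisjointu \mathord{\myid{(2)}}$ and induct on $|v|$. If $v = \varepsilon$, then $q \mytto{\varepsilon}{N} q^\prime$ forces $q^\prime = q$ (the NFA of Definition~\ref{dfn:nfa} has no $\varepsilon$-transitions) and $g(v) = \varepsilon$, so the claim is reflexivity. If $v = v^{\prime\prime} x$ with $x \in \Sigma \mydisjointu B_k$, factor $q \mytto{v^{\prime\prime}}{N} q^{\prime\prime} \myto{x}{N} q^\prime$; the induction hypothesis applied to $v^{\prime\prime}$, with input $g(v^{\prime\prime})\,g(x)\,w$, reaches $(q^{\prime\prime}, g(x)\,w, \# Z_0 \gamma v^{\prime\prime} \myspl\!\mydollar)$, and then one application of (a)$_1$ (when $x \in \Sigma$, so $g(x) = x$) or (b)$_1$ (when $x \in B_k$, so $g(x) = \varepsilon$) --- with the role of $\gamma$ now played by $\gamma v^{\prime\prime}$ --- appends $x$ to the stack and lands in $(q^\prime, w, \# Z_0 \gamma v^{\prime\prime} x \myspl\!\mydollar) = (q^\prime, w, \# Z_0 \gamma v \myspl\!\mydollar)$. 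I expect no genuine obstacle: the whole lemma is definitional, and the only mild bookkeeping is tracking the input through the homomorphism $g$ and the decomposition of the stack top described above.
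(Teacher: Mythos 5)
Your proof is correct and matches the paper's treatment: the paper declares Lemma~\ref{lem:12} ``immediate from the definition of $\vdash_{(n)}$'' and obtains the iterated statement ``by repeating (a)$_1$ and (b)$_1$,'' which is exactly the unfolding of clause~(i) of Definition~\ref{dfn:sa-id} and the induction on $|v|$ that you spell out. No discrepancy to report.
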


	\begin{lemma} \label{lem:3}
        Suppose that $q \underset{N}{\overset{i}{\longrightarrow}} q^\prime$, and $\gamma i$ is matching. Let $m = \cnt{(\gamma i)}$. For all $p \in Q_N$, $w,w ^ \prime \in \Sigma^\ast$ and $\beta \in (\Gamma \mydisjointu \syuugou{\mycent, \mydollar,\,\myspl})^\ast$, the following (a) and (b) are equivalent (see Appendix~\ref{app:3} for the proof):
		\begin{alphaenumerate}
			\item $p = q^\prime$, $w = g((\gamma i)_{[m]})\,w^\prime$, and $\beta = Z_0 \gamma i \myspl$.
            \item $(q, w, \# Z_0 \gamma \myspl\!\mydollar) \myid{(3)} (W_{q^\prime}, w, \# Z_0 \gamma i \myspl\! \mydollar) \myid{} \cdots \myid{} (p, w^\prime, \# \beta \mydollar)$, where no ID with a state in $Q_N$ appears in the calculation $\cdots$.
		\end{alphaenumerate}
	\end{lemma}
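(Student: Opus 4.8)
The plan is to prove the equivalence by a careful analysis of the computation of $A_\alpha$ on the segment $\cdots$, in which no state from $Q_N$ occurs. The direction (a)$\Rightarrow$(b) is the substantive one: assuming $q\myto{i}{N}q'$ and that $\gamma i$ is matching, I would exhibit the claimed computation explicitly, following the informal description of rules (3)--(17). The direction (b)$\Rightarrow$(a) then follows by a determinism argument: once the machine is in a state outside $Q_N$ the transitions of $A_\alpha$ are essentially deterministic (each of rules (4)--(17) has a singleton right-hand side and the left-hand sides are mutually exclusive on the relevant stack tops), so any computation of the shape in (b) must coincide with the one constructed for (a), forcing $p=q'$, $w=g((\gamma i)_{[m]})\,w'$, and $\beta=Z_0\gamma i\myspl$.

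For (a)$\Rightarrow$(b) I would proceed by induction on $m=\cnt{(\gamma i)}$, or equivalently on the nesting depth of the dereference, strengthening the statement so that it also tracks the intermediate markers on the stack. The base case is when $i$ is the only element of $\slice{k}$ in $\gamma i$ (so $\cnt{(\gamma i)}=1$): after rule (3) pushes $i$ and moves to $W_{q'}$, rule (4) inserts the substack $\mycent q'\mydollar$ just below $i$ and enters call mode $\symcall_i$; rules (5)--(6) walk left past the marker and past every stack symbol other than $[_i$ and $Z_0$; if $\gamma i$ is matching and $i$ is "bound", the left-nearest relevant bracket is $[_i$, so rule (8) enters $\symexec_i$ just to the right of $[_i$. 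In execution mode, rule (9) matches input symbols of $\Sigma$ against the stack content $v$ appearing in the bracketed factor $[_i\,v\,]_i$, rule (10) skips foreign left brackets, and the second case of rule (11) skips foreign right brackets; since no element of $\slice{k}$ occurs inside this $v$ (base case), eventually the first case of rule (11) reads $]_i$ and enters return mode $\symret_i$. By Lemma~\ref{lem:deref} and the definition of $v_{[r]}$, the input consumed so far is exactly $g((\gamma i)_{[1]})$. Then rules (13)--(17) climb to the top, pop the marker $q'$ (rule (14) into $E_{q',i}$), destruct the embedded substack (rule (15) into $L_{q',i}$), and rule (17) returns to $q'$ with the pointer on the top symbol $i$ and stack $Z_0\gamma i\myspl$. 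If instead $i$ is unbound (no $[_i$ to the left), rule (7) sends $\symcall_i$ to $\symret_i$ at $Z_0$, nothing is consumed ($v_{[1]}=\varepsilon$), and the same return sequence applies. For the inductive step, when execution mode reads some $j\in\slice{k}$ inside the current bracketed factor (necessarily $j\neq i$ by the syntax), rule (12) inserts $\mycent\symexec_i\mydollar$ just below $j$ and enters $\symcall_j$; here I invoke the induction hypothesis on the shorter matching prefix (using Lemma~\ref{lem:prefix}, which guarantees the relevant $x_{[r]}=v_{[r]}$) to conclude that the nested call consumes exactly $g$ of the dereferenced value of $j$ and returns via rule (16) — the $\symexec_j$-variant of the return that shifts the pointer one step right past $j$, so execution of the outer factor resumes correctly. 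Concatenating the consumed inputs over all the $n_r$'s yields $g((\gamma i)_{[1]})\cdots g((\gamma i)_{[m]})$, but only the last factor $g((\gamma i)_{[m]})$ is consumed at the outermost level after the final $]_i$; the earlier ones are consumed during the nested returns — so here I need to be careful to phrase the induction hypothesis in terms of the \emph{total} input consumed by a call to (the portion of the stack ending at) a given occurrence, matching the telescoping in Lemma~\ref{lem:deref}.

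The main obstacle is precisely this bookkeeping in the inductive step: correctly formalizing "the input consumed by the sub-computation triggered at the $r$th occurrence equals $g(v_{[r]})$", and ensuring that the markers $\mycent\symexec_i\mydollar$, $\mycent q'\mydollar$ are created, nested, and destructed in the right order so that the pointer always returns to the correct position (in particular that rule (16) versus rule (17) is selected according to whether the restored state is an $\symexec_j$ or lies in $Q_N$). Once the invariant is stated carefully, each individual transition step is routine to check against the definitions of $\delta$ and $\myid{}$, and the "no ID with a state in $Q_N$ appears in $\cdots$" condition is automatic because every state entered strictly between $W_{q'}$ and the final $p$ lies in $Q\setminus Q_N$ by construction of the rules (4)--(17).
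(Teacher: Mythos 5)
Your plan matches the paper's proof in both structure and substance: the paper also establishes (a)$\Rightarrow$(b) by induction on $m=\cnt{(\gamma i)}$ (factored out as an auxiliary lemma about the $\symcall_i\to\symret_i$ segment, with the input bookkeeping handled via a quotient notation $w/s$ over an extended ID set), and obtains (b)$\Rightarrow$(a) from exactly the determinism observation you describe, formalized as a relation $\myonlymove{}$ asserting that each step outside $Q_N$ has a unique successor. The bookkeeping obstacle you flag (telescoping the consumed inputs $g(v_{[r]})$ and the (16)-versus-(17) return) is precisely what the paper's strengthened induction hypothesis resolves, so your proposal is essentially the paper's argument.
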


	\begin{proof}[Proof of Theorem~\ref{thm:main}\,(sketch)]
        For proving $L(\alpha) \subseteq L(A_\alpha)$, we take $w \in L(\alpha)$ and $v \in \reflang_k(\alpha)$ such that $w = \deref_k(v)$. Decomposing $v$ into $v_0 n_1 v_1 \cdots n_m v_m$ (where $m = \cnt{v}$), we obtain a transition sequence in the underlying NFA $N$, denoted by $q_0 \underset{N}{\overset{v_0}{\Longrightarrow}} q_{(0)} \underset{N}{\overset{n_1 v_1}{\Longrightarrow}} q_{(1)} \underset{N}{\overset{n_2 v_2}{\Longrightarrow}} \cdots \underset{N}{\overset{n_m v_m}{\Longrightarrow}} q_{(m)} \in F$. We prove by induction on $r = 0,\dots,m$ that $A_\alpha$ can reach $q_{(r)}$ while consuming $z_r = g(v_0)\,g(v_{[1]})\,g(v_1) \cdots g(v_{[r]})\,g(v_r)$ from the input and pushing $y_r = v_0 n_1 v_1 \cdots n_r v_r$ to the stack. Conversely, we suppose a calculation in $A_\alpha$, denoted by $C_{(1)} = (q_0,w, \# Z_0 \myspl\! \mydollar) \vdash \cdots \vdash C_{(r)} \vdash \cdots \vdash C_{(m)} = (p_m, \varepsilon, \# \beta_m \mydollar)$, where $p_m \in F$ and $C_{(r)} = (p_r, w_r, \# \beta _ {r} \mydollar)$ for each $r \in \syuugou{1,\dots,m}$. By induction on $r=1,\dots,m$, we extract an underlying transition $q_0 \underset{N}{\overset{\gamma_r}{\Longrightarrow}} p_r$ step by step while maintaining the invariants $\gamma_r \in (\Sigma \mydisjointu B_k \mydisjointu \slice{k}) ^ \ast$ and $w = \deref_k(\gamma_r)\,w_r$, as long as $p_{r} \in Q_N$ (the formal proof is available in \fullversion{Appendix~\ref{app:main}}{the full version~\cite{nogami2023expressive}}).
	\end{proof}
	
	\begin{corollary} \label{cor:rewbisig}
	Every rewb describes an indexed language, but not vice versa.
	\end{corollary}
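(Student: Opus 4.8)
The plan is to obtain the corollary directly from Theorem~\ref{thm:main} together with two classical facts recalled in Sections~\ref{sec:intro} and~\ref{sec:related}: that (1N)\,NSA and indexed grammars define exactly the same class of languages (Aho~\cite{aho1969nested}), and that the class of languages described by rewbs is incomparable with the class of CFLs (Berglund and van der Merwe~\cite{berglund2023re}, C\^{a}mpeanu et al.~\cite{campeanu2003formal}). Since the statement is a corollary of the main theorem, I do not expect any genuine difficulty; the work is just in assembling these ingredients.

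For the inclusion, let $\alpha \in \rewb$ be arbitrary. By Theorem~\ref{thm:main} there is an NSA $A_\alpha$ with $L(A_\alpha) = L(\alpha)$. The automaton $A_\alpha$ constructed in Section~\ref{sec:rewbisig} is one-way and nondeterministic, so it is an instance of Aho's (1N)\,NSA model, whose language class coincides with that of indexed grammars. Hence $L(\alpha)$ is an indexed language, and since $\alpha$ was arbitrary, every rewb describes an indexed language.

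For the ``not vice versa'' part, I would exhibit an indexed language that no rewb describes. Because the class of rewb languages is incomparable with the class of CFLs, there is a context-free language $L$ that is not described by any rewb. Every context-free language is an indexed language ($\mathrm{CFL} \subsetneq$ indexed languages, by Aho~\cite{aho1968indexed}), so $L$ is an indexed language lying outside the class of rewb languages, which gives the strictness.

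The only point that deserves a line of care is checking that the NSA produced by Theorem~\ref{thm:main} is of the one-way nondeterministic kind to which Aho's equivalence applies; this holds by construction, since every transition of $A_\alpha$ keeps the input cursor moving only along $\dir_i = \syuugou{\mystop,\myright}$. If a concrete witness for strictness is wanted, one can instead name a specific CFL known from the cited pumping-style arguments to be beyond the reach of rewbs, but this is not needed for the statement.
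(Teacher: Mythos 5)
Your proposal is correct and follows exactly the same route as the paper: the inclusion is obtained from Theorem~\ref{thm:main} plus Aho's equivalence of (1N)\,NSA and indexed grammars, and the strictness from the incomparability of rewbs with CFLs combined with $\mathrm{CFL} \subsetneq$ indexed languages. No substantive difference from the paper's argument.
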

	
	\begin{proof}
        The first half follows by Theorem~\ref{thm:main} since 1N NSA and indexed grammars are equivalent~\cite{aho1969nested}. The second half also follows since the class of CFLs is a subclass of indexed languages~\cite{aho1968indexed}, and the class of rewbs and that of CFLs are incomparable~\cite{berglund2023re}.
    \end{proof}

    In the case of a rewb $\alpha$ without a captured reference (that is, one in which no reference $\bs i$ appears as a subexpression of an expression of the form $(_j \dots )_j$), we can transform $A_\alpha$ into an NESA $A_\alpha^{\prime\prime}$ recognizing $L(\alpha)$, i.e., one that neither uses substacks nor pops its stack. First, we transform $A_\alpha$ to an NSA without substacks (i.e., SA) $A_\alpha^\prime$. Inspecting how substacks are used in $A_\alpha$, we can drop rules (12) and (16) in $A_\alpha^\prime$ because there is no captured reference in $\alpha$. We also remove the uses of substacks from rules (3) and (4), which correspond to calling, and rules (14), (15) and (17), which correspond to returning. Namely, while $A_\alpha$, upon a call, stores the substack $\mycent q^\prime \mydollar$ that consists of just the state $q^\prime$ where the control should return, $A_\alpha^\prime$ simply pushes $q^\prime$ to the stack top. That is, we remove (4), (15) and (17), and change (3) and (14) to the following (3') and (14'), respectively:
	\[
		\text{(3')}\ \delta_N(q,i) \ni q^\prime \Longrightarrow \delta(q,c,Z\mydollar) \ni (\symcall_i,\mystop, Ziq^\prime\mydollar), \qquad \text{(14')}\ \delta(\symret_i,c,q\mydollar) = \syuugou{(q,\mystop,\mydollar)}.
	\]
    Furthermore, we transform $A_\alpha^\prime$ to an SA without stack popping (i.e., NESA) $A_\alpha^{\prime\prime}$. Observe that $A_\alpha^\prime$ pops only when returning via (14') and popping a state that was pushed in a preceding call. Thus, $A_\alpha^{\prime\prime}$, rather than popping $q^\prime$, leaves it on the stack, and has the modes $\symcall_i$, $\symexec_i$ and $\symret_i$ skip all state symbols on the stack except the ones at the top. Here, we only need to modify $\symexec_i$ since $A_\alpha$ already skips them at $\symcall_i$ and $\symret_i$ (rules (6) and (13)). In short, we add the new rule (9*) and change (14') to (14''), as follows:
	\[
		\text{(9*)}\ \delta(\symexec_i,c,q) = \syuugou{(\symexec_i,\mystop,\myright)}, \qquad \qquad \text{(14'')}\ \delta(\symret_i,c,q\mydollar) = \syuugou{(q,\mystop,q\mydollar)}.
	\]
	This NESA $A_\alpha^{\prime\prime}$ whose transition function consists of the rules (1),(2),(3'),(5)--(9),(9*),(10),\\ 
    (11), (13) and (14'') recognizes $L(\alpha)$. Therefore, 
	
	\begin{corollary} \label{cor:refcap}
        Every rewb without a captured reference describes a nonerasing stack language, but not vice versa.\footnote{For the latter part, we can take the language $\syuugou[a^n b^n]{n \in \mynat}$ that can be described by an NESA (see \fullversion{Appendix~\ref{app:anbnNESA}}{the full version~\cite{nogami2023expressive}}) but not by any rewb~\cite{berglund2023re}.}
	\end{corollary}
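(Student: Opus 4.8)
The plan is to make the three-step transformation $A_\alpha \rightsquigarrow A_\alpha^\prime \rightsquigarrow A_\alpha^{\prime\prime}$ sketched just above rigorous; the case $k=0$ is immediate since a regular language is recognised by an NESA, so assume $\alpha \in \rewb_k$ with $k \geq 1$ has no captured reference, and let $N$ and $A_\alpha$ be as in Theorem~\ref{thm:main}. I will show that $A_\alpha$, which recognises $L(\alpha)$ by Theorem~\ref{thm:main}, never exercises the substack-creating rule (12) nor rule (16), so that deleting these rules changes neither the reachable configurations nor the accepted language and leaves an NSA whose only substack is the one-cell marker $\mycent\,q^\prime\,\mydollar$ of rule (4); then that replacing this marker by a pushed state cell yields an equivalent SA $A_\alpha^\prime$; and finally that leaving those state cells permanently on the stack yields an equivalent NESA $A_\alpha^{\prime\prime}$. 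The crux is a purely syntactic observation: if $\alpha$ has no captured reference then, for every $v \in \reflang_k(\alpha)$, no symbol of $\slice{k}$ occurs strictly between an occurrence of $\lbrack_j$ and its matching $\rbrack_j$. This follows by induction on the structure of $\alpha$, since a reference $\bs i$ yields the bracket-free ref-word $i$, a bracket pair is introduced only by a subexpression $(_j \beta)_j$ with $\beta$ reference-free (so that $\reflang_k(\beta) \subseteq (\Sigma \mydisjointu B_k)^\ast$ contains no symbol of $\slice{k}$ at all), and concatenation, union and star preserve the property.

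Next I would combine this with Corollary~\ref{cor:extend} and the computation-structure Lemmas~\ref{lem:12} and~\ref{lem:3} to see that, along any computation of $A_\alpha$ from the initial ID, the stack content outside any current marker substack is always a matching prefix of a ref-word in $\reflang_k(\alpha)$. Consequently, whenever $A_\alpha$ is in execution mode $\symexec_i$, the stack pointer sweeps rightward through the region delimited by the left-nearest $\lbrack_i$ and its matching $\rbrack_i$ — which exists because that prefix is matching — and by the syntactic observation this region contains only symbols of $\Sigma \mydisjointu B_k$. Hence rule (12), which fires only on reading a symbol of $\slice{k}$ in execution mode, is never used; and since a $\symexec_j$-tagged substack is created only by rule (12), the state $L_{\symexec_j,i}$ is unreachable, so rule (16) is never used either. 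Thus $A_\alpha$ with only the rules (1)--(11), (13), (14), (15) and (17) still recognises $L(\alpha)$.

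The equivalences $L(A_\alpha^\prime) = L(A_\alpha)$ and $L(A_\alpha^{\prime\prime}) = L(A_\alpha^\prime)$ would each come from a step-by-step correspondence (a bisimulation) between the reachable configurations of the two machines in question. For $A_\alpha \rightsquigarrow A_\alpha^\prime$: identify a marker substack $\mycent\,q^\prime\,\mydollar$ of $A_\alpha$ with a single stack cell $q^\prime$ of $A_\alpha^\prime$ at the same position and verify lock-step agreement, the macro-step ``(3) then (4)'' of $A_\alpha$ being mirrored by rule (3') and ``(14) then (15) then (17)'' by rule (14'), every other rule translating one-to-one. For $A_\alpha^\prime \rightsquigarrow A_\alpha^{\prime\prime}$: maintain the invariant that the stack of $A_\alpha^{\prime\prime}$ equals that of $A_\alpha^\prime$ except that every marker cell $q^\prime$ that $A_\alpha^\prime$ would have popped is kept in place, together with the fact that the scanning modes of $A_\alpha^{\prime\prime}$ pass over such retained cells transparently — call and return modes already do so via rules (6) and (13), and execution mode via the new rule (9*). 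Since rule (14'') rewrites $q^\prime\mydollar$ to $q^\prime\mydollar$ rather than erasing it, $A_\alpha^{\prime\prime}$ uses neither substacks nor popping, hence is an NESA; chaining the two equalities gives $L(A_\alpha^{\prime\prime}) = L(\alpha)$, which is the first half.

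I expect the real obstacle to be the step $A_\alpha^\prime \rightsquigarrow A_\alpha^{\prime\prime}$: one must be sure that the marker cells, once they are permanent fixtures of the stack, never derail a later dereference — neither misaligning a character comparison in execution mode nor re-enabling the discarded rule (12). This is exactly where the syntactic observation is used once more: on the stack of $A_\alpha^{\prime\prime}$ every state cell sits immediately to the right of a symbol of $\slice{k}$, so the observation forbids any state cell — and any symbol of $\slice{k}$ — from lying strictly between a $\lbrack_i$ and its matching $\rbrack_i$; hence execution mode, which only ever sweeps such a region, meets only symbols of $\Sigma \mydisjointu B_k$ and halts at $\rbrack_i$ precisely as in $A_\alpha$. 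Finally, for the ``but not vice versa'' part, the language $\syuugou[a^n b^n]{n \in \mynat}$ is recognised by an NESA that pushes the block of $a$'s and then matches the $b$'s against the stack without erasing, yet it is the language of no rewb at all by~\cite{berglund2023re}, a fortiori of no rewb without a captured reference.
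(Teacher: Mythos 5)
Your proposal is correct and follows essentially the same route as the paper: the paper's own argument is exactly the three-step transformation $A_\alpha \rightsquigarrow A_\alpha^\prime \rightsquigarrow A_\alpha^{\prime\prime}$ (dropping rules (12) and (16) because no symbol of $\slice{k}$ can occur between matching brackets when there is no captured reference, replacing the marker substack by a pushed return state via (3')/(14'), then retaining those state cells via (9*)/(14'')), together with the $\syuugou[a^n b^n]{n \in \mynat}$ example for the converse. Your write-up merely makes explicit the invariants the paper leaves informal.
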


    Note that the converse of Corollary~\ref{cor:refcap} fails to hold. In other words, there is a rewb with a captured reference that describes a nonerasing stack language. The rewb $(_1 a )_1 (_2 \bs 1 )_2 \bs 2$ is a simple counterexample. In addition, as shown later in Section~\ref{sec:larsen98isnonerasingsl}, NESA can recognize nontrivial language (hierarchy) with a captured reference such as Larsen's hierarchy~\cite{larsen1998regular}.
 \section{A rewb that describes a non-stack language}
\label{sec:rewbisnotsl}

We just showed that every rewb describes an indexed language and in particular every rewb without a captured reference describes a nonerasing stack language. So, a natural question is whether every rewb describes a (nonerasing) stack language. We show that the answer is \emph{no}. That is, there is a rewb that describes a non-stack language.

Ogden has proposed a pumping lemma for stack languages and shown that the language $\syuugou[a^{n^3}]{n \in \mynat}$ is a non-stack language as an application (see \cite{ogden1969intercalation}, Theorem~2). A key point in the proof is that the exponential $n^3$ of $a$ is a cubic polynomial, and we can show that for every cubic polynomial $f: \mynat \to \mynat$, the language $\syuugou[a^{f(n)}]{n \in \mynat}$ is also a non-stack language by the same proof. Thus, a rewb that describes a language in this form is a counterexample. 
We borrow the technique in \cite{freydenberger2019deterministic} (Example~1) which shows that the rewb $\alpha = ((_1 \bs 2 )_1 (_2 \bs 1 a )_2 )^\ast$ describes $L(\alpha) = \syuugou[a^{n^2}]{n \in \mynat}$. This follows since $\deref_k((\lbrack_12\rbrack_1\,\lbrack_2 1\,a\rbrack_2)^n) = a^{n^2}$ holds by recording the iteration count of the Kleene star, $n$, in the capture $(_2\,)_2$ as $a^n$, and extending the length by $2n+1$, as shown below:
	\begin{align*}
		&\deref_k((\lbrack_12\rbrack_1\,\lbrack_2 1\,a\rbrack_2)^{n+1}) = \deref_k((\lbrack_12\rbrack_1\,\lbrack_2 1\,a\rbrack_2)^n \lbrack_12\rbrack_1\,\lbrack_2 1\,a\rbrack_2) = \deref_k(\cdots \underline{\lbrack_2 a^n \rbrack_2}\, \lbrack_12\rbrack_1\,\lbrack_2 1\,a\rbrack_2 ) \\
        &= \deref_k(\cdots \lbrack_2 a^n \rbrack_2\, \underline{\lbrack_1 a^n \rbrack_1}\,\lbrack_2 1\,a\rbrack_2 ) = \deref_k(\uuline{\cdots \lbrack_2 a^n \rbrack_2}\, \lbrack_1 a^n \rbrack_1\,\lbrack_2 a^{n+1}\rbrack_2 ) = \uuline{\vphantom{\lbrack_2}a^{n^2}} \, a^{2n+1} = a^{(n+1)^2}.
	\end{align*}
        The rewb $((_1 \bs 4\,a )_1\,(_2 \bs 3 )_2\,(_3 \bs 2\,a)_3\,(_4 \bs 1 \bs 3 )_4)^\ast$ describes \syuugou[a^{n(n+7)(2n+1)/6}]{n \in \mynat} and extends the length by a quadratic in $n$ instead (see \fullversion{Appendix~\ref{app:calc}}{the full version~\cite{nogami2023expressive}} for the calculation). Thus,
	\begin{theorem} \label{thm:antisl} There exists a rewb that describes a non-stack language.
	\end{theorem}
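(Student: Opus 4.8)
The plan is to identify the language of the rewb $\alpha = ((_1 \bs 4\,a )_1\,(_2 \bs 3 )_2\,(_3 \bs 2\,a)_3\,(_4 \bs 1 \bs 3 )_4)^\ast$ exactly, show that it has the shape $\syuugou[a^{f(n)}]{n \in \mynat}$ for a polynomial $f$ of degree $3$, and then conclude by the stated generalization of Ogden's result: for every cubic polynomial $f : \mynat \to \mynat$ the unary language $\syuugou[a^{f(n)}]{n \in \mynat}$ is not a stack language, because the intercalation argument of \cite{ogden1969intercalation} that rules out $\syuugou[a^{n^3}]{n \in \mynat}$ applies, with only cosmetic changes to the constants, to any cubic exponent. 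Since $\alpha$ is a rewb, this exhibits a rewb describing a non-stack language.

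For the first part I would compute $\deref_k$ on the ref-words of $\alpha$. Each ref-word of $\alpha$ has the form $\bigl(\lbrack_1 4\,a\rbrack_1 \lbrack_2 3\rbrack_2 \lbrack_3 2\,a\rbrack_3 \lbrack_4 1\,3\rbrack_4\bigr)^{N}$ for some $N \geq 0$, so by Lemma~\ref{lem:deref} its dereference is obtained by iterating the effect of one block $N$ times, where within each block the four captures are rebound from left to right and each reference $\bs i$ is resolved to the content of the nearest preceding $\lbrack_i \cdots \rbrack_i$; in particular $\bs 4$ inside the first group reads the capture-$4$ content produced by the previous block (or $\varepsilon$ on the first block), while $\bs 1$ and $\bs 3$ inside the last group read the contents just produced in the current block. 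Writing $x_n, y_n, z_n, w_n$ for the lengths of the contents of captures $1,2,3,4$ after $n$ blocks (all contents being powers of $a$), one reads off $x_{n+1} = w_n + 1$, $y_{n+1} = z_n$, $z_{n+1} = z_n + 1$, $w_{n+1} = w_n + z_n + 2$ with $x_0 = y_0 = z_0 = w_0 = 0$, hence $z_n = n$ and $w_n = n(n+3)/2$. Since the string matched after $N$ iterations of the Kleene star is the concatenation over the $N$ blocks of the four capture contents, its length is $\sum_{n=1}^{N}(w_{n-1} + w_n + 2n) = N(N+7)(2N+1)/6$, a quantity that is moreover always a natural number (e.g.\ compare it with $N(N+1)(2N+1)/6$). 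This yields $L(\alpha) = \syuugou[a^{N(N+7)(2N+1)/6}]{N \in \mynat}$. In contrast to the two-capture $\syuugou[a^{n^2}]{n \in \mynat}$ example of \cite{freydenberger2019deterministic}, here four mutually referencing captures must be tracked at once, so the full calculation, while routine, is bookkeeping-heavy and I would relegate it to an appendix.

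For the second part, $f(N) = N(N+7)(2N+1)/6 = \tfrac{1}{3}N^3 + \tfrac{5}{2}N^2 + \tfrac{7}{6}N$ is a polynomial of degree exactly $3$ with positive leading coefficient, so $\syuugou[a^{f(N)}]{N \in \mynat}$ is a non-stack language and we are done. I expect the main obstacle to lie here rather than in the calculation: one must isolate precisely which feature of the exponent $n^3$ Ogden's proof exploits --- namely that the set of attained lengths is so sparse (successive gaps growing quadratically, with enough regularity of the higher differences) that it cannot contain the multi-parameter family of pumped word lengths forced by the stack pumping lemma --- and then verify that this feature is shared by every cubic polynomial. Making the phrase ``by the same proof'' precise is the crux of the argument.
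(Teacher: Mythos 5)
Your proposal is correct and follows essentially the same route as the paper: the same rewb $((_1 \bs 4\,a )_1\,(_2 \bs 3 )_2\,(_3 \bs 2\,a)_3\,(_4 \bs 1 \bs 3 )_4)^\ast$, the same resulting language $\syuugou[a^{N(N+7)(2N+1)/6}]{N \in \mynat}$ (your length recurrences reproduce exactly the capture contents $a^{n(n+1)/2}, a^{n-1}, a^{n}, a^{n(n+3)/2}$ and the per-block increment $n^2+4n-1$ of the paper's appendix calculation), and the same appeal to the observation that Ogden's intercalation argument for $\syuugou[a^{n^3}]{n\in\mynat}$ carries over to any cubic exponent. The point you flag as the crux --- making ``by the same proof'' precise for arbitrary cubic polynomials --- is left at exactly the same level of detail in the paper itself.
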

    From this and Corollary~\ref{cor:refcap}, this rewb needs a captured reference, in the sense that:
    \begin{corollary} \label{cor:needrefcap} There exists a rewb that describes a language that no rewb without a captured reference can describe.
	\end{corollary}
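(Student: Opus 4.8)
The plan is to obtain Corollary~\ref{cor:needrefcap} as an immediate consequence of Theorem~\ref{thm:antisl} and Corollary~\ref{cor:refcap}, via a short contradiction argument. Let $\beta$ be the rewb furnished by Theorem~\ref{thm:antisl}, so that $L(\beta)$ is a non-stack language; concretely one may take $\beta = ((_1 \bs 4\,a )_1\,(_2 \bs 3 )_2\,(_3 \bs 2\,a)_3\,(_4 \bs 1 \bs 3 )_4)^\ast$, whose language is $\syuugou[a^{n(n+7)(2n+1)/6}]{n \in \mynat}$ and hence non-stack, since the exponent is a cubic polynomial in $n$ and Ogden's pumping lemma~\cite{ogden1969intercalation} excludes all such languages from the stack languages. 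The claim is that no rewb without a captured reference describes $L(\beta)$.

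Suppose, for contradiction, that there were a rewb $\gamma$ without a captured reference with $L(\gamma) = L(\beta)$. By Corollary~\ref{cor:refcap}, $L(\gamma)$ is a nonerasing stack language. Since the class of nonerasing stack languages is a proper subclass of the class of stack languages (as recalled in Section~\ref{sec:intro}, following Ogden~\cite{ogden1969intercalation}), $L(\gamma)$ is in particular a stack language, and therefore so is $L(\beta) = L(\gamma)$. This contradicts the choice of $\beta$. Hence no rewb without a captured reference describes $L(\beta)$, which is precisely the statement of Corollary~\ref{cor:needrefcap}; equivalently, $\beta$ is a rewb that \emph{needs} capturing of references.

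I expect no real obstacle here: the corollary is a direct combination of two already-established results, and the only point worth double-checking is the inclusion of nonerasing stack languages into stack languages, which we use as a black box from the literature. One could additionally note that the argument is robust, going through verbatim for any rewb describing $\syuugou[a^{f(n)}]{n \in \mynat}$ with $f$ a cubic polynomial, since the same instance of Ogden's pumping lemma rules out every such language from the stack languages.
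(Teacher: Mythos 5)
Your proof is correct and is essentially the paper's own argument: the paper derives Corollary~\ref{cor:needrefcap} directly by combining Theorem~\ref{thm:antisl} with Corollary~\ref{cor:refcap}, using the inclusion of nonerasing stack languages in stack languages exactly as you do. You have merely made the short contradiction argument explicit.
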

 \section{Larsen's hierarchy is within the class of nonerasing stack language}
\label{sec:larsen98isnonerasingsl}
In this section, we construct an NESA $A_i$ that describes $L(x_i)$, where the rewb $x_i$ over the alphabet $\Sigma = \syuugou{a_0^l, a_0^m, a_0^r, a_1^l, a_1^m, a_1^r, \dots}$ is given by Larsen~\cite{larsen1998regular} and defined as follows: $x_0 \triangleq (a_0^l a_0^m a_0^r)^\ast$, $x_{i+1} \triangleq (a_{i+1}^l (_{i}\,x_{i}\,)_{i}\, a_{i+1}^m \bs i\,a_{i+1}^r) ^ \ast$ $(i \geq 0)$. Our result implies that Larsen's hierarchy is within the class of nonerasing stack languages.
Since Larsen showed that no rewb with its nested level less than $i$ can describe $L(x_i)$~\cite{larsen1998regular}, it also implies that for every $i \in \mathbb{N}$, there is a nonerasing stack language that needs a rewb of nested level at least $i$.\footnote{Technically, Larsen~\cite{larsen1998regular} adopts a syntax that excludes unbound references, and so this implied result applies only to rewbs with no unbound references.}
	\begin{figure}[htb] \centering
			\begin{minipage}{0.25\linewidth}
				\begin{tikzpicture}[
						shorten >=1pt,
						node distance=2.25cm,
						on grid,
						>={Stealth[round]},
						auto
					]
					\tikzset{el/.style = {inner sep=2pt, align=left, sloped}}
					\tikzset{every node/.style={font=\small}}
					\tikzset{every loop/.style={min distance=6mm,looseness=8}}
					\tikzset{every state/.style={inner sep=.1mm, minimum size=5mm}}
					\tikzset{initial text=}
					\tikzset{every initial by arrow/.style={->}}
					\tikzset{bend angle=30}
				
					\node[state,initial,accepting] (7) at (0.7,-4) {$q_0^0$};
					\path[->] (7) edge [in=-60,out=-120,loop below] node {\scriptsize
									$a_0^l a_0^m a_0^r/ \figto{\mydollar}{a_0^l a_0^m a_0^r \mydollar}$
								} ();
				\end{tikzpicture}
			\end{minipage}
			\begin{minipage}{0.52\linewidth}
				\begin{tikzpicture}[
						shorten >=1pt,
						node distance=2.25cm,
						on grid,
						>={Stealth[round]},
						auto
					]
					\tikzset{el/.style = {inner sep=2pt, align=left, sloped}}
					\tikzset{every node/.style={font=\small}}
					\tikzset{every loop/.style={min distance=6mm,looseness=8}}
					\tikzset{every state/.style={inner sep=.1mm, minimum size=5mm}}
					\tikzset{initial text=}
					\tikzset{every initial by arrow/.style={->}}
					\tikzset{bend angle=30}
			
					\node[state,initial,accepting] (2) at (0.6,-2) {$q_0^1$};
					\node[state] (6) at (0.1,-3) {};
					\node[state] (7) at (0.7,-4) {$q_0^0$};
					\node[state] (8) at (1.4,-3) {};
					\node[state] (9) at (3.4,-3) {};
					\node[state] (10) at (4.4,-3) {};
					\node[state] (13) at (3.4,-4.) {$\symcall_0^1$};
					\node[state] (14) at (4.4,-4.) {$\symret_0^1$};
					\node[state] (16) at (3.9,-5.) {$\symexec_0^1$};
					
					\path[->] (2) edge node [left] {\scriptsize
									$a_1^l/\figto{\mydollar}{a_1^l \mydollar}$
								} (6);
					\path[->] (6) edge node [left] {\scriptsize
									$\figto{\mydollar}{\lbrack_0\mydollar}$
								} (7);
					\path[->] (7) edge [in=-60,out=-120,loop below] node {\scriptsize
									$a_0^l a_0^m a_0^r/ \figto{\mydollar}{a_0^l a_0^m a_0^r \mydollar}$
								} ();
					\path[->] (7) edge node [right] {\scriptsize
									$\figto{\mydollar}{\rbrack_0\mydollar}$
								} (8);
					\path[->] (8) edge node {\scriptsize
									$a_1^m/\figto{\mydollar}{a_1^m\mydollar}$
								} (9);
					\path[->] (9) edge node [left] {\scriptsize
									$\figto{\mydollar}{0\mydollar}$
								} (13);
					\path[->] (10) edge [bend right=10] node [above] {\scriptsize
									$a_1^r/\figto{\mydollar}{a_1^r\mydollar}$
								} (2);
					\path[->] (13) edge node [left] {\scriptsize
									$\lbrack_0,\myright$
								} (16);
					\path[->] (13) edge [in=-60,out=-120,loop left] node {\scriptsize
									$\lnot \lbrack_0,\myright$
								} ();
					\path[->] (14) edge node {\scriptsize
									$0\mydollar,\mystop$
								} (10);
					\path[->] (14) edge [in=30,out=60, loop right] node [right] {\scriptsize
									$\lnot 0, \myright$
								} ();
					\path[->] (16) edge [in=-60,out=-120,loop left] node {\scriptsize
									$a/a,\myright$
								} ();
					\path[->] (16) edge node [right] {\scriptsize
									$\rbrack_0,\myright$
								} (14);
				\end{tikzpicture}
			\end{minipage} \par \medskip
			
			\qquad
			\begin{minipage}[b]{0.84\linewidth}
				\begin{tikzpicture}[
						shorten >=1pt,
						node distance=2.25cm,
						on grid,
						>={Stealth[round]},
						auto
					]
					\tikzset{el/.style = {inner sep=2pt, align=left, sloped}}
					\tikzset{every node/.style={font=\small}}
					\tikzset{every loop/.style={min distance=6mm,looseness=8}}
					\tikzset{every state/.style={inner sep=.1mm, minimum size=5mm}}
					\tikzset{initial text=}
					\tikzset{every initial by arrow/.style={->}}
					\tikzset{bend angle=30}
					\node[state,initial,accepting]  (0)  {$q_0^2$};
					\node[state] (1) at (-0.5,-1) {};
					\node[state] (2) at (0.6,-2) {$q_0^1$};
					\node[state] (3) at (2,-1) {};
					\node[state] (4) at (6.2,-1) {};
					\node[state] (5) at (7.4,-1) {};
					\node[state] (6) at (0.1,-3) {};
					\node[state] (7) at (0.7,-4) {$q_0^0$};
					\node[state] (8) at (1.4,-3) {};
					\node[state] (9) at (3.4,-3) {};
					\node[state] (10) at (4.4,-3) {};
					\node[state] (11) at (6.2,-2.) {$\symcall_1^2$};
					\node[state] (12) at (7.4,-2.) {$\symret_1^2$};
					\node[state] (13) at (3.4,-4.) {$\symcall_0^1$};
					\node[state] (14) at (4.4,-4.) {$\symret_0^1$};
					\node[state] (15) at (6.8,-3.) {$\symexec_1^2$};
					\node[state] (16) at (3.9,-5.) {$\symexec_0^1$};
					\node[state] (17) at (6.2,-4.) {$\symcall_0^2$};
					\node[state] (18) at (7.4,-4.) {$\symret_0^2$};
					\node[state] (19) at (6.8,-5.) {$\symexec_0^2$};
					
					\path[->] (0) edge node [left] {\scriptsize
									$a_2^l/\figto{\mydollar}{a_2^l \mydollar}$
								} (1);
					\path[->] (1) edge node [left] {\scriptsize
									$\figto{\mydollar}{\lbrack_1 \mydollar}$
								} (2);
					\path[->] (2) edge node [right] {\scriptsize
									$\figto{\mydollar}{\rbrack_1 \mydollar}$
								} (3);
					\path[->] (2) edge node [left] {\scriptsize
									$a_1^l/\figto{\mydollar}{a_1^l \mydollar}$
								} (6);
					\path[->] (3) edge node {\scriptsize
									$a_2^m/\figto{\mydollar}{a_2^m\mydollar}$
								} (4);
					\path[->] (4) edge node [left] {\scriptsize
									$\figto{\mydollar}{1\mydollar}$
								} (11);
					\path[->] (5) edge [bend right=10] node [above] {\scriptsize
									$a_2^r/\figto{\mydollar}{a_2^r\mydollar}$
								} (0);
					\path[->] (6) edge node [left] {\scriptsize
									$\figto{\mydollar}{\lbrack_0\mydollar}$
								} (7);
					\path[->] (7) edge [in=-60,out=-120,loop below] node {\scriptsize
									$a_0^l a_0^m a_0^r/ \figto{\mydollar}{a_0^l a_0^m a_0^r \mydollar}$
								} ();
					\path[->] (7) edge node [right] {\scriptsize
									$\figto{\mydollar}{\rbrack_0\mydollar}$
								} (8);
					\path[->] (8) edge node {\scriptsize
									$a_1^m/\figto{\mydollar}{a_1^m\mydollar}$
								} (9);
					\path[->] (9) edge node [left] {\scriptsize
									$\figto{\mydollar}{0\mydollar}$
								} (13);
					\path[->] (10) edge [bend right=10] node [above] {\scriptsize
									$a_1^r/\figto{\mydollar}{a_1^r\mydollar}$
								} (2);
					\path[->] (11) edge node [left] {\scriptsize
									$\lbrack_1,\myright$
								} (15);
					\path[->] (11) edge [in=-60,out=-120,loop left] node {\scriptsize
									$\lnot \lbrack_1, \myleft$
								} ();
					\path[->] (12) edge node {\scriptsize
									$1\mydollar,\mystop$
								} (5);
					\path[->] (12) edge [in=-60,out=-120,loop right] node {\scriptsize
									$\lnot 1, \myright$
								} ();
					\path[->] (13) edge node [left] {\scriptsize
									$\lbrack_0,\myright$
								} (16);
					\path[->] (13) edge [in=-60,out=-120,loop left] node {\scriptsize
									$\lnot \lbrack_0,\myright$
								} ();
					\path[->] (14) edge node {\scriptsize
									$0\mydollar,\mystop$
								} (10);
					\path[->] (14) edge [in=30,out=60, loop right] node [below] {\scriptsize
									$\lnot 0, \myright$
								} ();
					\path[->] (15) edge node [left] {\scriptsize
									$0,\myleft$
								} (17);
					\path[->] (15) edge node [right] {\scriptsize
									$\rbrack_1,\myright$
								} (12);
					\path[->] (15) edge [in=-60,out=-120,loop left] node {\scriptsize
									$a/a,\myright$
								} ();
					\path[->] (15) edge [in=-60,out=-120,loop right] node {\scriptsize
									$\syuugou{\lbrack_0,\rbrack_0},\myright$
								} ();
					\path[->] (16) edge [in=-60,out=-120,loop left] node {\scriptsize
									$a/a,\myright$
								} ();
					\path[->] (16) edge node [right] {\scriptsize
									$\rbrack_0,\myright$
								} (14);
					\path[->] (17) edge node [left] {\scriptsize
									$\lbrack_0,\myright$
								} (19);
					\path[->] (17) edge [in=175,out=195,loop left] node [above] {\scriptsize
									$\lnot \lbrack_0,\myleft$
								} ();
					\path[->] (18) edge node [right] {\scriptsize
									$0,\myright$
								} (15);
					\path[->] (18) edge [in=-60,out=-120,loop right] node {\scriptsize
									$\lnot 0, \myright$
								} ();
					\path[->] (19) edge node [right] {\scriptsize
									$\rbrack_0,\myright$
								} (18);
					\path[->] (19) edge [in=-60,out=-120,loop left] node {\scriptsize
									$a/a,\myright$
								} ();
				\end{tikzpicture}
			\end{minipage}
		
		\caption{$A_0$~(upper left),\ $A_1$~(upper right),\ $A_2$~(lower)}
		\label{fig:A0A1A2}
	\end{figure}
        
    The NESA $A_i$ has the start state $q_0^i$ which is also its only final state. Figure~\ref{fig:A0A1A2} depicts $A_0$, $A_1$, and $A_2$.
    $A_0$ is easy. $A_1$ is obtained by connecting the eight states to $q_0^0$ and making $q_0^1$ the start/final state, as shown in the figure. The five states on the right handle the dereference of $\bs 0$ in $x_1$. That is, at $\symcall_0^1$, $A_1$ first seeks the left-nearest $\lbrack_0$, passes the control to $\symexec_0^1$, checks the input string against the stack at $\symexec_0^1$, passes the control to $\symret_0^1$, and at $\symret_0^1$, finally goes back to the right-nearest $0$ which must be written on the stack top. In much the same way, $A_2$ is obtained from $A_1$ but we must be sensitive to the handling of the dereference of $\bs 1$ because $A_2$ must handle the dereference of not only $\bs 1$ but also $\bs 0$ that appears in a string captured by $\lbrack_1 \rbrack_1$ whereas no backreference appears in a string captured by $\lbrack_0 \rbrack_0$ in the case of $A_1$.
    To deal with this issue, we connect the three new states $\symcall_0^2$, $\symexec_0^2$ and $\symret_0^2$ to $\symexec_1^2$. At $\symexec_1^2$, if $A_2$ encounters $0$ in a checking, $A_2$ suspends the checking and first goes to $\symcall_0^2$ to seek $\lbrack_0$, goes to $\symexec_0^2$ to check the input against the stack by reading out a $\rbrack_0$ (no number appears in this checking), and finally goes to $\symret_0^2$ to go back to $0$ which passed the control to $\symcall_0^2$. We repeat this modification until $A_i$ is obtained. (Thus, $A_i$ has such states $\symcall_j^i, \symexec_j^i, \symret_j^i$ for each $j \in \syuugou{0,\dots,i-1}$.) Therefore, 
	
	\begin{theorem} \label{thm:hier} There exists an NESA $A_i$ that recognizes $L(x_i)$.
	\end{theorem}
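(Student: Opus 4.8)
The plan is to construct $A_i$ by induction on $i$, following the pattern of $A_0,A_1,A_2$ in Figure~\ref{fig:A0A1A2}, and then to prove $L(A_i)=L(x_i)$ by a parallel induction on $i$, strengthening the statement so that it also records the shape of the stack while $A_i$ is running. The base $A_0$ is the one-state machine of the figure, so $L(A_0)=L((a_0^l a_0^m a_0^r)^\ast)=L(x_0)$ is immediate. For the step, $A_i$ is obtained from $A_{i-1}$ by demoting $q_0^{i-1}$ from start/final, adjoining a fresh start/final state $q_0^i$ together with the chain that realizes one body-instance $a_i^l\,(_{i-1}\,\cdots\,)_{i-1}\,a_i^m\,\bs{i-1}\,a_i^r$ of $x_i$ --- i.e.\ consume and push $a_i^l$, push $\lbrack_{i-1}$, run $A_{i-1}$ from $q_0^{i-1}$, push $\rbrack_{i-1}$, consume and push $a_i^m$, push the marker $i-1$, run the dereference gadget for $\bs{i-1}$, consume and push $a_i^r$, and loop back to $q_0^i$ --- and a finite collection of auxiliary ``dereference gadgets'' (finite because the nesting of captures, and of in-place re-traversals of captured blocks, is bounded by $i$). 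Each gadget, say for $\bs j$ in a given context, walks the pointer left to the nearest $\lbrack_j$, then right through the captured block matching each plain input symbol and skipping brackets, recursively invoking the gadget of the appropriate nested context on each marker $<j$ it meets, and finally walks right to the nearest marker $j$ and resumes. Since no substack is ever created and the stack is never shortened, every $A_i$ is an NESA.

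The core of the correctness proof is a gadget lemma, the substack-free analogue of Lemma~\ref{lem:3}: whenever the stack reads $\#\gamma\,\lbrack_j\,w\,\rbrack_j\,\gamma'\,j\myspl\!\gamma''\mydollar$ with $w$ a ref-word of $x_j$ (the content of the $\lbrack_j$-block immediately preceding this marker $j$), $\gamma'$ containing no unmatched $\lbrack_j$ or $\rbrack_j$, and $\gamma''$ arbitrary, then
\[
	(\symcall_j,\,u u',\,\#\gamma\,\lbrack_j\,w\,\rbrack_j\,\gamma'\,j\myspl\!\gamma''\mydollar)\ \myid{A_i}^{\ast}\ (\symret_j,\,u',\,\#\gamma\,\lbrack_j\,w\,\rbrack_j\,\gamma'\,j\myspl\!\gamma''\mydollar)
\]
holds if and only if $u$ is the dereference of $w$, i.e.\ $u=g(v_0)g(v_{[1]})g(v_1)\cdots g(v_{[m]})g(v_m)$ in the notation of Lemma~\ref{lem:deref}, and in between the pointer stays strictly between $\lbrack_j$ and $j$ (never touching $\gamma''$) and no ID with a state of an enclosing loop is visited. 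I would prove this by induction on $j$: for $j=0$ the block $w\in L((a_0^l a_0^m a_0^r)^\ast)$ carries no markers, so $\symexec_0$ just matches $w$ against the input symbol by symbol; for $j>0$, $\symexec_j$ processes the ref-word $w$ of $x_j$ one factor $a_j^l\,\lbrack_{j-1}\,w_t\,\rbrack_{j-1}\,a_j^m\,(j-1)\,a_j^r$ at a time --- traversing the embedded $w_t$ in place, which consumes $\deref(w_t)$ by recursively handling $w_t$'s markers via the lower gadgets (available by the inner hypothesis), and, on the marker $j-1$, firing the $\bs{j-1}$-gadget once more, which re-consumes $\deref(w_t)$, the second copy of the captured string, by Lemmas~\ref{lem:deref} and \ref{lem:prefix}. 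The two ``nearest bracket / nearest marker'' searches land on the intended positions precisely because every string in $\reflang_i(x_i)$ is matching (Lemma~\ref{lem:refw}) and, by the shape of Larsen's $x_j$, every reference $\bs j$ occurs immediately after its own capture $(_j x_j)_j$ while every marker inside a $\lbrack_j$-block carries a label $<j$; hence the recursion terminates at depth $\le j$, no sibling block is ever entered, and finitely many gadget triples suffice.

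With the gadget lemma in hand, an outer induction on $i$ gives the strengthened claim that each traversal of $A_i$'s loop from $q_0^i$ back to $q_0^i$ pushes onto the stack, and consumes from the input the $\deref_i$-image of, exactly one body-instance of $x_i$'s outermost star; chaining $n\ge 0$ such traversals thus realizes an arbitrary $v\in\reflang_i(x_i)$ together with $\deref_i(v)$. The embedded run of $A_{i-1}$ inside $A_i$ is handled by the outer hypothesis, the $\bs{i-1}$-dereference by the gadget lemma, and Lemma~\ref{lem:deref} reconciles the resulting interleaving of pushed symbols and consumed input with $\deref_i$ of the produced ref-word. Taking $\gamma=Z_0$, $u'=\varepsilon$, and the (unique, accepting) state $q_0^i$ then yields $L(A_i)=\deref_i(\reflang_i(x_i))=L(x_i)$.

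I expect the gadget lemma to be the main obstacle: showing that the purely local navigation that here replaces the substack return-markers of Theorem~\ref{thm:main} never strays into a sibling $\lbrack_j$-block or onto a wrong marker, and that it faithfully tracks ``the content of the preceding $\lbrack_j$-block'' through the nested in-place re-traversals; and, relatedly, pinning down the finite set of auxiliary states and the transitions that wire the returns to the correct resumption points. This is exactly where the rigidity of Larsen's hierarchy --- strictly decreasing labels under each capture, each $\bs j$ adjacent to $(_j x_j)_j$, bounded dereference-nesting depth --- is indispensable, and is what lets us trade the NSA of Theorem~\ref{thm:main} for an NESA; matching the gadget's behavior against the definition of $\deref$ on nested markers, via Lemma~\ref{lem:prefix}, is the bulk of the remaining (routine) work.
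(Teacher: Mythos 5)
Your construction is the same as the paper's: an inductive build-up of $A_i$ from $A_{i-1}$ with in-place call/execute/return gadgets ($\symcall_j^i,\symexec_j^i,\symret_j^i$) that replace the substack return-markers of Theorem~\ref{thm:main} by nearest-bracket and nearest-marker searches, which is sound precisely because of the rigid shape of Larsen's $x_i$. The correctness argument you sketch via the substack-free analogue of Lemma~\ref{lem:3} is the natural formalization of what the paper leaves at the level of the construction and Figure~\ref{fig:A0A1A2}, so the proposal matches the paper's proof.
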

 \section{Conclusions}
\label{sec:conc}
	\begin{figure}[htb] \centering
		\begin{tikzcd}[column sep={between origins,7em}, row sep={between origins,2.8em}]
			\text{REG} \arrow[rr, twoheadrightarrow] \arrow[rdd, twoheadrightarrow] & [-45pt] & \text{CFL} \arrow[rr, twoheadrightarrow] & [-65pt] & \text{SL} \arrow[rr, "\text{\cite{aho1969nested}}", twoheadrightarrow] & [-20pt] & [-62pt] \text{IL} \arrow[r, "\text{\cite{aho1968indexed}}", twoheadrightarrow] & \text{CSL} \\
			& & & \text{NESL} \arrow[ru, swap, "\text{\cite{ogden1969intercalation}}", near end, twoheadrightarrow] \\ [23pt]
            & \substack{\text{rewb without} \\ \text{a captured reference}} \arrow[rru, mydeepred, dashed, "\substack{\text{Corollary} \\ \text{\ref{cor:refcap}}}", pos=0.85, twoheadrightarrow] \arrow[rrrr, shift right] \arrow[from=rrrr, shift right=1.7, "/" marking, mydeepred, dashed] \arrow[from=rrrr, phantom, shift right=4.2, "\text{\scriptsize Corollary~\ref{cor:needrefcap}}", mydeepred, pos=0.3]  \arrow[ruu, "/" marking, bend left=35] & & & & \text{rewb} \arrow[luu, "/" marking, mydeepred, dashed, swap]
			\arrow[luu, phantom, shift right=6, "\substack{\text{Theorem} \\ \text{\ref{thm:antisl}}}", pos=0.7, swap, mydeepred]
			\arrow[ruu, mydeepred, dashed, swap, "\substack{\text{Corollary} \\ \text{\ref{cor:rewbisig}}}", near end, twoheadrightarrow] \arrow[rruu, swap, "\text{\cite{berglund2023re,campeanu2003formal}}", twoheadrightarrow] \arrow[from=llluu, "/" marking, crossing over, bend left=4] \arrow[from=llluu, phantom, shift left=3, "\text{\scriptsize \cite{berglund2023re}}\,", bend left=4, pos=0.6] \\
			& & \substack{\text{Larsen's hierarchy~\cite{larsen1998regular}}} \arrow[ruu, crossing over, mydeepred, dashed, swap, "\text{Theorem~\ref{thm:hier}}", pos=0.8, twoheadrightarrow] \arrow[rrru, twoheadrightarrow, bend right=8]
		\end{tikzcd}
		\caption{The inclusion relations between the classes}
		\label{fig:langrels}
	\end{figure}

In this paper, we have shown the following five results: (1) that every rewb describes an indexed language (Corollary~\ref{cor:rewbisig}), (2) in particular that every rewb without a captured reference describes a nonerasing stack language (Corollary~\ref{cor:refcap}), (3) however that there exists a rewb that describes a non-stack language (Theorem~\ref{thm:antisl}), (4) therefore that there exists a rewb that needs a captured reference (Corollary~\ref{cor:needrefcap}), and (5) finally that Larsen's hierarchy $\syuugou[L(x_i)]{i \in \mynat}$ given in \cite{larsen1998regular} is within the class of nonerasing stack languages (Theorem~\ref{thm:hier}).
We have obtained the results by using three automata models, namely NESA, SA, and NSA, and using the semantics of rewbs given in \cite{schmid2016characterising, freydenberger2019deterministic} that treats a rewb as a regular expression allowing us to obtain the underlying NFA. 
Figure~\ref{fig:langrels} depicts the inclusion relations between the classes mentioned in the paper. Here, $A \to B$ stands for $A \subseteq B$, $A \twoheadrightarrow B$ for $A \subsetneq B$, and $A \nrightarrow B$ for $A \nsubseteq B$, respectively. A label on an arrow refers to the evidence.  A red dashed arrow indicates a novel result proved in this paper, where for a strict inclusion, we show for the first time the inclusion itself in addition to the fact that it is strict.

As future work, we would like to investigate the use of the pumping lemma for rewbs without a captured reference that can be derived from the contraposition of our Corollary~\ref{cor:refcap} and a pumping lemma for NESA~\cite{ogden1969intercalation}.  We expect it to be a useful tool for discerning which rewbs need captured references. Additionally, we suspect that our construction of NESA in Theorem~\ref{thm:hier} is useful for not just $x_i$ of \cite{larsen1998regular} but also for more general rewbs that have only one $\bs i$ for each $(_i\,)_i$, and we would like to investigate further uses of the construction.
 
%%
%% Bibliography
%%

%% Please use bibtex, 

\bibliography{refs}

\begin{thebibliography}{10}

\bibitem{aho1968indexed}
Alfred~V Aho.
\newblock Indexed grammars—an extension of context-free grammars.
\newblock {\em Journal of the ACM (JACM)}, 15(4):647--671, 1968.

\bibitem{aho1969nested}
Alfred~V Aho.
\newblock Nested stack automata.
\newblock {\em Journal of the ACM (JACM)}, 16(3):383--406, 1969.

\bibitem{10.5555/114872.114877}
Alfred~V. Aho.
\newblock {\em Algorithms for finding patterns in strings}, page 255–300.
\newblock MIT Press, Cambridge, MA, USA, 1991.

\bibitem{berglund2023re}
Martin Berglund and Brink van~der Merwe.
\newblock Re-examining regular expressions with backreferences.
\newblock {\em Theoretical Computer Science}, 940:66--80, 2023.

\bibitem{campeanu2003formal}
Cezar C{\^a}mpeanu, Kai Salomaa, and Sheng Yu.
\newblock A formal study of practical regular expressions.
\newblock {\em International Journal of Foundations of Computer Science},
  14(06):1007--1018, 2003.

\bibitem{freydenberger2019deterministic}
Dominik~D Freydenberger and Markus~L Schmid.
\newblock Deterministic regular expressions with back-references.
\newblock {\em Journal of Computer and System Sciences}, 105:1--39, 2019.

\bibitem{ginsburg1967one}
Seymour Ginsburg, Sheila~A Greibach, and Michael~A Harrison.
\newblock One-way stack automata.
\newblock {\em Journal of the ACM (JACM)}, 14(2):389--418, 1967.

\bibitem{ginsburg1967stack}
Seymour Ginsburg, Sheila~A Greibach, and Michael~A Harrison.
\newblock Stack automata and compiling.
\newblock {\em Journal of the ACM (JACM)}, 14(1):172--201, 1967.

\bibitem{hayashi1973derivation}
Takeshi Hayashi.
\newblock On derivation trees of indexed grammars—an extension of the
  uvwxy-theorem—.
\newblock {\em Publications of the Research Institute for Mathematical
  Sciences}, 9(1):61--92, 1973.

\bibitem{hopcroft2001introduction}
John~E Hopcroft, Rajeev Motwani, and Jeffrey~D Ullman.
\newblock {\em Introduction to automata theory, languages, and computation, 2nd
  Edition}.
\newblock Addison-Wesley, 2001.

\bibitem{hopcroft1967nonerasing}
John~E. Hopcroft and Jeffrey~D. Ullman.
\newblock Nonerasing stack automata.
\newblock {\em Journal of Computer and System Sciences}, 1(2):166--186, 1967.

\bibitem{larsen1998regular}
Kim~S Larsen.
\newblock Regular expressions with nested levels of back referencing form a
  hierarchy.
\newblock {\em Information Processing Letters}, 65(4):169--172, 1998.

\bibitem{ogden1969intercalation}
William~F Ogden.
\newblock Intercalation theorems for stack languages.
\newblock In {\em Proceedings of the first annual ACM symposium on Theory of
  computing}, pages 31--42, 1969.

\bibitem{schmid2016characterising}
Markus~L Schmid.
\newblock Characterising regex languages by regular languages equipped with
  factor-referencing.
\newblock {\em Information and Computation}, 249:1--17, 2016.

\end{thebibliography}

\appendix

\section{Formal definitions and proofs for Section~\ref{sec:prelim}}
\label{app:prelim}
\begin{definition2} \label{dfn:dereference} 
	Suppose that $\bot \notin \Sigma \mydisjointu B_k \mydisjointu \slice{k}$, and we define a Turing machine with one tape $\deref_k$ as follows:
	\begin{quote}
		$\deref_k$ = ``On input string $v \in (\Sigma \mydisjointu B_k \mydisjointu \slice{k}) ^ \ast$,
				\begin{bracketenumerate}
					\item Move the head to the right until it reads an $i \in \slice{k}$, then go to \mylipicsenumitem{2}. If such $i$ does not exist, go to \mylipicsenumitem{6}.
                    \item The assertion $P_2\triangleq$ `The symbol the head points out now is the leftmost natural number $i \in \slice{k}$ on the tape' holds. Move the head to the left until it reads a $\lbrack_i$, then go to \mylipicsenumitem{3}. If such $\lbrack_i$ does not exist, go to \mylipicsenumitem{5}.
                    \item Let $P_3$ be `There is a right bracket $\rbrack_i$ between the current head position and this $i$'. If $P_3$ holds, go to \mylipicsenumitem{4}. Otherwise, go to \mylipicsenumitem{7}.
					\item Move the head to the right one by one, seeking a bracket $\rbrack_i$. Note that no $j \in \slice{k}$ can appear in this scan since $P_2$ and $P_3$ holds. Now, if the symbol written on the tape cell that the head points to is $a \in \Sigma$, then insert $a$ into the position immediately preceding this $i$, and go right; if it is $b \in B_k \backslash \syuugou{\rbrack_i}$, simply go right; if it is $\rbrack_i$, go to \mylipicsenumitem{5}.
					\item Go back to \mylipicsenumitem{1} and remove this $i$.
                    \item $P_6\triangleq$ `No $j \in \slice{k}$ is written on the tape' holds. Again scan the tape from the beginning, and remove all brackets.
					\item Erase all symbols written on the tape, and write the symbol $\bot$.''
				\end{bracketenumerate}
	\end{quote}
	\end{definition2}

    Henceforth, we refer to these step numbers as encircled numbers \maru{1}, \maru{2}, etc. The order of execution of $\deref_k$ seems to be either $(\maru{1}\maru{2}(\maru{3}\maru{4}\maru{5} + \maru{5})) ^ \ast \maru{1} (\maru{6} + \maru{2}\maru{3}\maru{7})$ or $(\maru{1}\maru{2}(\maru{3}\maru{4}\maru{5} + \maru{5}))^\omega$, however the latter cannot happen; because each time $\deref_k$ executes the loop unit $\maru{1}\maru{2}(\maru{3}\maru{4}\maru{5}+\maru{5})$, the leftmost natural number $i \in \slice{k}$ is replaced with a string that does not contain natural numbers, therefore the loop runs for at most the number of elements of $\slice{k}$ in an input string, which is bounded above by the length of the input string. Hence $\deref_k$ halts for any input.
	Thus, we think of $\deref_k$ as a computable function $(\Sigma \mydisjointu B_k \mydisjointu \slice{k}) ^ \ast \rightarrow \Sigma ^ \ast \mydisjointu \syuugou{\bot}$.

	\begin{lemma} \label{lem:derefloop}
        Suppose that the loop unit (namely \maru{1}\maru{2}\maru{3}\maru{4}\maru{5} or \maru{1}\maru{2}\maru{5}) is executed exactly $m^\prime$ times when an input string $v = v_0 n_1 v_1 \cdots n_m v_m$\,($m = \cnt{v}$) is given to $\deref_k$. Then, for each $r \in \syuugou{0,1,\dots,m'}$, let $v^{(r)}$ be the string written on the tape immediately after the $r$\textsuperscript{th} loop, and let $v_{[r]}$ be the string over $\Sigma \mydisjointu B_k$ defined as follows:
	\[
		v_{[r]} \triangleq \begin{cases}
			\text{the string bracketed in }\lbrack_i\text{ and }\rbrack_i\text{ scanned at }\maru{4}, & (\text{if }\maru{1}\maru{2}\maru{3}\maru{4}\maru{5}\text{ is executed}) \\
 			\varepsilon. & (\text{if }\maru{1}\maru{2}\maru{5}\text{ is executed})
 		\end{cases}
	\]
Under the assumptions above, for each $r \in \syuugou{0,1,\dots,m'}$ the following equality holds:
	\[
		v^{(r)} = v_0\,g(v_{[1]})\,v_1 \cdots g(v_{[r]})\,v_r n_{r+1} v_{r+1} \cdots n_m v_m.
	\]
	\end{lemma}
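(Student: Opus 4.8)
The plan is to prove the displayed equality by induction on $r$, carrying along the strengthened invariant that the initial segment $v_0\, g(v_{[1]})\, v_1 \cdots g(v_{[r]})\, v_r$ of $v^{(r)}$ contains no element of $\slice{k}$; this is automatic, since each $v_j \in (\Sigma \mydisjointu B_k)^\ast$ and each $g(v_{[j]}) \in \Sigma^\ast$ by construction. The base case $r=0$ needs nothing: $v^{(0)}$ is, by convention, the tape contents before any loop, i.e.\ the input $v = v_0 n_1 v_1 \cdots n_m v_m$, which is exactly the right-hand side for $r=0$.

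For the inductive step, fix $1 \leq r \leq m'$ and assume $v^{(r-1)} = v_0\, g(v_{[1]})\, v_1 \cdots g(v_{[r-1]})\, v_{r-1}\, n_r v_r \cdots n_m v_m$. First I would note that when the $r$\textsuperscript{th} loop re-enters \maru{1}, the head sits inside the natural-number-free prefix $v_0\, g(v_{[1]})\, v_1 \cdots g(v_{[r-1]})\, v_{r-1}$ (it was left on the $\rbrack_{n_{r-1}}$ scanned during loop $r-1$, or at the left end of the tape when $r=1$), so it walks right across that entire prefix and halts exactly on $n_r$; thus the ``leftmost natural number'' $i$ of the assertion $P_2$ equals $n_r$. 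Now I would split on the loop unit executed. If it is \maru{1}\maru{2}\maru{5}, then there is no $\lbrack_{n_r}$ to the left of $n_r$, we have $v_{[r]} = \varepsilon$ by definition, and \maru{5} only deletes this occurrence of $n_r$, so $v^{(r)} = v_0\, g(v_{[1]})\, v_1 \cdots v_{r-1}\, v_r\, n_{r+1} v_{r+1} \cdots n_m v_m$, which is the asserted string because $g(v_{[r]}) = \varepsilon$. If it is \maru{1}\maru{2}\maru{3}\maru{4}\maru{5}, then \maru{2} finds the left-nearest $\lbrack_{n_r}$ and, since $P_3$ holds, the first $\rbrack_{n_r}$ after it precedes $n_r$; hence the scanned word $v_{[r]}$ lies entirely inside the prefix above, so it contains no element of $\slice{k}$ (as its declared type demands). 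Step \maru{4} then inserts, in left-to-right order, one copy of each $\Sigma$-symbol of $v_{[r]}$ into the cell immediately preceding $n_r$: the inserted word is precisely $g(v_{[r]})$, it lands in the single gap between $v_{r-1}$ and $n_r$, and---because the whole region being copied lies strictly to the left of $n_r$---these insertions touch neither the cells still being read nor the suffix $v_r n_{r+1} v_{r+1} \cdots n_m v_m$. Finally \maru{5} erases this $n_r$, leaving $v^{(r)} = v_0\, g(v_{[1]})\, v_1 \cdots g(v_{[r-1]})\, v_{r-1}\, g(v_{[r]})\, v_r\, n_{r+1} v_{r+1} \cdots n_m v_m$, which is what was to be shown.

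The hard part is not any deep argument but the bookkeeping of the in-place edits at \maru{4} on the informally-described one-tape machine: one must argue precisely that every symbol inserted there lands in the single gap immediately after $v_{r-1}$ and before the $n_r$ being processed, and that the region delimited by $\lbrack_{n_r}$ and $\rbrack_{n_r}$ being copied is disjoint both from that gap and from the untouched suffix. Each of these reduces to two invariants---``no element of $\slice{k}$ occurs to the left of $n_r$'' (the induction hypothesis) and ``the matching $\rbrack_{n_r}$ occurs before $n_r$'' (the assertion $P_3$)---so the care needed is in stating the adjacency and disjointness claims precisely, after which the verification is routine.
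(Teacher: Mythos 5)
Your proposal is correct and follows essentially the same route as the paper's proof: induction on the loop count, with the inductive step split into the two loop units $\maru{1}\maru{2}\maru{3}\maru{4}\maru{5}$ and $\maru{1}\maru{2}\maru{5}$, using that the prefix $v_0\,g(v_{[1]})\,v_1\cdots v_{r-1}$ is free of elements of $\slice{k}$ so that the number located at $\maru{1}$ is exactly $n_r$ and step $\maru{4}$ inserts precisely $g(v_{[r]})$ before it. Your added bookkeeping about where the insertions land is a more explicit rendering of what the paper leaves implicit, not a different argument.
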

	
    \begin{proof} When $r = 0$, the left side is $v^{(0)}$ and the right side is $v_0 n_1 v_1 \cdots n_m v_m = v$, as required. Recall that immediately before the ($r+1$)\textsuperscript{st} loop, the string written on the tape is $v^{(r)}$. It continues as follows:
			\begin{itemize}
                \item At \maru{1}, the head of $\deref_k$ is placed at immediately after $g(v_{[r]})$ in
                \[
                	v^{(r)} = v_0\,g(v_{[1]})\,v_1 \cdots g(v_{[r]})\,v_r \uuline{n_{r+1}} v_{r+1} \cdots n_m v_m,
                \]
                and moves to $n_{r+1}$. Henceforth, we write simply $s_r$ for $v_0\,g(v_{[1]})\,v_1 \cdots g(v_{[r]})\,v_r$, which appears before $n_{r+1}$.
                \item At \maru{2}, there are two cases:
					\begin{itemize}
                        \item When $\deref_k$ follows $\maru{3}\maru{4}\maru{5}$, since it moves from \maru{2} to \maru{3} and $P_3$ holds, $s_r$ is in the form $x_0 [_{n_{r+1}} v_{[{r+1}]} ]_{n_{r+1}} x_1$. At \maru{4} and \maru{5}, the strings written on the tape after each step are
							\[
								\maru{4}\ s_r\,g(v_{[r+1]})\, \uuline{n_{r+1}}v_{r+1} \cdots n_m v_m, \text{ and } \maru{5}\ s_r\,g(v_{[r+1]})\,v_{r+1} \cdots n_m v_m.
							\]
							By definition, this string after \maru{5} is $v^{(r+1)}$.
						\item When $\deref_k$ follows \maru{5}, since $v_{[r+1]} = \varepsilon$, the string immediately after the execution of \maru{5}, namely $v^{(r+1)}$, is $s_r v_{r+1} \cdots n_m v_m = s_r\,g(v_{[r+1]})\, v_{r+1} \cdots n_m v_m.$
					\end{itemize}
					In both cases, the equation holds for $r+1$.
			\end{itemize}
			This completes the proof.
	\end{proof}
	
	\begin{lemma} \label{lem:cntloop} For a matching string $v$, the loop of $\deref_k$ runs exactly $m = \cnt{v}$ times. Hence $\deref_k(v) \in \Sigma ^ \ast$.
	\end{lemma}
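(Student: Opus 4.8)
The plan is to obtain the statement from Lemma~\ref{lem:derefloop} together with the hypothesis that $v$ is matching. Write $v = v_0 n_1 v_1 \cdots n_m v_m$ as there, so $m = \cnt{v}$ and each $v_j \in (\Sigma \mydisjointu B_k)^\ast$ is free of natural numbers; thus the tape initially carries exactly $m$ of them. Each loop unit deletes the leftmost remaining natural number at \maru{5} and inserts only symbols of $\Sigma$ at \maru{4}, so the count of natural numbers on the tape decreases by exactly one per loop; consequently the total number $m'$ of loops executed satisfies $m' \le m$ (the bound already recorded after Definition~\ref{dfn:dereference}).

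First I would show that the computation never reaches \maru{7}. Suppose it did; then it entered \maru{7} from \maru{3} during some loop, say the $r$\textsuperscript{th}, so exactly $r-1$ loops have completed, and, a further loop having been attempted, the tape still carries a natural number, forcing $r \le m$. By Lemma~\ref{lem:derefloop} the tape at that moment is $v^{(r-1)} = s_{r-1}\, n_r v_r \cdots n_m v_m$, where $s_{r-1} \triangleq v_0\, g(v_{[1]})\, v_1 \cdots g(v_{[r-1]})\, v_{r-1} \in (\Sigma \mydisjointu B_k)^\ast$, so \maru{1} locates $n_r$ as the leftmost natural number. To reach \maru{7}, step \maru{2} must find some $\lbrack_{n_r}$ in $s_{r-1}$ and then $P_3$ must fail, i.e.\ the factor $z$ of $s_{r-1}$ lying between that $\lbrack_{n_r}$ and $n_r$ contains no $\rbrack_{n_r}$. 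Here I invoke matching: the homomorphism deleting every symbol outside $B_k$ sends $s_{r-1}$ and $y_{r-1} = v_0 n_1 v_1 \cdots n_{r-1} v_{r-1}$ to the same word over $B_k$ (it kills each $g(v_{[j]})$ and each $n_j$), so the occurrence of $\lbrack_{n_r}$ found in $s_{r-1}$ corresponds to a decomposition $y_{r-1} = x_1 \lbrack_{n_r} x_2$ in which $x_2$ has the same $B_k$-projection as $z$. Since $v$ is matching, $x_2 = x_2' \rbrack_{n_r} x_3$, so $x_2$, and hence $z$, contains a $\rbrack_{n_r}$, which contradicts the failure of $P_3$. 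Therefore \maru{7} is never reached; as $\deref_k$ always halts and its only exits are \maru{6} and \maru{7}, it halts via \maru{6}, which erases all brackets from a tape free of natural numbers and leaves a word over $\Sigma$. Hence $\deref_k(v) \in \Sigma^\ast$.

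It remains to see that $m' = m$. The computation halts at \maru{6}, which \maru{1} reaches only when the tape holds no natural number; but by Lemma~\ref{lem:derefloop} the tape $v^{(m')}$ after the final loop still contains the natural number $n_{m'+1}$ whenever $m' < m$. Hence $m' = m$, and the loop of $\deref_k$ runs exactly $\cnt{v}$ times.

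The step I expect to be the main obstacle is the bookkeeping in the second paragraph: identifying the occurrence of $\lbrack_{n_r}$ that \maru{2} discovers inside the partially rewritten tape $s_{r-1}$ with an occurrence inside the original prefix $y_{r-1}$, so that the matching condition on $v$ applies. This rests on the simple but load-bearing observation that up to the $r$\textsuperscript{th} loop $\deref_k$ only inserts and deletes symbols of $\Sigma$ and deletes already-processed natural numbers, so that the subword of brackets is never disturbed; once this correspondence is made precise, the rest is routine, following from Lemma~\ref{lem:derefloop} and the monotone decrease of the number of natural numbers on the tape.
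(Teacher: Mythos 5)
Your proposal is correct and takes essentially the same route as the paper: both arguments combine Lemma~\ref{lem:derefloop} with the matching condition to show that the transition from \maru{3} to \maru{7} (a failure of $P_3$) is impossible, and conclude that the loop cannot stop before all $m$ numbers are consumed. The only difference is organizational (you rule out \maru{7} at every loop first and then deduce $m'=m$, whereas the paper assumes $m'<m$ directly for contradiction), and your explicit bracket-projection correspondence between $s_{r-1}$ and $y_{r-1}$ spells out a step the paper leaves implicit.
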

    \begin{proof} Let $m ^ \prime$ be the number of loop iterations. Because obviously $m ^ \prime \leq m$, we suppose $m ^ \prime < m$ for contradiction. If so, since it is the case that the execution moves from $\maru{1}\maru{2}\maru{3}$ to \maru{7} at the ($m'+1$)\textsuperscript{st} loop, $P_3$ does not hold for $v ^ {(m^\prime)}$. By Lemma~\ref{lem:derefloop},
		\[
			v^{(m^\prime)} = \underbrace{v_0\,g(v_{[1]})\,v_1 \cdots g(v_{[m^\prime]})\,v_{m^\prime}}_{s_{m^\prime}} \uuline{n_{m^\prime+1}} v_{m^\prime+1} \cdots n_m v_m
		\]
        follows. Since $i$ in the ($m'+1$)\textsuperscript{st} loop is this $n_{m^\prime +1}$ and $s_{m^\prime} \ni \lbrack_{n_{m^\prime+1}}$, there is $v_j$ among $v_0, \dots, v_{m^\prime}$ such that $v_j \ni \lbrack_{n_{m^\prime+1}}$. This $v_j$ can be written in the form $u_0 \lbrack_{n_{m^\prime+1}} u_1$. Because $v$ is matching and $y_{m^\prime} \supseteq v_j \ni \lbrack_{n_{m^\prime+1}}$, one of $u_1, v_{j+1}, \dots, v_{m^\prime}$ contains $\rbrack_{n_{m^\prime +1}}$. This contradicts the fact that $P_3$ does not hold at \maru{3}.
	\end{proof}
	
    \begin{proof}[Proof of Lemma~\ref{lem:deref}.] In Lemma~\ref{lem:cntloop}, the string $v ^ {(m)}$, which is written on the tape immediately after the $m$\textsuperscript{th} loop, becomes $g(v ^ {(m)})$ at \maru{6}, and therefore $\deref_k$ halts, as required.	
	\end{proof}

    \begin{proof}[Proof of Lemma~\ref{lem:prefix}]
      Immediate from Lemma~\ref{lem:deref}.
    \end{proof}

    Finally, we prove that every $v \in \refwords_k$ (see Definition~\ref{dfn:reflang}) is matching, concluding $L(\alpha)\subseteq \Sigma^\ast$ with Lemma~\ref{lem:cntloop}.
    \begin{lemma} \label{lem:refwsup} The following facts hold where $\alpha \in \rewb_k$ and $i \in \slice{k}$:
		\begin{alphaenumerate}
			\item $\reflang_k(\alpha) \subseteq (\Sigma \mydisjointu \syuugou[\lbrack_j, \rbrack_j]{j \in \varf{(\alpha)}} \mydisjointu \varf{(\alpha)}) ^ \ast$, 
			\item $\forall v_1, v_2.\, v_1 [_i v_2 \in \reflang_k(\alpha)$ $\Longrightarrow$ $\exists v_2^\prime, v_3.\, v_2 = v_2 ^ \prime \rbrack_i v_3\,\land\, v_2^\prime \niton \lbrack_i, \rbrack_i, i$. 
		\end{alphaenumerate}
	\end{lemma}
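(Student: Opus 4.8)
The plan is to prove both parts by structural induction on the rewb $\alpha\in\rewb_k$, following the inductive definition of $\rewb_k$ paired with $\varf$. Since the induction for (a) never appeals to (b), I would first establish (a) on its own and then use it freely when proving (b).

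For (a), the base cases are immediate: $\reflang_k(a)=\syuugou{a}$ with $a\in\Sigma_\varepsilon$ (so either $a\in\Sigma$ or the word is $\varepsilon$, which lies in the free monoid over any alphabet), and $\reflang_k(\bs i)=\syuugou{i}$ with $i\in\varf(\bs i)=\syuugou{i}$. For $\alpha_0\alpha_1$, $\alpha_0+\alpha_1$ and $\alpha_0^\ast$, the language is obtained from $\reflang_k(\alpha_0)$ and $\reflang_k(\alpha_1)$ by concatenation, union and Kleene star, while $\varf$ takes the union of the two var-sets, so each construction stays inside the (enlarged) alphabet by the induction hypothesis. For the capture $(_j\alpha_0)_j$ with $j\in\slice{k}\setminus\varf(\alpha_0)$, we have $\reflang_k((_j\alpha_0)_j)=\syuugou{\lbrack_j}\reflang_k(\alpha_0)\syuugou{\rbrack_j}$; the induction hypothesis places $\reflang_k(\alpha_0)$ inside the alphabet indexed by $\varf(\alpha_0)$, and prepending $\lbrack_j$ and appending $\rbrack_j$ keeps everything inside the alphabet indexed by $\varf(\alpha_0)\cup\syuugou{j}=\varf((_j\alpha_0)_j)$.

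For (b), I induct on $\alpha$ again. The base cases are vacuous, since neither $\syuugou{a}$ nor $\syuugou{i}$ contains $\lbrack_i$. For $\alpha_0+\alpha_1$ the claim follows at once from the induction hypothesis applied to whichever disjunct contains the word. For $\alpha_0\alpha_1$ and $\alpha_0^\ast$, given a factorization $v_1\lbrack_i v_2$ of a word in the language, the distinguished occurrence of $\lbrack_i$ lies entirely within one factor drawn from $\reflang_k(\alpha_0)$ or $\reflang_k(\alpha_1)$ (respectively, within one copy of $\reflang_k(\alpha_0)$); applying the induction hypothesis to that factor gives $v_2=v_2'\rbrack_i v_3$ with $v_2'\niton\lbrack_i,\rbrack_i,i$ after absorbing the remaining factors into $v_3$. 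The only delicate case is the capture $(_j\alpha_0)_j$, where the word has the form $\lbrack_j w\rbrack_j$ with $w\in\reflang_k(\alpha_0)$. If $i\neq j$, then $\lbrack_i$ occurs inside $w$ and the induction hypothesis on $\alpha_0$ finishes the case. If $i=j$, I would invoke part (a): since $j\notin\varf(\alpha_0)$, part (a) forces $w$ to contain none of $\lbrack_j,\rbrack_j,j$, so the only occurrence of $\lbrack_j$ in $\lbrack_j w\rbrack_j$ is the leading one; hence $v_1=\varepsilon$ and $v_2=w\rbrack_j$, which has exactly the required shape with $v_2'=w$ and $v_3=\varepsilon$.

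The main obstacle is precisely this last subcase ($i=j$ in the capture): it is the only point at which the argument is not mere bookkeeping, and it crucially uses both part (a) and the side-condition $j\in\slice{k}\setminus\varf(\alpha_0)$ built into the definition of $\rewb_k$ (which is what encodes the syntactic discipline forbidding a capture with label $j$ to be nested, or a reference $\bs j$ to occur, inside $(_j\ldots)_j$). Everything else is a routine propagation of the inductive hypothesis through the regular-expression constructors.
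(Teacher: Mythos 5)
Your proof is correct, and it is exactly the structural induction that the paper's one-line proof (``Immediate from the definitions of $\reflang_k(\alpha)$ and $\varf(\alpha)$'') implicitly appeals to; you correctly isolate the one non-trivial point, namely that the $i=j$ capture case rests on part (a) together with the side condition $j\in\slice{k}\setminus\varf(\alpha_0)$ in the syntax of $\rewb_k$.
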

	\begin{proof}
	Immediate from the definitions of $\reflang_k(\alpha)$ and $\varf{(\alpha)}$.
	\end{proof}
	
    \begin{proof}[Proof of Lemma~\ref{lem:refw}] There exists $\alpha \in \rewb_k$ such that $v \in \reflang_k(\alpha)$. Hereafter, we let $m = \cnt{v}$ and $v = v_0 n_1 v_1 \cdots n_m v_m$. For each $r \in \syuugou{1, \dots, m}$, if $y _ {r-1}$ can be written in the form $x_1 \lbrack_{n_r} x_2$, we obtain $v = x_1 \lbrack_{n_r} x_2 n_r v_r \cdots n_m v_m$ and by Lemma~\ref{lem:refwsup} (b), $x_2 n_r v_r \cdots n_m v_m$ can be written in the form $x_2^\prime \rbrack_{n_r} x_3$ with $x_2^\prime \niton \lbrack_{n_r}, \rbrack_{n_r}, n_r$. Here $x_2 ^ \prime \rbrack_{n_r} \niton n_r$ holds. Therefore, $x_2^\prime \rbrack_{n_r}$ is a prefix of $x_2$ and of course $x_2^\prime \niton \lbrack_{n_r}, \rbrack_{n_r}$. Hence, $v$ is matching.
	\end{proof}
 \section{Proof of Corollary~\ref{cor:extend}}
\label{app:extend}
\begin{proof}[Proof of Corollary~\ref{cor:extend}.]
    There are a string $w^\prime \in (\Sigma \mydisjointu B_k \mydisjointu \slice{k}) ^ \ast$ and a final state $q_f \in F$ such that $q_0 \mytto{w}{N} q \mytto{w^\prime}{N} q_f$. Hence, $w w^\prime \in L(N) = \reflang_k(\alpha)$ follows. By Lemma~\ref{lem:refw} the string $ww^\prime$ is matching, therefore by Lemma~\ref{lem:prefix} its prefix $w$ is matching.
\end{proof}
 \section{Proof of Lemma~\ref{lem:3}}
\label{app:3}
\fullversion{First, we define some notations (see also Appendix~\ref{app:prelim} for the formal definitions of $v_{[r]}$ and $v^{(r)}$).}{First, we recall the notation $v_{[r]}$ (cf. Section~\ref{sec:prelim}) and explain a new notation $v^{(r)}$ informally (see the full version~\cite{nogami2023expressive} for the formal definitions of $v_{[r]}$ and $v^{(r)}$). Let $k$ be a positive integer and $v = v_0 n_1 v_1 \dots n_m v_m$ ($m = \cnt{v}$) a matching string over $\Sigma \mydisjointu B_k \mydisjointu \slice{k}$. For each $r = 1, 2, \dots$, the notation $v_{[r]}$ denotes the string which $\deref_k$ scans at the $r$\textsuperscript{th} number $n_r$ and $v^{(r)}$ the string immediately after the $r$\textsuperscript{th} replacement (also we let $v^{(0)} = v$).
    For example, in the case of $v = \lbrack_1 a\,\lbrack_2 b \rbrack_2\, 2\,\rbrack_1\,1$, $\deref_k$ processes $v$ as follows, therefore $v_{[1]} = b$ and $v_{[2]} = a \lbrack_2 b \rbrack_2 b$: $v^{(0)} = \lbrack_1 a\,\underline{\lbrack_2 b \rbrack_2}\, 2\,\rbrack_1\,1 \to v^{(1)} = \underline{\lbrack_1 a\,\lbrack_2 b \rbrack_2\, b\,\rbrack_1}\,1 \to v^{(2)} = \lbrack_1 a\,\lbrack_2 b \rbrack_2\, b\,\rbrack_1\,abb \to abbabb$. We can easily prove the following claim (see also the full version~\cite{nogami2023expressive}): For each $r \in \syuugou{0,1,\dots,m}$, 
    \begin{equation}
        v^{(r)} = v_0\,g(v_{[1]})\,v_1 \cdots g(v_{[r]})\,v_r n_{r+1} v_{r+1} \cdots n_m v_m. \tag{$\ast$} \label{claim:derefloop}
    \end{equation}

    We prepare some more notations for the proof.} Let $\Sigma_{\bot}^\ast \triangleq \Sigma^\ast \mydisjointu \syuugou{\bot}$ and for every $w \in \Sigma_{\bot}^\ast$ and $s \in \Sigma^\ast$, let $w/s$ denote the string $w$ but with the suffix $s$ erased if $w$ ends with $s$, and otherwise $\bot$.
	To use this notation, we expand the set of all IDs $I = Q \times \Sigma^\ast \times \syuugou{\#}(\Gamma \mydisjointu \syuugou{\mycent, \mydollar, \,\myspl}) ^ \ast \syuugou{\mydollar}$ to $I_\bot \triangleq Q \times \Sigma_{\bot}^\ast \times \syuugou{\#} (\Gamma \mydisjointu \syuugou{\mycent,\mydollar,\myspl}) ^ \ast \syuugou{\mydollar}$, and we let $C(w)$ denote an ID $C = (\cdot, w, \cdots)$ and $\myid{(n)}^\prime$ denote the following binary relation over $I_\bot$:	
	\[
		\myid{(n)}^\prime\,\triangleq \syuugou[(C(w), C^\prime(w/(a_i \cdots a_{i+\overline{d}-1})))]{C(a_i \cdots a_k) \vdash_{(n)} C^\prime(a_{i+\overline{d}} \cdots a_k), w \in \Sigma_{\bot}^\ast}.
	\]
	In addition, we define $\myid{}^\prime\,\triangleq \bigcup_{n}{\,\myid{(n)}^\prime}$.	
	Then, $\myid{}\,\subseteq\,\myid{}^\prime$ is immediate and we show that the converse partially holds, in the sense that:
	
	\begin{lemma} \label{lem:idprime2id} For every string $w \in \Sigma^\ast$ and $w^\prime \in \Sigma^\ast$, $C(w) \myid{}^\prime C^\prime(w^\prime)$ implies $C(w) \myid{} C^\prime(w^\prime)$.
	\end{lemma}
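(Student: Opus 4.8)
The plan is to unfold the definition of $\myid{}^\prime$ and to show that, once we restrict attention to pairs of IDs whose input components are genuine strings rather than $\bot$, the only ``slack'' of $\myid{}^\prime$ over $\myid{}$ --- namely that a primed step is licensed by a merely \emph{hypothetical} consumed input prefix --- disappears: $\bot$-freeness of $w'$ pins that hypothetical prefix down to be an honest prefix of $w$, and then the very same instance of the underlying rule already witnesses a genuine step. This is the ``partial converse'' to the immediate inclusion $\myid{}\subseteq\myid{}^\prime$.

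Concretely, I would begin from $C(w)\myid{}^\prime C'(w')$ with $w,w'\in\Sigma^\ast$ and fix a rule number $n$ with $C(w)\myid{(n)}^\prime C'(w')$. By the definition of $\myid{(n)}^\prime$ there are input symbols $a_i,\dots,a_k\in\Sigma$ such that $C(a_i\cdots a_k)\vdash_{(n)} C'(a_{i+\mydtoz{d}}\cdots a_k)$ and $w'=w/(a_i\cdots a_{i+\mydtoz{d}-1})$, where $d$ is the input-cursor direction of the transition used. Since the input cursor never moves left ($\myleft\notin\dir_i$), $\mydtoz{d}\in\{0,1\}$, so the consumed word $a_i\cdots a_{i+\mydtoz{d}-1}$ is either $\varepsilon$ or the single symbol $a_i$; and since $w'\neq\bot$, the definition of $w/s$ forces $w$ to start with it, i.e.\ $w=(a_i\cdots a_{i+\mydtoz{d}-1})\,w'$. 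I would then split on $\mydtoz{d}$. If $\mydtoz{d}=1$ --- so the transition advances the cursor, which for $A_\alpha$ means rule (1) or (9) --- then $w=a_i w'$ and the transition reads precisely the first symbol of $w$; since $C(w)$ and $C(a_i\cdots a_k)$ have the same state and stack, and by Definition~\ref{dfn:sa-id} the applicability and effect of a transition depend on the input only through the symbol currently scanned, the same instance of rule $(n)$ gives $C(w)\vdash_{(n)} C'(w')$. If $\mydtoz{d}=0$, then $w=w'$ and rule $(n)$ is one of the non-consuming rules of $A_\alpha$; inspection of rules (2)--(8) and (10)--(17) shows each is quantified over all input symbols $c\in\Sigma$ (it reads an arbitrary symbol and leaves the cursor in place), so it applies verbatim from the state and stack of $C$ with the symbol heading $w$, again giving $C(w)\vdash_{(n)} C'(w')$. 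In both cases $C(w)\myid{} C'(w')$, as required.

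I expect the argument to be bookkeeping rather than mathematics, so the real care goes into two observations: first, keeping separate the literal input component of an ID, on which $\myid{}$ acts, from the ``symbolic'' input slot of the template $C(\cdot)$, on which $\myid{}^\prime$ acts; and second, checking that among the rules defining $A_\alpha$ only (1) and (9) move the input cursor while every other rule is oblivious to which input symbol is scanned --- it is precisely this obliviousness that makes the $\mydtoz{d}=0$ case go through. Once these are in hand the case analysis is immediate; the weight of the development then lies not here but in the subsequent use of $\myid{}^\prime$ (and its $\bot$-valued IDs) to carry out the induction for Lemma~\ref{lem:3}.
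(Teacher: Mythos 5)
Your proposal is correct and follows essentially the same route as the paper's proof: unfold the definition of $\myid{}^\prime$ to recover the underlying step $C(a_i\cdots a_k)\vdash C^\prime(a_{i+\mydtoz{d}}\cdots a_k)$, use $w^\prime\neq\bot$ to conclude $w=(a_i\cdots a_{i+\mydtoz{d}-1})\,w^\prime$, and observe that the same transition instance yields $C(w)\myid{}C^\prime(w^\prime)$ since a step depends on the input only through the scanned symbol. The paper compresses this into two sentences without your case split on $\mydtoz{d}$, but the content is identical.
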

	
    \begin{proof} By the definition of $\mathord{\myid{}'}$, there is $(C(a_i \cdots a_k), C^\prime(a_{i+\overline{d}} \cdots a_k)) \in \mathord{\myid{}}$ such that $w^\prime = w / (a_i \cdots a_{i+\overline{d}-1})$. By the definition of $\mathord{\myid{}}$, $C(w) = C(a_i \cdots a_{i+\overline{d}-1} w^\prime) \myid{} C^\prime(w^\prime)$ holds.
	\end{proof}
    \begin{definition2} Given $C, C^\prime \in I_\bot$, we write $C \myonlymove{(n)} C^\prime$ if $C \myid{(n)}^\prime C^\prime$ and $\forall j,C''. C \myid{(j)}' C'' \Longrightarrow j = n  \,\land\, C'' = C'$. We often omit the subscript $(n)$ and simply write $C \myonlymove{} C^\prime$. Note that $C \myonlymove{} C^\prime$ implies not only $C \myid{}^\prime C^\prime$ but also determinism:
    $\forall C^{\prime\prime} \in I_\bot .\, C \myid{}^\prime C^{\prime\prime} \Longrightarrow C^\prime = C^{\prime\prime}$.
	\end{definition2}

	\begin{lemma} \label{lem:3sup}
	Suppose that $\gamma \in (\Sigma \mydisjointu B_k \mydisjointu \slice{k}) ^ \ast$, $i \in \slice{k}$, $w \in \Sigma^\ast$, $\beta \in (\Gamma \mydisjointu \syuugou{\mycent,\mydollar})^\ast$ and $p \in Q_N \mydisjointu \syuugou[\symexec_i]{i \in \slice{k}}$. Let $m = \cnt{(\gamma i)}\,(\geq 1)$. If $\gamma i$ is matching, 
	\[
		(\symcall_i, w, \# Z_0 \gamma \mycent p \myspl \! \mydollar i \beta \mydollar) \myonlymove{} \cdots \myonlymove{} (\symret_i, w /g((\gamma i)_{[m]}), \# Z_0 \gamma \mycent p \myspl\! \mydollar i \beta \mydollar)
	\]
        holds, where no ID with a state in $Q_N$ appears in the calculation $\cdots$.
	\end{lemma}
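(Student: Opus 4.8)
The plan is to prove this by strong induction on $m = \cnt{(\gamma i)}$, tracing the computation of $A_\alpha$ step by step from the given configuration. Since $\gamma i$ is matching and its $m$-th number is the trailing $i$, the matching condition (instantiated at $r = m$) forces the left-nearest $\lbrack_i$ in $\gamma$, if it exists, to be closed by an $\rbrack_i$ further right in $\gamma$, with no $i$-brackets in between; this bracketed region is exactly the one $\deref_k$ scans when it replaces the $m$-th number, so the region's content, with each nested number $n_r$ replaced by $(\gamma i)_{[r]}$, equals $(\gamma i)_{[m]}$. The computation splits into three phases, analyzed by inspection of rules (5)--(17): in the \emph{call} phase the pointer moves left from $p$, over the substack bottom $\mycent$ and over symbols of $\gamma$ via (5) and (6), until it reaches the left-nearest $\lbrack_i$ (then (8) enters $\symexec_i$) or $Z_0$ (then (7) enters $\symret_i$ directly, and $(\gamma i)_{[m]} = \varepsilon$); in the \emph{execute} phase the pointer scans the region rightwards, matching each $\Sigma$-letter against one input letter via (9) (setting the input to $\bot$ if it does not match), skipping nested brackets $\lbrack_j, \rbrack_j$ ($j \neq i$) via (10) and (11), recursing on each nested number via (12), and leaving for $\symret_i$ on reading $\rbrack_i$ via (11); in the \emph{return} phase the pointer climbs right over the rest of $\gamma$ and over $\mycent$ via (13) and halts at $p$, since the only rule with state $\symret_i$ that fires on $p\mydollar$ is (14), which is outside the claimed chain.

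At each step I would verify that there is a unique $\myid{}^\prime$-successor, so that $\myonlymove{}$ genuinely holds: this is routine because the rule set is locally deterministic (for any state, input letter, and read stack symbol or stack top, at most one rule fires). The side condition that no ID with a state in $Q_N$ occurs is automatic, since every state reachable strictly between $\symcall_i$ and $\symret_i$ --- $\symcall_j, \symexec_j, \symret_j, E_{\cdot,\cdot}, L_{\cdot,\cdot}$ and, recursively, the same --- lies outside $Q_N$.

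The recursion drives the execute phase. When $\symexec_i$ reads a nested number $j$ (with $j \neq i$ by the syntax), rule (12) inserts the substack $\mycent \symexec_i \mydollar$ just below $j$ and moves to $\symcall_j$, producing $(\symcall_j, w^\prime, \# Z_0 \gamma_1 \mycent \symexec_i \myspl\! \mydollar j \beta^\prime \mydollar)$, where $\gamma = \gamma_1 j \gamma_2$ and $\beta^\prime = \gamma_2 \mycent p \mydollar i \beta$. By Lemma~\ref{lem:prefix} the prefix $\gamma_1 j$ of $\gamma i$ is matching and $(\gamma_1 j)_{[r]} = (\gamma i)_{[r]}$ for all $r \leq \cnt{(\gamma_1 j)}$, and moreover $\cnt{(\gamma_1 j)} \leq \cnt{(\gamma)} = m - 1 < m$ because $\gamma_1 j$ is a prefix of $\gamma$ and $\gamma i$ has exactly one more number than $\gamma$. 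So the induction hypothesis applies: it returns control at $\symret_j$ with the stack unchanged and $g((\gamma_1 j)_{[\cnt{(\gamma_1 j)}]}) = g((\gamma i)_{[r]})$ consumed from the input. Then rules (13)--(15) pop $\symexec_i$, destroy the emptied substack, and (16) (the case where the stored marker is an $\symexec$-state) moves the pointer past $j$, so $\symexec_i$ resumes scanning with the stack content unchanged.

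Summing the contributions over the execute phase by a sub-induction on the length of the still-unscanned portion of the region --- a $\Sigma$-letter contributes itself, a bracket contributes $\varepsilon$, a nested number $n_r$ contributes $g((\gamma i)_{[r]})$ by the above --- and using Lemma~\ref{lem:derefloop} to identify the string $v^{(m-1)}$ on which $\deref_k$ performs the $m$-th replacement, the total input consumed from $\symcall_i$ to $\symret_i$ is $g((\gamma i)_{[m]})$, which yields the stated final configuration (with the input being $\bot$ exactly when $w$ does not end with $g((\gamma i)_{[m]})$). I expect the main obstacle to be precisely this last identification: matching the region that $\symexec_i$ reads off the stack with the region $\deref_k$ scans on the partially dereferenced $v^{(m-1)}$, which needs the observation that the surviving brackets --- hence the left-nearest $\lbrack_i$ and its closing $\rbrack_i$ --- are unchanged by the earlier replacements (each $g((\gamma i)_{[s]})$ is bracket-free), so that the automaton's on-the-fly dereferencing reproduces $\deref_k$'s replacement letter for letter. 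In the base case $m = 1$ there are no nested numbers, $\gamma \in (\Sigma \mydisjointu B_k)^\ast$, and the argument reduces to the call and execute phases without recursion.
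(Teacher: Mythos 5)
Your proposal is correct and follows essentially the same route as the paper's proof: strong induction on $m=\cnt{(\gamma i)}$, a call/execute/return phase analysis with the no-$\lbrack_i$ case handled via rule (7), recursion at each nested number via rule (12) justified by Lemma~\ref{lem:prefix} (prefixes are matching and preserve the $[r]$-values), and the final identification of the consumed input with $g((\gamma i)_{[m]})$ via Lemma~\ref{lem:derefloop}, using that the substituted strings are bracket-free. The "main obstacle" you flag is exactly the step the paper resolves with its displayed computation of $(\gamma i)^{(m)}$.
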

	 
	\begin{proof} In this proof, we sometimes write the stack representation $\# \cdots Z \myspl\!\cdots\mydollar$ as $\# \cdots \myspr\!Z \cdots\mydollar$ with the head-reversed arrow $\mathord{\myspr}$. First, if $\gamma \niton \lbrack_i$, it holds that
		\begin{align*}
			(\symcall_i, w, \# Z_0 \gamma \mycent p \myspl\! \mydollar i \beta \mydollar) &\myonlymove{} ^ \ast (\symcall_i, w, \# Z_0 \myspl\! \gamma \mycent p \mydollar i \beta\mydollar) \\
			&\myonlymove{}\phantom{^\ast} (\symret_i, w, \#Z_0\!\myspr \gamma \mycent p \mydollar i \beta\mydollar) \myonlymove{} ^ \ast (\symret_i, w, \# Z_0 \gamma \mycent p \myspl\! \mydollar i \beta \mydollar),
		\end{align*}
		and by $(\gamma i) _ {[m]} = \varepsilon$, we have $w = w /g((\gamma i)_{[m]})$, as required.
		Henceforth, we assume that $\gamma \ni \lbrack_i$ and the decomposition $\gamma = \gamma_0 \lbrack_i \gamma_1$ ($\gamma_1 \niton \lbrack_i$). Moreover, we can further decompose $\gamma _ 1 = \gamma _ 2 \rbrack_i \gamma_3$ ($\gamma_2 \niton \lbrack_i, \rbrack_i$, $\gamma_3 \niton \lbrack_i$) because $\gamma i$ is matching. We prove by induction on $m$.

        Case $m = 1$: By $\cnt{\gamma} = 0$, $\gamma _ 2 \in (\Sigma \mydisjointu B_k) ^ \ast$ follows. Letting $w^\prime \triangleq w / g(\gamma_2)$, we have
				\begin{align*}
					& (\symcall_i,w, \#Z_0 \gamma_0 \lbrack_i \gamma_1 \mycent p \myspl\! \mydollar i \beta\mydollar) \myonlymove{}^\ast (\symcall_i, w, \# Z_0 \gamma_0 \lbrack_i\,\, \myspl\! \gamma_1 \mycent p \mydollar i \beta\mydollar) \myonlymove{} (\symexec_i, w, \# Z_0 \gamma_0 \lbrack_i \myspr\!\gamma_1 \mycent p \mydollar i \beta\mydollar) \\
					& \myonlymove{}^{\ast} (\symexec_i, w^\prime, \# Z_0 \gamma_0 \lbrack_i \gamma_2 \rbrack_i \myspl\! \gamma_3 \mycent p \mydollar i \beta\mydollar) \myonlymove{} (\symret_i, w^\prime, \# Z_0 \gamma_0 \lbrack_i \gamma_2 \rbrack_i \!\myspr\!\gamma_3 \mycent p \mydollar i \beta\mydollar)\\
					& \myonlymove{}^{\ast} (\symret_i, w^\prime, \# Z_0 \gamma_0 \lbrack_i \gamma_2 \rbrack_i \gamma_3 \mycent p \myspl\! \mydollar i \beta\mydollar).
				\end{align*}
			Therefore, the claim holds since no ID with a state in $Q_N$ appears in this calculation and $\gamma_2 = (\gamma i)_{[m]}$ follows from $(\gamma i)^{(0)} = \gamma i = \gamma_0 \lbrack_i \gamma_2 \rbrack_i \gamma_3 i$, $\gamma_3 \niton \lbrack_i$.
	
        Case $\syuugou{1,\dots,m} \Longrightarrow m+1$: Let $m_0 \triangleq \cnt{\gamma_0}$ and $l \triangleq \cnt{\gamma_2}\,(\geq 0)$. Now, $m_0 + l \leq m = \cnt{\gamma}$ holds and we write $\gamma_2 = \lambda_0 n_{m_0+1} \lambda_1 \cdots n_{m_0+l} \lambda_l$. We also define $\eta _ {r} \triangleq \gamma_0 \lbrack_i \lambda_0 n_{m_0+1} \cdots \lambda_{r-1} n_{m_0+r}$ for each $r \in \syuugou{1, \dots, l}$. By $\eta _ {r}$ being a prefix of $\gamma i$ and Lemma~\ref{lem:prefix}, $\eta_r$ is matching and $(\eta_r)_{[m_0 + r]} = (\gamma i) _ {[m_0 + r]}, r \in\syuugou{1,\dots,l}$ holds.
				In particular, it follows that $n_{m_0+r} \neq i$ for every $r$ (if there is $r$ such that $n_{m_0+r} = i$, $\gamma_2 \supseteq \lambda_0 n_{m_0+1} \cdots \lambda_{r-1} \ni \mathord{\rbrack_i}$ holds but this contradicts $\gamma_2 \niton \mathord{\rbrack_i}$). Thus, letting $w_0 \triangleq w$, $w_r^\prime \triangleq w_{r-1} / g(\lambda_{r-1})$, $w_r \triangleq w_r^\prime / g((\eta_r)_{[m_0+r]})$ and $w^\prime = w_l / g(\lambda_{l})$, we have
				\begin{align*}
					& (\symcall_i, w, \# Z_0 \gamma \mycent p \myspl\! \mydollar i \beta \mydollar) \\
					&\myonlymove{} ^\ast (\symcall_i, w, \# Z_0 \gamma_0 \lbrack_i\,\, \myspl\! \lambda_0 n_{m_0+1} \lambda_1 \cdots n_{m_0+l} \lambda_l \rbrack_i \gamma_3 \mycent p \mydollar i \beta \mydollar) \\
					& \myonlymove{}\phantom{^\ast} (\symexec_i, w_0, \# Z_0 \gamma_0 \lbrack_i \myspr\! \lambda_0 n_{m_0+1} \lambda_1 \cdots n_{m_0+l} \lambda_l \rbrack_i \gamma_3 \mycent p \mydollar i \beta \mydollar) \\
					& \myonlymove{}^\ast (\symexec_i, w_1^\prime, \# Z_0 \gamma_0 \lbrack_i \lambda_0 n_{m_0+1} \myspl \lambda_1 \cdots n_{m_0+l} \lambda_l \rbrack_i \gamma_3 \mycent p \mydollar i \beta \mydollar) \\
					& \myonlymove{}\phantom{^\ast} (\symcall_{n_{m_0+1}}, w_1^\prime, \# Z_0 \gamma_0 \lbrack_i \lambda_0 \mycent r_i \myspl\! \mydollar n_{m_0+1} \lambda_1 \cdots n_{m_0+l} \lambda_l \rbrack_i \gamma_3 \mycent p \mydollar i \beta \mydollar) \\
					& \myonlymove{} ^ \ast (\symret_{n_{m_0+1}}, w_1, \# Z_0 \gamma_0 \lbrack_i \lambda_0 \mycent r_i \myspl\! \mydollar n_{m_0+1} \lambda_1 \cdots n_{m_0+l} \lambda_l \rbrack_i \gamma_3 \mycent p \mydollar i \beta \mydollar) \\
                    & \hspace{1.5in} \text{(by $\eta_1$ being matching and induction hypothesis)} \\
					& \myonlymove{}\phantom{^\ast} (E_{\symexec_i,n_{m_0+1}}, w_1, \# Z_0 \gamma_0 \lbrack_i \lambda_0 \mycent \myspl\! \mydollar n_{m_0+1} \lambda_1 \cdots n_{m_0+l} \lambda_l \rbrack_i \gamma_3 \mycent p \mydollar i \beta \mydollar) \\
					& \myonlymove{}\phantom{^\ast} (L_{\symexec_i,n_{m_0+1}}, w_1, \# Z_0 \gamma_0 \lbrack_i \lambda_0 n_{m_0+1} \myspl\! \lambda_1 \cdots n_{m_0+l} \lambda_l \rbrack_i \gamma_3 \mycent p \mydollar i \beta \mydollar) \\
					& \myonlymove{}\phantom{^\ast} (\symexec_i, w_1, \# Z_0 \gamma_0 \lbrack_i \lambda_0 n_{m_0+1} \!\myspr\! \lambda_1 \cdots n_{m_0+l} \lambda_l \rbrack_i \gamma_3 \mycent p \mydollar i \beta \mydollar) \\
					\myonlymove{}^\ast \cdots & \myonlymove{}^\ast (\symexec_i, w_{l}, \# Z_0 \gamma_0 \lbrack_i \lambda_0 n_{m_0+1} \lambda_1 \cdots n_{m_0+l} \! \myspr \!\lambda_l \rbrack_i \gamma_3 \mycent p \mydollar i \beta \mydollar) \\
                    & \hspace{1.5in} \text{(by similar calculation and induction hypothesis)} \\
					& \myonlymove{}^\ast (\symexec_i, w^\prime, \# Z_0 \gamma_0 \lbrack_i \lambda_0 n_{m_0+1} \lambda_1 \cdots n_{m_0+l} \lambda_l \rbrack_i \myspl\!\gamma_3 \mycent p \mydollar i \beta \mydollar) \\
					& \myonlymove{}\phantom{^\ast} (\symret_i, w^\prime, \# Z_0 \gamma_0 \lbrack_i \lambda_0 n_{m_0+1} \lambda_1 \cdots n_{m_0+l} \lambda_l \rbrack_i \!\myspr\! \gamma_3 \mycent p \mydollar i \beta \mydollar) \\
					& \myonlymove{}^\ast (\symret_i, w^\prime, \# Z_0 \gamma_0 \lbrack_i \lambda_0 n_{m_0+1} \lambda_1 \cdots n_{m_0+l} \lambda_l \rbrack_i \gamma_3 \mycent p \myspl\!\mydollar i \beta \mydollar),
				\end{align*}
			and
			\begin{align*}
				w^\prime &= w/g(\lambda_0)\,g((\eta_1)_{[m_0+1]})\,g(\lambda_1) \cdots g((\eta_l)_{[m_0+l]})\,g(\lambda_l) \\
				&= w / g(\lambda_0)\,g((\gamma i)_{[m_0+1]})\,g(\lambda_1) \cdots g((\gamma i)_{[m_0+l]})\,g(\lambda_l),
			\end{align*}
            where no ID with a state in $Q_N$ appears in this calculation. Here, we write
            \[
            	\gamma = \gamma_0 \lbrack_i \lambda_0 n_{m_0+1} \lambda_1 \cdots n_{m_0+l} \lambda_l \rbrack_i \gamma_3 = v_0 n_1 v_1 \cdots n_{m} v_{m}
            \]
            and decompose its substrings as 			
			\[
				v_{m_0} = \chi_0 \lbrack_i \lambda_0,\quad v_{m_0+l} = \lambda_l \rbrack_i \chi_1,\ \text{and}\ \gamma_3 = \chi_1 n_{m_0+l+1} v_{m_0+l+1} \cdots n_{m} v_{m}.
			\]
            Then, by \fullversion{Lemma~\ref{lem:derefloop}}{equation~\eqref{claim:derefloop}}, we can write $(\gamma i) ^ {(m)}$ as
{\small
\[
	v_0 \cdots \rlap{$\underbrace{\phantom{\chi_0 \lbrack_i \lambda_0 }}_{v_{m_0}}$} \chi_0 \lbrack_i \rlap{$\overbrace{\phantom{\lambda_0 \,g((\gamma i)_{[m_0+1]})\, v_{m_0+1} \cdots g((\gamma i)_{[m_0+l]})\, \lambda_l}}^{= (\gamma i) _ {[m+1]}}$} \lambda_0 \,g((\gamma i)_{[m_0+1]})\, v_{m_0+1} \cdots g((\gamma i)_{[m_0+l]})\, \rlap{$\underbrace{\phantom{\lambda_l \rbrack_i\chi_l}}_{v_{m_0+l}}$} \lambda_l \rbrack_i \overbrace{\chi_1\,g((\gamma i)_{[m_0+l+1]}) v_{m_0+l+1}  \cdots g((\gamma i)_{[m]})\, v_{m}}^{\gamma_3^\prime \triangleq} i.
\]
}
			That is, it holds that $\gamma_3 ^ \prime \triangleq \chi_1\,g((\gamma i)_{[m_0+l+1]}) v_{m_0+l+1} \cdots g((\gamma i)_{[m]})\, v_{m} \niton [_i$ by $\gamma_3 \niton \lbrack_i$, and we obtain $(\gamma i) _{[m+1]} = \lambda_0\,g((\gamma i)_{[m_0+1]})\,\lambda_1 \cdots\,g((\gamma i)_{[m_0+l]})\,\lambda_l$, as shown above. Therefore, the claim holds for $m+1$ since $w^\prime = w/g((\gamma i)_{[m+1]})$.
	\end{proof}
	\begin{proof}[Proof of Lemma~\ref{lem:3}.]
		For arbitrary $w \in \Sigma^\ast$, by Lemma~\ref{lem:3sup},
		\begin{align*}
 			& (q, w, \# Z_0 \gamma \myspl\! \mydollar) \myid{(3)} (W_{q^\prime}, w, \# Z_0 \gamma i \myspl\! \mydollar) \myonlymove{(4)} (\symcall_i, w, \# Z_0 \gamma \mycent q^\prime \myspl\! \mydollar i \mydollar) \\
 			& \myonlymove{} ^\ast (\symret_i, w / g((\gamma i)_{[m]}), \# Z_0 \gamma \mycent q^\prime \myspl\! \mydollar i \mydollar)\myonlymove{(14)} (E_{q^\prime,i}, w / g((\gamma i)_{[m]}), \# Z_0 \gamma \mycent \myspl\! \mydollar i \mydollar) \\
            &\myonlymove{(15)} (L_{q^\prime,i}, w / g((\gamma i)_{[m]}), \# Z_0 \gamma i \myspl\! \mydollar) \myonlymove{(17)} (q^\prime, w / g((\gamma i)_{[m]}), \# Z_0 \gamma i \myspl\! \mydollar) \tag{\fullversion{$\ast$}{$\ast\ast$}} \label{keisan}
 		\end{align*}
 		holds. Assuming (a), we can replace $\mathord{\myonlymove{}}$ in equation~\eqref{keisan} with $\mathord{\myid{}}$ by Lemma~\ref{lem:idprime2id} because $w / g((\gamma i)_{[m]}) = w^\prime \in \Sigma^\ast$ holds, and therefore, (b) follows.
        Supposing (b) conversely, we have $(q, w, \# Z_0 \gamma \myspl\! \mydollar) \myid{(3)} (W_{q^\prime}, w, \# Z_0 \gamma i \myspl\! \mydollar) \myid{}^\prime \cdots \myid{}^\prime (p, w^\prime, \# \beta \mydollar)$, where no ID with a state in $Q_N$ appears in either this calculation or \eqref{keisan} except in their leftmost and rightmost IDs. Therefore, their two calculations coincide by the determinism of $\mathord{\myonlymove{}}$. In particular, we obtain $p = q^\prime$, $w^\prime = w/ g((\gamma i)_{[m]})$ and $\beta = Z_0 \gamma i \myspl\!$ by the equality of their rightmost IDs, and thus, (a) follows because $w^\prime \in \Sigma^\ast$.
	\end{proof}
 \section{Proof of Theorem~\ref{thm:main}}
\label{app:main}
\begin{proof}[Proof of Theorem~\ref{thm:main}.]
    Taking any $w \in L(\alpha)$, we have $v = v_0 n_1 v_1 \cdots n_m v_m \in \reflang_k(\alpha) = L(N)$ (here, $m = \cnt{v}$) such that $w = \deref_k(v)$. Now, suppose that $q_0 \underset{N}{\overset{v_0}{\Longrightarrow}} q_{(0)} \underset{N}{\overset{n_1 v_1}{\Longrightarrow}} q_{(1)} \underset{N}{\overset{n_2 v_2}{\Longrightarrow}} \cdots \underset{N}{\overset{n_m v_m}{\Longrightarrow}} q_{(m)} \in F$. Letting $y_r \triangleq v_0 n_1 v_1 \cdots n_r v_r$ and $z_r \triangleq g(v_0)\,g(v_{[1]})\,g(v_1) \cdots g(v_{[r]})\,g(v_r)$ for each $r \in \syuugou{0,1,\dots,m}$, we show by induction on $r$ that for any $w^\prime \in \Sigma ^ \ast$, $(q_0, z_rw ^ \prime, \# Z_0 \myspl\! \mydollar) \myid{} ^ \ast (q_{(r)}, w^\prime, \# Z_0 y_r \myspl\! \mydollar)$ holds.
	
            Case $r = 0$: It holds that $q_0 \underset{N}{\overset{v_0}{\Longrightarrow}} q_{(0)}$ and $y_0 = v_0 \in (\Sigma \mydisjointu B_k) ^ \ast$. Letting $\gamma = \varepsilon$ in Lemma~\ref{lem:12}, we obtain $(q_0, g(v_0)w ^ \prime, \# Z_0 \myspl\! \mydollar) \myid{} ^ \ast (q_{(0)}, w^\prime, \# Z_0 y_0 \myspl\! \mydollar)$, as required.

            Case $r \Rightarrow r+1$: By the induction hypothesis, it holds that $(q_0, z_{r+1}w ^ \prime, \# Z_0 \myspl\! \mydollar) \myid{} ^ \ast (q_{(r)}, g(v_{[r+1]})\,g(v_{r+1})\,w^\prime, \# Z_0 y_r \myspl\! \mydollar)$. By Lemma~\ref{lem:refw}, $v$ is matching. Hence, $v$'s prefix $y_{r}n_{r+1}$ is also matching by Lemma~\ref{lem:prefix}. Let $q_{(r)} \underset{N}{\overset{n_{r+1}v_{r+1}}{\Longrightarrow}} q_{(r+1)}$ be $q_{(r)} \underset{N}{\overset{n_{r+1}}{\longrightarrow}} q^\prime \underset{N}{\overset{v_{r+1}}{\Longrightarrow}} q_{(r+1)}$. Because $\cnt{(y_{r}n_{r+1})} = r+1$, the following calculation holds by Lemma~\ref{lem:3} ${\rm (a)} \Longrightarrow {\rm (b)}$:
			\begin{align*}
				& (q_{(r)}, g((y_r n_{r+1})_{[r+1]})\,g(v_{r+1})\,w^\prime, \# Z_0 y_r \myspl\! \mydollar) \\
				& \myid{(3)} (W_{q^\prime}, g((y_r n_{r+1})_{[r+1]})\,g(v_{r+1})\,w^\prime, \# Z_0 y_r n_{r+1} \myspl\! \mydollar) \\
				& \vdash ^ \ast (q^\prime, g(v_{r+1})\,w^\prime, \# Z_0 y_r n_{r+1} \myspl\! \mydollar) \myid{} ^ \ast (q_{(r+1)}, w^\prime, \# Z_0 y_r n_{r+1} v_{r+1} \myspl\! \mydollar).
			\end{align*}
            By Lemma~\ref{lem:prefix}, it holds that $(y_r n_{r+1})_{[r+1]} = v_{[r+1]}$ and $y_r n_{r+1} v_{r+1} = y_{r+1}$, as required.
        In particular, letting $r = m$ and $w ^ \prime = \varepsilon$ in the claim, we obtain $(q_0, w, \# Z_0 \myspl\! \mydollar) \myid{} ^ \ast (q_{(m)}, \varepsilon, \#Z_0 y_m \myspl\! \mydollar)$ because $w = \deref_k(v) = z_m$ by Lemma~\ref{lem:deref}. Therefore, $w \in L(A_\alpha)$ holds since $q_{(m)} \in F$.

		Conversely, take any $w \in L(A_\alpha)$. There exist $m$ and an ID $C_{(r)} = (p_r, w_r, \# \beta _ {r} \mydollar)$ for each $r \in \syuugou{1,\dots,m}$ such that $C_{(1)} = (q_0,w, \# Z_0 \myspl\! \mydollar) \vdash \cdots \vdash C_{(r)} \vdash \cdots \vdash C_{(m)} = (p_m, \varepsilon, \# \beta_m \mydollar)$, where $p_m \in F$. We show by induction on $r$ that for each $r = 1 ,\dots, m$, the following claim holds: 
        if $p_{r} \in Q_N$, there is $\gamma_r$ such that $C_{(r)} = (p_r, w_r, \# Z_0 \gamma_{r} \myspl\! \mydollar)$, (a) $\gamma_r \in (\Sigma \mydisjointu B_k \mydisjointu \slice{k}) ^ \ast$, (b) $w = \deref_k(\gamma_r)\,w_r$ and (c) $q_0 \underset{N}{\overset{\gamma_r}{\Longrightarrow}} p_r$.
			
			Case $r=1$: It holds that $p_{1} = q_0 \in Q_N$ and $C_{(1)} = (q_0, w, \# Z_0 \myspl\! \mydollar)$. Letting $p_1 = q_0$, $w_1 = w$ and $\gamma_1 = \varepsilon$, we have (a) $\gamma_1 = \varepsilon \in (\Sigma \mydisjointu B_k \mydisjointu \slice{k}) ^ \ast$, (b) $w = \deref_k(\gamma_1)\,w_1$ and (c) $q_0 \underset{N}{\overset{\gamma_1}{\Longrightarrow}} p_1 = q_0$, as required.	
			
            Case $\syuugou{1,\dots,r} \Rightarrow r+1$: Suppose that $p_{r+1} \in Q_N$. We can define $j$ as the maximum of the set \syuugou[1 \leq j \leq r]{p_j \in Q_N} since $p_1 \in Q_N$. Rules that can be applied to $C_{(j)}$ are limited to (1), (2) and (3) because $p_j \in Q_N$. 
			
                    In the case of (1), $j = r$ holds because $p_{j+1} \in Q_N$. By $p_r \in Q_N$ and the induction hypothesis, we have $C_{(r)} = (p_r, w_r, \# Z_0 \gamma_r \myspl\! \mydollar)$, and $p_r$, $w_r$ and $\gamma_r$ satisfy conditions (a), (b) and (c). Hence, by Lemma~\ref{lem:12}\,(a)$_2$, there is $a \in \Sigma$ such that $p_{r} \myto{a}{N} p_{r+1}$, $w_{r} = a w_{r+1}$ and $\beta_{r+1} = Z_0 \gamma_r a \myspl\!$. 
					Now, we let $\gamma_{r+1} = \gamma_r a$. Since $\beta _ {r+1} = Z_0 \gamma _ {r+1}\!\myspl$\,,
					(a) $\gamma_{r+1} = \gamma_r a \in (\Sigma \mydisjointu B_k \mydisjointu \slice{k}) ^ \ast$ because $\gamma _ r \in (\Sigma \mydisjointu B_k \mydisjointu \slice{k}) ^ \ast$, (b) $\deref_k(\gamma_{r+1}) w_{r+1} = \deref_k(\gamma_r)a w_{r+1} = \deref_k(\gamma_r) w_r = w$, and (c) $q_0 \mytto{\gamma_{r+1}}{N} p_{r+1}$ because $q_0 \mytto{\gamma_r}{N} p_r \myto{a}{N} p_{r+1}$.
					The case of (2) follows similarly as the case of (1) with Lemma~\ref{lem:12}\,(b)$_2$.		
                    In the case of (3), there is $q^\prime \in Q_N$ such that $p_j \myto{i}{N} q^\prime$. By $p_j \in Q_N$ and the induction hypothesis, we have $C_{(j)} = (p_j, w_j, \# Z_0 \gamma_j \myspl\! \mydollar)$, and $p_j$, $w_j$ and $\gamma_j$ satisfy the conditions (a), (b) and (c).
                    Because $q_0 \mytto{\gamma_j}{N} p_j \myto{i}{N} q^\prime$, $\gamma_j i$ is matching by Corollary~\ref{cor:extend}. Besides, it holds that $p_{r+1} \in Q_N$ and $(p_j, w_j, \# Z_0 \gamma_j \myspl\! \mydollar) \vdash_{(3)} (W_{q^\prime}, w_j, \#Z_0 \gamma_j i \myspl\!\mydollar) \myid{} \cdots \myid{} C_{(r+1)}$, where no ID with a state in $Q_N$ appears in the calculation $\cdots$. Hence, by Lemma~\ref{lem:3} ${\rm (b)} \Longrightarrow {\rm (a)}$, we have $p_{r+1} = q^\prime$, $w_j = g((\gamma_j i)_{[m]}) w_{r+1}$, and $\beta_{r+1} = Z_0 \gamma_{j} i \myspl\!$.
					Now, we let $\gamma_{r+1} = \gamma_j i$. Since $\beta_{r+1} = Z_0 \gamma_{r+1}\myspl\!$, 
					(a) $\gamma_{r+1} = \gamma_r i \in (\Sigma \mydisjointu B_k \mydisjointu \slice{k}) ^ \ast$ because $\gamma _ r \in (\Sigma \mydisjointu B_k \mydisjointu \slice{k}) ^ \ast$, (b) $\deref_k(\gamma_{r+1}) w_{r+1} = \deref_k(\gamma_j)\,g((\gamma_j i)_{[m]})\,w_{r+1} = \deref_k(\gamma_j) w_j = w$, and (c) $q_0 \mytto{\gamma_{r+1}}{N} p_{r+1}$ because $q_0 \mytto{\gamma_j}{N} p_j \myto{i}{N} p_{r+1}$.
				
				Therefore, the claim holds for the case of $r+1$.
        In particular, letting $r = m$ in the claim, we have $C_{(m)} = (p_m, w_m, \#Z_0 \gamma_m\!\myspl\!\!\mydollar)$, and $p_m$, $w_m$ and $\gamma_m$ satisfy (a), (b) and (c) (note that $p_m \in F \subseteq Q_N$). Because $w_m = \varepsilon$, it holds that $w = \deref_k(\gamma_m)$ and that $q_0 \mytto{\gamma_m}{N} p_m \in F$, or $\gamma_m \in L(N) = \reflang_k(\alpha)$. Therefore, we have $w \in \deref_k(\reflang_k(\alpha)) = L(\alpha)$.
\end{proof}
 \section{Supplement to Corollary~\ref{cor:refcap}}
\label{app:anbnNESA}
We show that the language $T \triangleq \syuugou[a^n b^n]{n \in \mynat}$ is described by the NESA $A$ given in Figure~\ref{fig:anbnNESA}.
			\begin{figure}[htb]
				\centering
				\begin{tikzpicture}[
						shorten >=1pt,
						node distance=2.25cm,
						on grid,
						>={Stealth[round]},
						auto
					]
					\tikzset{el/.style = {inner sep=2pt, align=left, sloped}}
					\tikzset{every node/.style={font=\small}}
					\tikzset{every loop/.style={min distance=6mm,looseness=8}}
					\tikzset{every state/.style={inner sep=.1mm, minimum size=5mm}}
					\tikzset{initial text=}
					\tikzset{every initial by arrow/.style={->}}
					\tikzset{bend angle=30}
			
					\node[state,initial] (0) at (0,-3) {$q_0$};
					\node[state] (1) at (2,-3) {$q_1$};
					\node[state,accepting] (2) at (4,-3) {$q_2$};
					
					\path[->] (0) edge node {\footnotesize
									$\varepsilon/\mystop$
								} (1);
					
					\path[->] (0) edge [in=60,out=120,loop above] node {\footnotesize
									$a/ \figto{\mydollar}{\star\mydollar}$
								} ();
					\path[->] (1) edge [in=60,out=120,loop above] node {\Large
									$\substack{
										b/ \star,\,\myleft \\
										b/ \star\mydollar,\,\myleft
									}$
								} ();
					\path[->] (1) edge node {\footnotesize
									$\varepsilon/Z_0,\mystop$
								} (2);
				\end{tikzpicture}
                \caption{An NESA that recognizes $T = \syuugou[a^n b^n]{n \in \mynat}$}
                \label{fig:anbnNESA}
			\end{figure}
The set of stack symbols $\Gamma$ is $\syuugou{Z_0,\star}$, where $\star$ is a distinguished character. Intuitively, $A$ first pushes $\star$ while consuming the input $a$ to count the occurrences of $a$ and leaves $q_0$ for $q_1$ nondeterministically. At $q_1$, $A$ moves its stack pointer leftward while consuming the input $b$, and leaves there for the accepting state $q_2$ if and only if the stack pointer reaches the bottom of the stack. Finally, $A$ halts at $q_2$. Here is the proof of $L(A) = T$:
\begin{proof} 
	The inclusion $T \subseteq L(A)$ follows from the following calculation for any $n \in \mynat$:
	\begin{align*}
		(q_0,a^n b^n, \#Z_0\myspl\!\mydollar) &\myid{}^n (q_0,b^n,\#Z_0\,\star^n\myspl\!\mydollar) \\
		&\myid{}\phantom{^n} (q_1,b^n,\#Z_0\,\star^n\myspl\!\mydollar) \myid{}^n (q_1,\varepsilon,\#Z_0\myspl\star^n\mydollar) \myid{} (q_2,\varepsilon,\#Z_0\myspl\star^n\mydollar).
	\end{align*}
	Conversely, take any $w \in L(A)$. By the construction of $A$, we can assume that $w = a^n b^m$ (where $n,m \in \mynat$) and its accepting calculation is in the following form for some $l \in \mynat$:
	\begin{align*}
		(q_0,a^n b^m, \#Z_0\myspl\!\mydollar) &\myid{}^l (q_0,a^{n-l} b^m, \#Z_0\,\star^l\myspl\!\mydollar) \\
		&\myid{}\phantom{^l} (q_1,a^{n-l} b^m, \#Z_0\,\star^l\myspl\!\mydollar) \myid{}^* (q_1, \varepsilon, \#Z_0 \myspl \star^l \mydollar) \myid{} (q_2,\varepsilon,\#Z_0\myspl\star^l\mydollar).
	\end{align*}
	In this calculation, the steps $\myid{}^*$ indicate $n = l = m$.
\end{proof}
 \section{Calculation for the rewb of Theorem~\ref{thm:antisl}}
\label{app:calc}
Letting $k=4$, we calculate the language $L(\alpha)$ described by the $k$-rewb
\[
	\alpha = ((_1 \bs 4\,a )_1\,(_2 \bs 3 )_2\,(_3 \bs 2\,a)_3\,(_4 \bs 1 \bs 3 )_4)^\ast.
\]
			It is easy to see that
	\[	
		\reflang_k(\alpha) = \syuugou[ (\lbrack_1 4\,a\rbrack_1\,\lbrack_2 3 \rbrack_2\, \lbrack_3 2\,a\rbrack_3\,\lbrack_41\,3\rbrack_4)^n ]{n \in \mynat}.
	\]
		Let $b_n$ denote $(\lbrack_1 4\,a\rbrack_1\,\lbrack_2 3 \rbrack_2\, \lbrack_3 2\,a\rbrack_3\,\lbrack_41\,3\rbrack_4)^n$ (therefore $L(\alpha) = \syuugou[\deref_k(b_n)]{n \in \mynat}$). We show by induction that $\deref_k(b_n) = \deref_k(c_1 \cdots c_n)$ where $c_n \triangleq [_1 a^{n(n+1)/2} ]_1\,[_2 a^{n-1} ]_2\,[_3 a^n ]_3\,[_4 a^{n(n+3)/2}]_4$ for every $n \geq 1$. First, when $n = 1$,
		\begin{align*}
			\deref_k(b_1) &= \deref_k(\lbrack_1 4\,a\rbrack_1\,\lbrack_2 3 \rbrack_2\, \lbrack_3 2\,a\rbrack_3\,\lbrack_41\,3\rbrack_4) = \deref_k(\lbrack_1 a\rbrack_1\,\lbrack_2 3 \rbrack_2\, \lbrack_3 2\,a\rbrack_3\,\lbrack_41\,3\rbrack_4) \\
			&= \deref_k(\lbrack_1 a\rbrack_1\, \underline{\lbrack_2 \rbrack_2}\, \lbrack_3 2\,a\rbrack_3\,\lbrack_41\,3\rbrack_4) = \deref_k(\underline{\lbrack_1 a\rbrack_1}\,\lbrack_2 \rbrack_2\, \lbrack_3 a\rbrack_3\,\lbrack_41\,3\rbrack_4) \\
			&= \deref_k(\lbrack_1 a\rbrack_1\,\lbrack_2 \rbrack_2\, \underline{\lbrack_3 a\rbrack_3}\,\lbrack_4 a \,3\rbrack_4) = \deref_k(\lbrack_1 a\rbrack_1\,\lbrack_2 \rbrack_2\, \lbrack_3 a\rbrack_3\,\lbrack_4 a \,a\rbrack_4) = \deref_k(c_1).
		\end{align*}
		Next, in the case of $n+1$,
		\begin{align*}
					\deref_k(b_{n+1}) &= \deref_k(b_n\,\lbrack_1 4\,a\rbrack_1\,\lbrack_2 3 \rbrack_2\, \lbrack_3 2\,a\rbrack_3\,\lbrack_41\,3\rbrack_4) \\
					& = \deref_k(c_1 \cdots c_{n}\lbrack_1 4\,a\rbrack_1\,\lbrack_2 3 \rbrack_2\, \lbrack_3 2\,a\rbrack_3\,\lbrack_41\,3\rbrack_4) \\
					&= \deref_k(c_1\cdots c_n \lbrack_1 a^{n(n+3)/2}\,a\rbrack_1\,\lbrack_2 a^n \rbrack_2\, \lbrack_3 a^n\,a\rbrack_3\,\lbrack_4a^{n(n+3)/2}\,a\,a^n\,a\rbrack_4) \\
					&= \deref_k(c_1\cdots c_n \lbrack_1 a^{(n+1)(n+2)/2}\rbrack_1\,\lbrack_2 a^n \rbrack_2\, \lbrack_3 a^{n+1}\rbrack_3\,\lbrack_4a^{(n+1)(n+4)/2}\rbrack_4)\\
					&= \deref_k(c_1 \cdots c_n c_{n+1}).
		\end{align*}
	\[
		\therefore \zettaiti{\deref_k(b_n)} = \sum_{i=1}^{n}{\zettaiti{g(c_i)}} = \sum_{i=1}^{n}{(i^2 + 4i-1)} = \dfrac{n(n+7)(2n+1)}{6}.\quad (\text{also for }n=0)
	\]
	Therefore, we have $L(\alpha) = \syuugou[a^{n(n+7)(2n+1)/6}]{n \in \mynat}$.
 
\end{document}